\documentclass[11pt]{article}
\usepackage[T2A]{fontenc}
\usepackage{ alphalph,etoolbox }
\usepackage{amsthm, amsmath, amssymb, amsbsy, amscd, amsfonts, latexsym, euscript,epsf, bm}
\newtheorem{thm}{Theorem}[section]
\usepackage{bbold, bm, stackrel}
\usepackage{epic,eepic}
\usepackage{texdraw}
\usepackage[dvips]{color}
\usepackage{color}
\usepackage{ dsfont }
\usepackage{graphicx}
\usepackage{exscale,relsize}
\usepackage{authblk}
\usepackage{url}
\usepackage{hyperref}
\usepackage{enumerate}
\usepackage{graphicx}
\newtheorem{proposition}[thm]{Proposition}
\newtheorem{corollary}[thm]{Corollary}

\newtheorem{theorem}[thm]{Theorem}

\theoremstyle{definition}

\newtheorem{remark}[thm]{Remark}

\DeclareMathOperator{\rank}{rank}

\DeclareMathOperator{\sech}{sech}
\DeclareMathOperator{\diag}{diag}

\title{Noncommutative NLS systems: Darboux--B\"acklund transformations and integrable discretisations}
\date{}

\author{S. Konstantinou-Rizos\thanks{skonstantin84@gmail.com}}
\affil{Centre of Integrable Systems, P.G. Demidov Yaroslavl State University, Yaroslavl, Russia}

\patchcmd{\subequations}{\alph{equation}}{\alphalph{\value{equation}}}{}{}

\begin{document}

\maketitle

\begin{abstract}
We study noncommutative analogues and integrable discretisations of nonlinear Schr\"odinger (NLS)-type systems associated with reduction groups. In particular, we consider the Ablowitz--Kaup--Newell--Segur (AKNS) system, the Kaup--Newell derivative NLS system, and the Mikhailov--Shabat--Yamilov deformation of the derivative NLS system together with their Darboux--Bäcklund transformations and associated lattice equations.

We derive the continuum limits of previously constructed integrable lattice systems and recover the corresponding NLS-type partial differential equations. We then construct a noncommutative deformation of the Mikhailov--Shabat--Yamilov system and show that, unlike the AKNS and Kaup--Newell cases, its Lax representation requires the introduction of nonlocal variables.

Furthermore, we derive Darboux--B\"acklund transformations and integrable discretisations for the noncommutative derivative NLS and deformation derivative NLS systems in the form of vertex--bond lattice equations. We also construct explicit solutions for a six-point derivative NLS-type lattice equation and for its noncommutative analogue.
\end{abstract}

\bigskip


\begin{quotation}
\noindent{\bf PACS numbers:}
02.30.Ik, 02.90.+p, 03.65.Fd.
\end{quotation}

\begin{quotation}
\noindent{\bf Mathematics Subject Classification 2020:}
35Q55, 16T25.
\end{quotation}

\begin{quotation}
\noindent{\bf Keywords:} Noncommutative Darboux transformations, noncommutative B\"acklund transformations, NLS-type integrable PDEs, NLS-type integrable lattice equations.
\end{quotation}

\section{Introduction}\label{intro}
Darboux and B\"acklund transformations play an important role in theory of integrable systems in mathematical physics. They constitute mechanisms for constructing solutions to partial differential equations (PDEs) starting from trivial ones. They can also be employed to linearise nonlinear PDEs and serve as generators of Yang--Baxter and Zamolodchikov tetrahedron maps \cite{BIKRP, Sokor-Sasha, Sokor-Pap, Sokor-2020-2, MPW}. They are equally useful for the construction of soliton solutions for partial difference (lattice) equations \cite{Fisenko-Sokor}.

Undoubtedly, the nonlinear Schr\"odinger (NLS) type systems are among the most important nonlinear systems of PDEs, playing a fundamental role in mathematical physics, with applications ranging from nonlinear optics and gravity waves to solitons and the theory of integrable systems. The most famous NLS-type systems are probably the Ablowitz--Kaup--Newell--Segur nonlinear Schr\"odinger system \cite{AKNS}, the Kaup--Newell derivative NLS system (DNLS) \cite{Kaup-Newell}, and  the Mikhailov--Shabat--Yamilov deformation of the DNLS system \cite{MSY}. These three NLS-type equations are associated with some classification results of $\mathfrak{sl}(2,\mathbb{C})$-based automorphic Lie algebras corresponding to finite reduction groups \cite{Sasha-Rhys}.

On the other hand, the study of noncommutative integrable PDEs and their associated integrable lattice discretisations is an important and fast-growing field of mathematical physics (we indicatively refer to some recent results \cite{Adler-P, Bobrova, Novikov-Wang,  Xenitidis-NC}). Moreover, noncommutative versions of NLS-type equations have been studied in the literature since the late 90s. We refer, in particular, to a classical result by Olver and Sokolov \cite{OS}, the study of symmetries and conservation laws for nonabelian evolution equations \cite{Adler-Sokolov}, some recent results on noncommutative discretisations of NLS systems \cite{KRX} and the study of Darboux transformations and semi-discretisations of DNLS-type equations \cite{Peroni-Wang}. Other famous results on noncommutative NLS-type systems and their solutions include \cite{Dimakis-Hoissen, Fordy-Kulish, GGI}; also see the books \cite{Kupershmidt, Sokolov} and the references therein. 

The paper aims to extend the results of papers \cite{SPS, KRX}. In particular, in \cite{SPS} we studied the Darboux--B\"acklund transformations and the discretisations of the AKNS, the derivative NLS, and the deformation of the derivative NLS systems. In \cite{KRX} we derived noncommutative versions of the Darboux--B\"acklund transformations of the AKNS system that were presented in \cite{SPS}, and also its noncommutative integrable discretisations, namely the noncommutative Adler system and the noncommutative discrete Toda equation. Furthermore, in \cite{Sokor-Nikitina}, we constructed a noncommutative Darboux-B\"acklund transformation for the derivative NLS system (such transformations were also constructed in \cite{Peroni-Wang} for a family of derivative NLS  systems). Here, we derive noncommutative integrable discretisations of the DNLS and the deformation of the DNLS systems.

The main new results of this paper are as follows:
\begin{itemize}
    \item the derivation of continuum limits from NLS-type discrete lattice systems to the AKNS system, the derivative NLS system, and a  deformation  of  the derivative NLS system;
    
    \item the construction of a noncommutative version of the deformation of the derivative NLS equation;

    \item an approach for constructing nontrivial solutions to such equations;
    
    \item the construction of Darboux--B\"acklund transformations for the noncommutative deformation of the derivative NLS  equation;
    \item noncommutative integrable discretisations of the derivative NLS (Kaup--Newell) and the deformation of the derivative NLS (Mikhailov--Shabat--Yamilov) systems;
    \item explicit wave solutions for a six-point derivative NLS-type lattice equation, and a solution for its noncommutative analogue.
\end{itemize}

The remainder of the paper is organised as follows.

In the next section, we explain the notation used throughout the text and give all the basic definitions that are needed in the following sections. In particular, we explain what equations on quad-graphs are and what integrability in terms of a Lax pair means. Moreover, we explain the notions of an integrable evolution type PDE, the Darboux and B\"acklund transformations. Then, we explain how Darboux transformations can be used to discretise integrable PDEs.

In Section \ref{int-discretisations}, we recall the AKNS, the Kaup--Newell, and the Mikhailov--Shabat--Yamilov integrable systems, their Lax pairs, their Darboux transformations, and their associated discretisations, namely NLS  type integrable lattice systems. We derive the continuum limits of Adler--Yamilov, the derivative NLS-type and a deformation of the derivative NLS-type lattice systems to their continuous origins, namely the  AKNS, the Kaup--Newell, and the Mikhailov--Shabat--Yamilov systems, respectively.

In Section \ref{NC-NLS-eqs}, we recall the noncommutative versions of the AKNS and the Kaup--Newell systems \cite{OS}, and we construct a noncommutative version of the Mikhailov--Shabat--Yamilov system. We present an approach for constructing solutions to such noncommutative  systems.

In Section \ref{NCDBT}, we recall the Darboux matrices and the $x$-parts of the B\"acklund transformations for the noncommutative AKNS and Kaup--Newell systems, and derive the associated $t$-parts of the B\"acklund transformations, which were not derived in \cite{KRX, Sokor-Nikitina}. Since the noncommutative AKNS and Kaup--Newell systems are evolutionary, in principle, the systems themselves play the role of the $t$-part of the B\"acklund transformation. However, since their Darboux matrices contain auxiliary functions, the evolution equations for these auxiliary functions must also be determined. Then, we derive a Darboux and a B\"acklund transformation for the noncommutative Mikhailov--Shabat--Yamilov system.

Section \ref{NCDISNLS} deals with the integrable discretisations of the noncommutative NLS-type systems discussed in section \ref{NC-NLS-eqs}. We first recall the noncommutative discretisations of the AKNS system, and then we construct noncommutative integrable discretisations of the Kaup--Newell and the Mikhailov--Shabat--Yamilov systems in the form of noncommutative vertex-bond systems. For the Kaup--Newell system  we also construct a noncommutative 6-point lattice equation. We also construct wave solutions of the commutative version of this 6-point equation.

Finally, in Section \ref{conclusions}, we summarise the main results and propose several  directions for future work.

\section{Preliminaries}
\subsection{Notation}
Throughout the text:
\begin{itemize}
    \item By $\mathcal{X}$ we denote an arbitrary set, whereas by Latin italic letters (i.e. $x, y, u, v$ etc.) the elements of $\mathcal{X}$, with the exception of the `spectral parameter' which is denoted by the Greek letter $\lambda$. Moreover, by $\mathcal{X}^n$ we denote the Cartesian product of  $\mathcal{X}$ with itself $n$ times, i.e.
$\mathcal{X}^n=\underbrace{\mathcal{X}\times \mathcal{X}\times\dots\times \mathcal{X}}_{n}$.
    
    \item By $\mathfrak{R}$ we denote a noncommutative division ring, and its elements are denoted by bold italic Latin letters (i.e. $\bm{x}, \bm{y}, \bm{u}$ etc.). That is, $\mathfrak{R}$ is a ring with multiplicative identity $1$ where commutativity with respect to multiplication is not assumed, and every nonzero element $\bm{x}$ has an inverse $\bm{x}^{-1}$, i.e. $\bm{x}\bm{x}^{-1}=\bm{x}^{-1}\bm{x}=1$. The centre of a division ring will be denoted by $Z(\mathfrak{R})=\{a\in\mathfrak{R}:\forall\bm{x}\in\mathfrak{R},a\bm{x}=\bm{x}a\}$.
    
    \item Matrices will be denoted by uppercase Roman letters (i.e. ${\rm A}, {\rm B}, {\rm C}$) etc. Additionally, matrix operators are denoted by capital calligraphic letters (for instance, $\mathcal{L}=D_x-\rm{U}$). By $\sigma_1$ and $\sigma_3$ we denote the Pauli matrices $\sigma_1=\begin{pmatrix} 0 &  1\\ 1  & 0\end{pmatrix}$ and $\sigma_3=\begin{pmatrix} 1 & 0\\ 0 &-1 \end{pmatrix}$.
\end{itemize}

\subsection{Quad-graph integrable systems}
Let $\mathcal{X}$ be a set of variables and $S$ a set of parameters. A quad-graph equation (or system of equations) is  a polynomial equation  of the form
\begin{equation}\label{quad-graph}
    Q(f_{00},f_{10},f_{01},f_{11},a,b)=0,
\end{equation}
where $f_{ij}$, $i,j\in\{0,1\}\in\mathcal{X}$, and $a,b\in S$. The (nonlinear) function $Q$  should depend on all the $f_{ij}$, $i,j\in\{0,1\}$, and should be affine linear, namely
$$
\frac{\partial Q}{\partial f_{ij}}\neq 0,\qquad \frac{\partial^2 Q}{\partial f_{ij}^2}=0.
$$

Equation \eqref{quad-graph} is called a quad-graph equation because it admits a quad representation as in Figure \ref{QG}, where the fields $f_{ij}$, $i,j\in\{0,1\}$, lie on the vertices of the square, while $a$ and $b$ lie on the  edges.

\begin{figure}[ht]
\centertexdraw{ \setunitscale 0.5
\linewd 0.02 \arrowheadtype t:F 
\htext(0 0.5) {\phantom{T}}
\move (-1 -2) \lvec (1 -2) 
\move(-1 -2) \lvec (-1 0) \move(1 -2) \lvec (1 0) \move(-1 0) \lvec(1 0)
\move (1 -2) \fcir f:0.0 r:0.1 \move (-1 -2) \fcir f:0.0 r:0.1
 \move (-1 0) \fcir f:0.0 r:0.1 \move (1 0) \fcir f:0.0 r:0.1  
\htext (-1.3 -2.4) {$f_{00}$} \htext (1 -2.4) {$f_{10}$} \htext (-0.1 -2.3) {$a$}
\htext (-1.3 .15) {$f_{01}$} \htext (1 .15) {$f_{11}$} \htext (-0.1 .1) {$a$}
\htext (-1.4 -1) {$b$} \htext (1.1 -1) {$b$}}
\caption{{Quad-graph equation}} \label{QG}
\end{figure}

Now, equation \eqref{quad-graph} can be understood as a system of  partial difference equations (P$\Delta$E) in the following way: consider $f$ to be a function of two discrete variables $n,m\in\mathbb{Z}$, and let the subscripts in $f_{ij}$ indicate shifts  with respect to $n$ and $m$,  respectively, namely $f_{ij}=f(n+i,m+j)$,  $f_{00}\equiv f(n,m)$. Then, the square in Figure \ref{QG} can be interpreted as an elementary square of a two-dimensional lattice.
Among the most famous examples of quad-graph equations is the discrete potential KdV (dpKdV) equation 
\begin{equation}\label{dpKdV}
   (f_{00}-f_{11})(f_{01}-f_{10})=a-b.
\end{equation}

One definition for integrability of a quad-graph equation is the existence of a Lax pair. Specifically, a quad-graph equation \eqref{quad-graph} is  called integrable if it can be written as
\begin{equation}\label{d-Lax}
    {\rm L}(f_{01},f_{11},a,\lambda){\rm M}(f_{00},f_{01},b,\lambda)={\rm M}(f_{10},f_{11},b,\lambda){\rm L}(f_{00},f_{10},a,\lambda),\quad \text{for any}~~\lambda\in\mathbb{C},
\end{equation}
for some matrices ${\rm L}={\rm L}(f_{00},f_{10},a,\lambda)$  and ${\rm M}={\rm M}(f_{00},f_{01},b,\lambda)$. This is the discrete  zero-curvature condition, and it is called a Lax representation for the associated quad-graph equation \eqref{quad-graph}.

\subsection{Darboux--B\"acklund transformations and discretisation of integrable PDEs}
Let $u=u(x,t)$ be a solution of an evolution type PDE
\begin{equation}\label{evol-eq}
    u_t=f(u,u_x,u_{xx},\ldots;x,t),
\end{equation}
where indices denote  partial derivatives $u_t=\partial_t u(x,t),~u_x=\partial_xu(x,t),~u_{xx}=\partial^2_xu(x,t),...$ etc. Equation \eqref{evol-eq} is called integrable if there exist operators $(\mathcal{L},\mathcal{A})\equiv (D_x-\rm{U}(u,\lambda),D_t-\rm{V}(u,\lambda))$, such that equation \eqref{evol-eq} can be written as the \textit{zero-curvature condition}
\begin{equation}\label{zero-curvature}
    [\mathcal{L},\mathcal{A}]=0 ~~\Leftrightarrow ~~ \rm{U}_t-\rm{V}_x+[\rm{U},\rm{V}]=0.
\end{equation}
The pair of operators $(\mathcal{L},\mathcal{A})$ (or the pair of matrices $(\rm{U},\rm{V})$) is called a Lax pair for equation \eqref{evol-eq}.

A Darboux transformation is the following transformation that leaves $\mathcal{L}$ and $\mathcal{A}$ covariant, defined by
$$
\mathcal{L}(u,\lambda)\rightarrow {\rm M}\mathcal{L}(u,\lambda){\rm M}^{-1}=\mathcal{L}(\tilde{u},\lambda),\qquad \mathcal{A}(u,\lambda)\rightarrow {\rm M}\mathcal{A}(u,\lambda){\rm M}^{-1}=\mathcal{A}(\tilde{u},\lambda).
$$
Thus, the Darboux transformation maps a solution, $u(x,t)$, of \eqref{evol-eq} to another solution, $\tilde{u}(x,t)$.

The matrix $\rm{M}$ in the above definition is called a Darboux matrix and, by definition, satisfies the system of differential equations
\begin{equation}\label{M-equation}
    \rm{M}_x+{\rm M}{\rm U}-\tilde{{\rm U}}{\rm M}=0,\qquad \rm{M}_t+{\rm M}{\rm V}-\tilde{{\rm V}}{\rm M}=0,
\end{equation}
where $\tilde{\rm{U}}=\rm{U}(\tilde{u},\lambda)$ and $\tilde{\rm{V}}=\rm{V}(\tilde{u},\lambda)$. Thus, a Darboux transformation consists of a Darboux matrix
\begin{equation}\label{DM-f}
    {\rm M}={\rm M}(u,\tilde{u},f,\lambda),
\end{equation}
where $f$ is an arbitrary function, and a system of differential equations for $u$, $\tilde{u}$ and $f$, namely a B\"acklund transformation. If, additionally, matrices ${\rm U}$ and ${\rm V}$ belong to the algebra $\mathfrak{sl}_2(\mathbb{C})$, then it follows that the determinant of the Darboux matrix is constant (Abel--Liouville theorem). The constant determinant can be used to express the function $f$ in terms of $u$, $\tilde{u}$, and a parameter $a$. That is, the Darboux matrix takes the form:
\begin{equation}\label{DM-a}
    {\rm M}={\rm M}(u,\tilde{u},a,\lambda),
\end{equation}
Equivalently, $\det(\rm M)$ is a first integral of the B\"acklund transformation.

Now, once a Darboux matrix ${\rm M}(u,\tilde{u},a,\lambda)$ is constructed, a second one can be obtained by changing $a\rightarrow b$ and $\tilde{u}\rightarrow \hat{u}$, i.e. ${\rm K}={\rm M}(u,\hat{u},b,\lambda)$. Having two Darboux matrices, ${\rm M}$ and ${\rm K}$, at our disposal, we can discretise equation \eqref{evol-eq} in the following way: assume that function $u$ depends also on two discrete variables $n,m\in\mathbb{N}$, namely $u=u(x,t;n,m)$. Then, we use these variables to place various solutions $u$ of \eqref{evol-eq} on a two-dimensional lattice, by setting $u=u_{00}=u(x,t;n,m)$, $\tilde{u}=u_{10}=u(x,t;n+1,m)$, $\hat{u}=u_{01}=u(x,t;n,m+1)$, adopting the notation introduced earlier.

Now, in the new notation, the Darboux matrices take the form ${\rm M}={\rm M}(u_{00},u_{10},a,\lambda)$ and ${\rm K}={\rm M}(u_{00},u_{01},b,\lambda)$. We substitute them to the discrete zero-curvature condition \eqref{d-Lax}:
\begin{equation}\label{ZC-ab}
    {\rm M}(u_{01},u_{11},a,\lambda){\rm M}(u_{00},u_{01},b,\lambda)={\rm M}(u_{10},u_{11},b,\lambda){\rm M}(u_{00},u_{10},a,\lambda).
\end{equation}
If the above equation is equivalent to a system of polynomial equations $Q(u_{00},u_{10},u_{01},u_{11};a,b)$, for any $\lambda\in\mathbb{C}$, then the latter is integrable by construction, since it has a Lax pair $({\rm M}, {\rm K})$.

\begin{remark}\normalfont
    We can use the version \eqref{DM-a} of the Darboux matrix with the parameter and obtain a quad-graph system from \eqref{ZC-ab}. Alternatively, we may use the version \eqref{DM-f} of the Darboux matrix, before writing $f$ in terms of $u$, $\tilde{u}$ and $a$, namely use matrix ${\rm M}={\rm M}(u_{00},u_{10},f_{00})$. Then, obtain a second matrix ${\rm K}$ by changing $f_{00}\rightarrow g_{00}$ and $u_{10}\rightarrow u_{01}$ in ${\rm M}$, namely ${\rm K}={\rm M}(u_{00},u_{01},g_{00},\lambda)$, and consider the zero-curvature equation  
    $$
    {\rm M}(u_{01},u_{11},f_{01},\lambda){\rm M}(u_{00},g_{00},\lambda)={\rm M}(u_{10},u_{11},g_{10},\lambda){\rm M}(u_{00},u_{10},f_{00},\lambda).
    $$
    If the above Lax equation is equivalent to a system of polynomial equations, then the latter is integrable by construction. This system can be interpreted on a square with $u_{00},u_{10},u_{01},u_{11}$ lying on its vertices while $f_{00},f_{01}$ and $g_{00}, g_{10}$ on its horizontal and vertical edges, respectively. This is called a vertex-bond type system \cite{Hiet-Viallet} and, by construction, it can be restricted to the quad-graph system equivalent to \eqref{ZC-ab} on the level sets of its first integrals.
\end{remark}

\section{Integrable discretisations of NLS-type systems}\label{int-discretisations}
In \cite{SPS} we presented integrable discretisations of certain NLS-type  systems, namely the AKNS, the DNLS and the deformation of the DNLS systems. In this section, following the continuum limit approach \cite{Hiet-Frank-Joshi}, we present continuum limits connecting the obtained integrable systems of partial difference equations (P$\Delta$Es) to the  original NLS-type systems of PDEs.

\subsection{Ablowitz--Kaup--Newell--Segur system} 
The Ablowitz--Kaup--Newell--Segur (AKNS) nonlinear Schr\"odinger system reads:
\begin{equation}\label{NLS-eq}
p_t=\frac{1}{2}p_{xx}-4p^2q, ~~~ q_t=-\frac{1}{2}q_{xx}+4q^2p,
\end{equation}
where $x\in\mathbb{R}$, $t>0$, and it is equivalent to the zero-curvature condition \eqref{zero-curvature} for
\begin{equation}\label{Lax-NLS}
    {\rm{U}}=\lambda\begin{pmatrix}
   1 & 0\\
   0 & -1
\end{pmatrix}
+
\begin{pmatrix}
   0 & 2p\\
   2q & 0
\end{pmatrix}, \quad
 {\rm{V}}=\lambda^2 \begin{pmatrix}
   1 & 0\\
   0 & -1
\end{pmatrix}
+\lambda \begin{pmatrix}
   0 & 2p\\
   2q & 0
\end{pmatrix}
+
\begin{pmatrix}
   -2pq & p_x\\
   -q_x & 2pq
\end{pmatrix}.
\end{equation} 

System \eqref{NLS-eq} admits a Darboux transformation that consists of the Darboux matrix
\begin{subequations} \label{Darboux-NLS}
\begin{equation}\label{Darboux-mat-NLS}
{\rm{M}}(p,\tilde{q},a)=\lambda\begin{pmatrix}
1 & 0 \\
0 & 0
\end{pmatrix}
+
\begin{pmatrix}
a+pq & p \\
\tilde{q} & 1
\end{pmatrix},
\end{equation}
and the associated $x$-part of the B\"acklund transformation:
\begin{equation}\label{Backlund-NLS}
p_x=2\tilde{p} -2(a+p\tilde{q})p,\quad q_x=2q(a+p\tilde{q})-2q.
\end{equation}
\end{subequations}

For the discretisation of system \eqref{NLS-eq},  in \cite{SPS} we employed matrix \eqref{Darboux-mat-NLS} in the discrete setting explained in the previous section, and define matrices
\begin{equation}
    {\rm M}_{00}={\rm{M}}(p_{00},q_{10},a)=\begin{pmatrix}
\lambda+a+p_{00}q_{10} & p_{00} \\
q_{10} & 1
\end{pmatrix},\quad {\rm K}_{00}={\rm{M}}(p_{00},q_{01},b)=\begin{pmatrix}
\lambda+b+p_{00}q_{01} & p_{00} \\
q_{01} & 1
\end{pmatrix},
\end{equation}
and substitute them into the Lax equation ${\rm M}_{01}{\rm K}_{00}={\rm K}_{10}{\rm M}_{00}$, which is equivalent to the Adler--Yamilov system \cite{SPS}
\begin{equation}\label{Adler-Yamilov}
    p_{01} =p_{10}-\frac{a-b}{1+p_{00}q_{11}}p_{00},\quad q_{01} =q_{10}+\frac{a-b}{1+p_{00}q_{11}}q_{11},
\end{equation}
for all $\lambda\in\mathbb{C}$.

For the continuum limit, we introduce a small parameter $h \to 0$ and define 
$$
a = 1 + h, \quad b = 1 - h.
$$
Then, we define slow variables
\begin{equation}\label{slow-vars}
    x = h(n+m) - h^2(n-m), \quad t = h^3(n-m).
\end{equation}
We also scale the dependent variables:
\begin{equation}\label{potentials}
    p_{n,m} = 2h a^n b^m p(x,t), \quad q_{n,m} = 2h a^{-n} b^{-m} q(x,t).
\end{equation}
Next, we introduce the discrete notation $x_{ij}=x(n+i,m+j)$ and $t_{ij}=t(n+i,m+j)$, which implies
\begin{subequations}\label{xij-tij}
    \begin{align}
        x_{00}=x,\quad x_{10}&=x+h-h^2,\quad x_{01}=x+h+h^2,\quad x_{11}=x+2h-h^2,\label{xij-tij-a}\\
        t_{00}&=t,\quad t_{10}=t+h^3,\quad t_{01}=t-h^3,\quad t_{11}=t.\label{xij-tij-b}
    \end{align}
\end{subequations}

We have the following.

\begin{proposition}\label{AY-lim}
    The continuum limit of the Adler--Yamilov system \eqref{Adler-Yamilov} is the AKNS system \eqref{NLS-eq}.
\end{proposition}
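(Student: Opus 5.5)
The plan is a direct substitution into the Adler--Yamilov system \eqref{Adler-Yamilov}, followed by Taylor expansion in $h$ and collection of powers of $h$. I would substitute the scalings \eqref{potentials}, with $a=1+h$ and $b=1-h$, into both equations of \eqref{Adler-Yamilov}. Every shifted field is then evaluated at the shifted slow variables \eqref{xij-tij}. The exponential factors $a^nb^m$ in the first equation cancel once we divide by $2h\,a^nb^m$, and similarly for $q$ after dividing by $2h\,a^{-n}b^{-m}$. The first equation becomes
\begin{equation*}
b\,p(x_{01},t_{01}) = a\,p(x_{10},t_{10}) - \frac{2h\,p(x,t)}{1+\frac{4h^2}{ab}\,p(x,t)\,q(x_{11},t_{11})},
\end{equation*}
using $a-b=2h$ and $p_{00}q_{11}=4h^2(ab)^{-1}p\,q(x_{11},t_{11})$. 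The second equation becomes
\begin{equation*}
b^{-1}q(x_{01},t_{01}) = a^{-1}q(x_{10},t_{10}) + \frac{2h\,(ab)^{-1}q(x_{11},t_{11})}{1+\frac{4h^2}{ab}\,p(x,t)\,q(x_{11},t_{11})}.
\end{equation*}
Only the leading part of $q(x_{11},t_{11})$ will matter in the denominators, since those corrections are already of order $h^2$.

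Next I would expand each side around $(x,t)$ using \eqref{xij-tij}: $x_{10}-x=h-h^2$, $x_{01}-x=h+h^2$, $t_{10}-t=h^3=-(t_{01}-t)$. I would also expand $a^{\pm1}$, $b^{\pm1}$ and the geometric series $1/(1+4h^2pq/(ab))$ to order $h^3$. Then I compare coefficients order by order.
\begin{itemize}
\item At order $h^0$ the equations are trivially satisfied.
\item At order $h^1$, for $p$ both sides give $p_x-p$, so they agree identically; the $q$ equation behaves analogously.
\item At order $h^2$ both sides give $\tfrac12 p_{xx}$, again an identity.
\end{itemize}
The slow variables \eqref{slow-vars} and the factors $a^nb^m$ were chosen precisely to make these lower orders vanish. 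Checking this at each order is a useful consistency test on the bookkeeping.

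The genuine content appears at order $h^3$, and this is where I expect the main work. The contributions to collect are:
\begin{itemize}
\item $\mp p_t$ from the time shifts on each side;
\item third-derivative terms $\tfrac16(h\pm h^2)^3p_{xxx}$, whose $h^3$ parts are $\tfrac16p_{xxx}$ on both sides and should cancel;
\item cross terms $\tfrac12(h\pm h^2)^2p_{xx}$ multiplied by the prefactors $1\mp h$;
\item the nonlinear term $+8h^3p^2q$, coming from $-2hp\,(1-4h^2pq+\dots)$.
\end{itemize}
I anticipate that the $p_{xxx}$ and the $p_x$-type terms cancel. The remainder should reduce to $-2p_t + p_{xx} - 8p^2q = 0$, i.e. $p_t=\tfrac12p_{xx}-4p^2q$. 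The same procedure for the $q$ equation, where the signs are reversed by the factors $a^{-1}$ and $b^{-1}$, should give $q_t=-\tfrac12q_{xx}+4q^2p$. Together these reproduce \eqref{NLS-eq}.

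The main obstacle is purely the careful bookkeeping at order $h^3$. If the coefficients do not match exactly, I would first recheck the expansion of $q(x_{11},t_{11})$ and of $(ab)^{-1}=1+h^2+O(h^4)$. I would also confirm that the $O(h)$ differences between $x_{10}$, $x_{01}$ and $x_{11}$ enter only through the combinations computed above.
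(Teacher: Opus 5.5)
Your proposal is correct and follows essentially the paper's own route. You substitute the scalings \eqref{potentials}, Taylor-expand about $(x,t)$, and read off the AKNS system at the first nontrivial order: $h^3$ after dividing out the prefactor $2h\,a^{\pm n}b^{\pm m}$, which is $h^4$ in the paper's normalisation. The $q$-equation is handled in the same way.

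One caution for the $q$-equation: there $q(x_{11},t_{11})$ also appears in the numerator, so it must be expanded to $O(h^2)$ using $x_{11}=x+2h$, which is what \eqref{slow-vars} gives. The value $x+2h-h^2$ printed in \eqref{xij-tij-a} would leave a spurious $q_x$ term. With $x_{11}=x+2h$ and $(ab)^{-1}=1+h^2+O(h^4)$, the $h^3$ balance is $-2q_t+3q_{xx}+2q=4q_{xx}+2q-8pq^2$, which is exactly the second AKNS equation.
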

\begin{proof}
    Let $p_{i,j}=2ha^{n+i}b^{m+j}p(x_{ij},t_{ij})$ and $q_{n,m}=2ha^{-n-i}b^{-m-j}q(x_{ij},t_{ij})$. Then,
    $$
    p_{00}=2ha^{n}b^{m}p(x_{00},t_{00}),\quad p_{10}=2ha^{n+1}b^{m}p(x_{10},t_{10}),\quad p_{01}=2ha^{n}b^{m+1}p(x_{01},t_{01}),
    $$
    and
    $$
    q_{10}=2ha^{-n-1}b^{m}q(x_{10},t_{10}),\quad q_{01}=2ha^{-n}b^{-m-1}q(x_{01},t_{01}),\quad q_{11}=2ha^{-n-1}b^{-m-1}q(x_{11},t_{11}).
    $$
    Now, substituting $x_{ij}$ from \eqref{xij-tij} into the above, and expanding in Taylor series up to $\mathcal{O}(h^5)$, we obtain that
    \begin{align*}
         p_{10} &= 2h a^nb^m \left[(1+h)\left(p + (h-h^2)p_x + h^3 p_t + \frac{h^2}{2}p_{xx}+\mathcal{O}(h^3)\right)\right],\\ 
         p_{01} &= 2h a^nb^m \left[(1-h)\left(p + (h+h^2)p_x - h^3 p_t + \frac{h^2}{2}p_{xx}+\mathcal{O}(h^3)\right)\right],
    \end{align*}
    from where it follows that
     \begin{equation}\label{AY-left}
     p_{10}-p_{01} = 4 p h^2 + (4p_t - 2p_{xx}) h^4 + \mathcal{O}(h^5). 
     \end{equation}
     
     On the other hand, we have that $a-b=2h$, and 
     $$
     p_{00}q_{11}=(2h p)(2h q(x+2h,t))=4h^2 pq + \mathcal{O}(h^3).
     $$ 
     Thus, 
     $$
     \frac{1}{1+p_{00}q_{11}} = 1 + 4 pq h^2 + \mathcal{O}(h^3),
     $$
    which, consequently, implies that
\begin{equation}\label{AY-right}
    \frac{a-b}{1+p_{00}q_{11}}p_{00} = 4ph^2 - 16 p^2 q h^4 + \mathcal{O}(h^5).
\end{equation}
Comparing the coefficients of $h^4$ in \eqref{AY-left} and \eqref{AY-right} we obtain $p_t = \frac{1}{2}p_{xx} - 4p^2 q$, which is the first equation of the AKNS system \eqref{NLS-eq}. Similarly, it can be shown that from the coefficients of $h^4$ of the expansions $q_{01}-q_{10}$ and $ \frac{a-b}{1+p_{00}q_{11}}q_{11}$, we obtain the second equation of the AKNS system \eqref{NLS-eq}.
\end{proof}

\subsection{Kaup--Newell system} 
The Kaup--Newell derivative NLS (DNLS) system \cite{Kaup-Newell} is given by
\begin{equation}\label{DNLS-system}
        p_t = \frac{1}{2} p_{xx}-2\,(p^2q)_x,\quad q_t = -\frac{1}{2}q_{xx}-2\,(q^2p)_x,
\end{equation}
where $x\in\mathbb{R}$, $t>0$, and it is equivalent to equation \eqref{zero-curvature}, where the matrices ${\rm U}$ and ${\rm V}$ are given as follows:
\begin{subequations}\label{Lax-DNLS}
    \begin{align}
          {\rm U}&=\lambda^2\begin{pmatrix}
1 & 0\\
0 & -1
\end{pmatrix}
+
\lambda
\begin{pmatrix}
0 & 2p\\
2q & 0
\end{pmatrix},\label{Lax-DNLS-a}
\\
{\rm V}&=
\lambda^4
\begin{pmatrix}
1 & 0\\
0 & -1
\end{pmatrix}
+
\lambda^3
\begin{pmatrix}
0 & 2p\\
2q & 0
\end{pmatrix}
-2\lambda^2
\begin{pmatrix}
pq & 0\\
0 & -pq
\end{pmatrix}
+
\lambda
\begin{pmatrix}
0 & p_x-4p^2 q\\
-q_x-4pq^2 & 0
\end{pmatrix}.\label{Lax-DNLS-b}  
    \end{align}
\end{subequations}
In this case, the matrices ${\rm U}$ and ${\rm V}$ are invariant under the transformation
\begin{equation}\label{Z2-symmetry}
    \mathfrak{s}_1(\lambda):~{\rm U}(\lambda)\rightarrow \sigma_3{\rm U}(-\lambda)\sigma_3,\quad \mathfrak{s}_1(\lambda):~{\rm V}(\lambda)\rightarrow \sigma_3{\rm V}(-\lambda)\sigma_3.
\end{equation}
The above involution generates the reduction group, which is isomorphic to $\mathbb{Z}_2$.

In \cite{SPS} we constructed a Darboux transformation for system \eqref{DNLS-system} that consists of the Darboux matrix
\begin{subequations}\label{Kaup-Newell-Darboux}
\begin{equation}\label{Kaup-Newell-DT}
{\rm M} =\lambda^2\begin{pmatrix} f & 0\\ 0 & 0\end{pmatrix}+\lambda \begin{pmatrix} 0 & fp\\ fq & 0 \end{pmatrix}+\begin{pmatrix} c_1 & 0\\ 0 & c_2\end{pmatrix}
\end{equation}
and the associated $x$-part of the B\"acklund transformation:
\begin{align}\label{Kaup-Newell-BT}
f_x=2f(\tilde{p}\tilde{q}-pq),\quad
    (fp)_x=2c_2\tilde{p}-2c_1p,\quad
    (f\tilde{q})_x=2c_1\tilde{q}-2c_2q.
\end{align}
\end{subequations}
Moreover, we showed that, for $c_1=c_2=0$, the B\"acklund transformation \eqref{Kaup-Newell-BT} can be integrated, and we obtain a degenerate Darboux transformation that consists of 
\begin{subequations}\label{Kaup-Newell-Darboux-deg}
\begin{equation}\label{Kaup-Newell-DT-deg}
{\rm M} =\lambda^2\begin{pmatrix} \frac{1}{p} & 0\\ 0 & 0\end{pmatrix}+\lambda \begin{pmatrix} 0 & 1\\ 1 & 0 \end{pmatrix},
\end{equation}
together with the associated $x$-part of the B\"acklund transformation:
\begin{align}\label{Kaup-Newell-BT-deg}
\tilde{q}=p,\quad \partial_x p=2p^2 (\tilde{p}-q).
\end{align}
\end{subequations}

One discretisation is obtained by employing matrix \eqref{Kaup-Newell-DT} and consider matrices
\begin{equation}\label{Kaup-Newell}
    {\rm M}_{00}={\rm M}(f_{00},p_{00},q_{10})=\begin{pmatrix} \lambda^2 f_{00}+1 & \lambda f_{00}p_{00}\\ \lambda f_{00}q_{10} & 1 \end{pmatrix},\quad {\rm K}_{00}={\rm M}(g_{00},p_{00},q_{01})=\begin{pmatrix} \lambda^2 g_{00}+1 & \lambda g_{00}p_{00}\\ \lambda g_{00}q_{01} & 1 \end{pmatrix},
\end{equation}
where we have set $c_1=c_2=1$, and substitute them into the Lax equation ${\rm M}_{01}{\rm K}_{00}={\rm K}_{10}{\rm M}_{00}$. The latter is equivalent to the system
\begin{subequations}\label{vertex-bond-DNLS}
    \begin{align}
        f_{00} g_{10} - g_{00} f_{01}&=0,\label{vertex-bond-DNLS-a}\\
      f_{01} q_{11} - f_{00} q_{10} -  g_{10} q_{11} +  g_{00} q_{01}&=0,\label{vertex-bond-DNLS-b}\\
 f_{01} p_{01} - f_{00} p_{00} -  g_{10} p_{10} +  g_{00} p_{00}&=0,\label{vertex-bond-DNLS-c}\\
f_{01} - f_{00} -  (g_{10} - g_{00}) - f_{00} g_{10} p_{10} q_{10} + g_{00} f_{01} p_{01} q_{01}&=0.\label{vertex-bond-DNLS-d}
    \end{align}
\end{subequations}
System \eqref{vertex-bond-DNLS} admits the first integrals
\begin{equation}\label{VB-DNLS-1-ints}
    \mathcal{F}=f_{00}^2 p_{00}q_{10}-f_{00},\qquad \mathcal{G}=g_{00}^2 p_{00}q_{01}-g_{00}.
\end{equation}

For the continuum limit, we consider the slow variables as in \eqref{slow-vars} with their shifts defined as in \eqref{xij-tij}. We have the following.

\begin{proposition}\label{DNLS-lim}
    The continuum limit of the discrete derivative NLS system \eqref{vertex-bond-DNLS} is the Kaup-Newell system \eqref{DNLS-system}.
\end{proposition}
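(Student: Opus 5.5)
The strategy follows the proof of Proposition \ref{AY-lim}, with one extra step at the start. The system \eqref{vertex-bond-DNLS} carries the bond variables $f,g$ as well as the vertex variables $p,q$, so I first remove $f$ and $g$ using the first integrals \eqref{VB-DNLS-1-ints}. On the level sets $\mathcal{F}=-1/\alpha$ and $\mathcal{G}=-1/\beta$, the first integrals give
$$\alpha\, p_{00}q_{10}\,f_{00}^2-\alpha f_{00}+1=0,\qquad \beta\, p_{00}q_{01}\,g_{00}^2-\beta g_{00}+1=0 .$$
I choose the root of each quadratic that stays finite as $pq\to0$, so that $f_{00}=F(p_{00}q_{10})$ and $g_{00}=G(p_{00}q_{01})$ are power series in the small product $pq$.

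The parameters are then tied to $h$, in analogy with the AKNS case. With $c_1=c_2=1$, the matrix $M_{00}$ is equal, up to the scalar $f_{00}$, to $\lambda^2+1/f_{00}+\dots$. So $1/f$ plays the role that the parameter $a$ played in the AKNS case. Accordingly I take the leading parts of $1/f$ and $1/g$ to be of the form $1\pm h$, fixing $\alpha,\beta$ as functions of $h$. I use the slow variables \eqref{slow-vars} with the shifts \eqref{xij-tij}, and scale the fields as in \eqref{potentials}: $p_{n,m}=2h\,\mu^n\nu^m p(x,t)$ and $q_{n,m}=2h\,\mu^{-n}\nu^{-m}q(x,t)$. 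The gauge factors $\mu,\nu$ are fixed, order by order, so that the $\mathcal{O}(h^2)$ terms of the discrete equations match, exactly as $a,b$ did in the AKNS case.

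Next I expand $f_{00},f_{01},g_{00},g_{10}$ in Taylor series. Their $pq$-dependence enters at order $h^2\cdot pq$ and brings in shifted arguments $x_{ij}$, which I expand as well. I substitute these into \eqref{vertex-bond-DNLS-c} (which involves $p$ only on vertices, with bond weights) and into \eqref{vertex-bond-DNLS-b}. At low orders I expect identities: order $h^2$ is trivial, and order $h^3$ should vanish or fix a constant. At order $h^4$ the linear part should give $p_t-\tfrac12 p_{xx}$, exactly as in \eqref{AY-left}. The nonlinear term has to come from the differences $f_{01}-f_{00}$ and $g_{10}-g_{00}$. These produce $x$-derivatives of $pq$ multiplied by $p$, together with the undifferentiated cubic terms. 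The goal is to show that these combine into $-2(p^2q)_x$, giving the first equation of \eqref{DNLS-system}; the second equation, for $q$, comes from \eqref{vertex-bond-DNLS-b} in the same way. Equations \eqref{vertex-bond-DNLS-a} and \eqref{vertex-bond-DNLS-d} need a separate consistency check. Since the first integrals are exactly what makes \eqref{vertex-bond-DNLS-a} and \eqref{vertex-bond-DNLS-d} compatible with \eqref{vertex-bond-DNLS-b}--\eqref{vertex-bond-DNLS-c}, I expect them to hold identically to the relevant order on the chosen level sets, and I will verify this directly at leading order.

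The main obstacle is the choice of scalings and level sets. In the AKNS case the nonlinearity came from an undifferentiated factor $(1+p_{00}q_{11})^{-1}$. Here, by contrast, the derivative nonlinearity $(p^2q)_x$ must come out with the correct coefficient $-2$, and spurious terms $p^2q$ at order $h^3$ must cancel. If the naive choice (bond variables with leading parts tied to $1\pm h$, and the $2h$ amplitude scaling of \eqref{potentials}) leaves a surviving $h^3$ term, my first remedy is to add an $\mathcal{O}(h^2)$ correction to the level-set constants. Failing that, I would change the amplitude scaling of $p,q$ to $h^{1/2}$ so that the cubic terms enter at order $h^4$. After each change I would recompute the cancellation order by order before comparing coefficients.
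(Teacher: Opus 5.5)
\textbf{There is a genuine gap.} It lies in the choice that decides everything here: the amplitude of $p,q$. Let $\epsilon$ denote this amplitude and count orders in \eqref{vertex-bond-DNLS-c}.

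\emph{Linear part.} Take bond variables $f\approx 1-h$, $g\approx 1+h$ (or exactly $1/(1\pm h)$ from your level sets). In the linear part $f_{01}p_{01}-g_{10}p_{10}+(g_{00}-f_{00})p_{00}$ the $p$ and $p_x$ terms cancel identically, so your gauge factors $\mu,\nu$ are unnecessary. What remains is $-\epsilon h^3(2p_t-p_{xx})+\mathcal O(\epsilon h^4)$.

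\emph{Cubic part.} All cubic terms come from the products $p_{00}q_{10}$, $p_{01}q_{11}$, $p_{00}q_{01}$, $p_{10}q_{11}$ inside the bond variables. In the alternating sum their leading parts cancel, so the cubic contribution is $\mathcal O(\epsilon^3h^2)$; gauge factors $1+\mathcal O(h)$ do not change this.

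\emph{Consequence.} With $\epsilon=2h$ as in \eqref{potentials}, the cubic part is $\mathcal O(h^5)$ while $p_t-\frac12p_{xx}$ sits at $\mathcal O(h^4)$. Your main computation therefore yields the linear equation $p_t=\frac12p_{xx}$, and $-2(p^2q)_x$ cannot appear at the order you compare. The failure is not a spurious $p^2q$ term surviving at order $h^3$, since nothing survives there. Your first remedy, correcting the level-set constants, only adds linear terms and cannot repair it.

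Balance requires $\epsilon^2\sim h$, i.e.\ $p_{ij}=\sqrt{2h}\,p(x_{ij},t_{ij})$ and $q_{ij}=\sqrt{2h}\,q(x_{ij},t_{ij})$ with no gauge factors; this is the paper's scaling. Then $p_{00}q_{10}=2hpq+\dots$ is of the same order as the $\mp h$ in the bond variables, and both equations appear at order $h^{7/2}$, not $h^4$.

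Your last fallback is thus the essential step, not a contingency. Once it is adopted, your level-set idea does work, and it explains the paper's ansatz. The finite roots of $\mathcal F=-1/(1+h)$ and $\mathcal G=-1/(1-h)$ are
\begin{gather*}
f_{00}=1-h+p_{00}q_{10}+h^2\bigl(1-4pq+8p^2q^2\bigr)+\mathcal O(h^3),\\
g_{00}=1+h+p_{00}q_{01}+h^2\bigl(1+4pq+8p^2q^2\bigr)+\mathcal O(h^3).
\end{gather*}
Hence $r-s=-8pq$, which is exactly the relation the paper extracts from \eqref{vertex-bond-DNLS-a} at order $h^3$. As you expected, \eqref{vertex-bond-DNLS-a} then holds at that order.

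In \eqref{vertex-bond-DNLS-c}, the products contribute $4\sqrt2\,(p^2q)_x h^{7/2}$. The $h^2$-terms contribute $\sqrt2\,\bigl((r-s)p\bigr)_x h^{7/2}=-8\sqrt2\,(p^2q)_xh^{7/2}$. The coefficient of $h^{7/2}$ is therefore $-\sqrt2\,\bigl[2p_t-p_{xx}+4(p^2q)_x\bigr]$, which gives the first equation of \eqref{DNLS-system}. Equation \eqref{vertex-bond-DNLS-b} gives the second in the same way.
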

\begin{proof}
    Let $p_{ij}=\sqrt{2h}p(x_{ij},t_{ij})$ and $q_{ij}=\sqrt{2h}q(x_{ij},t_{ij})$. Let also the auxiliary functions $f_{ij}$ and $g_{ij}$ be defined by the polynomial expressions:
    \begin{equation}\label{aux-fij-gij}
        f_{ij}=f_0(x_{ij},t_{ij})+f_1(x_{ij},t_{ij})h+f_2(x_{ij},t_{ij})h^2,\qquad g_{ij}=g_0(x_{ij},t_{ij})+g_1(x_{ij},t_{ij})h+g_2(x_{ij},t_{ij})h^2.
    \end{equation}
    
    Substituting to equation \eqref{vertex-bond-DNLS-a}, we obtain:
    $$
    (f_0g_{0,x}-g_0f_{0,x})h+\mathcal{O}(h^2)=0,
    $$
    from where it follows that $\partial_x \left(\frac{f_0}{g_0}\right)=0$, or $f_0=c(t)g_0$. Without loss of generality, we choose $f_0=g_0=1$. Now, from equation \eqref{vertex-bond-DNLS-c}, we obtain
    $$
    \left(p_{00}(f_{1,x}-g_{1,x})+(2+f_1-g_1)p_x\right)h^{\frac{5}{2}}+\mathcal{O}(h^{\frac72})=0.
    $$
    A simple choice that annihilates the above coefficient would be $f_1=-1$ and $g_1=1$. However, this choice will not generate the target system, and we need to take into account the first integrals \eqref{VB-DNLS-1-ints}. 

    Let us take the first integral $\mathcal{F}=f_{00}^2 p_{00}q_{10}-f_{00}$ from \eqref{VB-DNLS-1-ints}. Upon expansion, it can be readily seen that:
    $$
    \mathcal{F}=-1+h (2pq-f_1)+\mathcal{O}(h^2),
    $$
    where $f_1$ is the coefficient of $h$ in \eqref{aux-fij-gij}. This suggests choosing, for instance, $f_1=-1+2pq$, so that the $\mathcal{O}(h)$ is finite. Therefore, we adopt the ansatz $f_{00}=1-h+p_{00}q_{10}+h^2r_{00}+\mathcal{O}(h^3)$. Similarly, taking into account the second integral $\mathcal{G}=g_{00}^2 p_{00}q_{01}-g_{00}$ from \eqref{VB-DNLS-1-ints}, we adopt the ansatz that $g_{00}=1+h+p_{00}q_{01}+h^2s_{00}+\mathcal{O}(h^3)$. 

    That is, we substitute the following
    \begin{align*} 
    f_{00} &= 1 - h + p_{00} q_{10} + h^2 r_{00} + O(h^3),\\ 
    f_{01} &= 1 - h + p_{01} q_{11} + h^2 r_{01} + O(h^3),\\ 
    g_{00} &= 1 + h + p_{00} q_{01} + h^2 s_{00} + O(h^3),\\ 
    g_{10} &= 1 + h + p_{10} q_{11} + h^2 s_{10} + O(h^3),
    \end{align*}
    into system \eqref{vertex-bond-DNLS}. Equation \eqref{vertex-bond-DNLS-a} gives
    $$
    (r_x-s_x+8(pq)_x)h^3+\mathcal{O}(h^4)=0
    $$
    That is $r_x-s_x+8(pq)_x=0$. We choose the simple solution $r-s=-8pq$. In view of the latter, equations  \eqref{vertex-bond-DNLS-a} and  \eqref{vertex-bond-DNLS-d} are identically satisfied up to $\mathcal{O}(h^4)$, whereas equations \eqref{vertex-bond-DNLS-b} and \eqref{vertex-bond-DNLS-c} are equivalent to
    $$
    \sqrt{2} \left[2 p_t - p_{xx}+4\,(p^2q)_x\right]h^{\frac72}+\mathcal{O}(h^{\frac92}), \quad\text{and}\quad \sqrt{2} \left[2q_t +q_{xx}+4\,(q^2p)_x\right]h^{\frac72}+\mathcal{O}(h^{\frac92}),
    $$
    and setting the coefficients of $h^{\frac72}$ equal to zero, we obtain exactly system \eqref{DNLS-system}. Then the terms of order $\mathcal{O}(h^4)$ in  \eqref{vertex-bond-DNLS-a} and  \eqref{vertex-bond-DNLS-d} vanish modulo  $r-s=-8pq$ and the system \eqref{DNLS-system}.
\end{proof}

\subsection{Mikhailov--Shabat--Yamilov system} 
Finally, the Mikhailov--Shabat--Yamilov deformation of the DNLS system \cite{MSY} is given by
\begin{equation}\label{DDNLS-sys} 
p_t =
-\frac12\, p_{xx}
-2 (p^2 q)_x
+ q_x,
\quad q_t =
\frac12\, q_{xx}
-2 (q^2 p)_x
+ p_x.
\end{equation}
where $x\in\mathbb{R}$ and $t>0$, and the system is equivalent to equation \eqref{zero-curvature}, where the matrices ${\rm U}$ and ${\rm V}$ are given as follows: 
\begin{subequations}\label{Lax-DDNLS} 
\begin{align} 
U&=
-\lambda^2
\begin{pmatrix}
1 & 0\\
0 & -1
\end{pmatrix}
-\lambda
\begin{pmatrix}
0 & 2p\\
2q & 0
\end{pmatrix}
-\lambda^{-1}
\begin{pmatrix}
0 & 2q\\
2p & 0
\end{pmatrix}
+\lambda^{-2}
\begin{pmatrix}
1 & 0\\
0 & -1
\end{pmatrix}.
\\[2mm]
V&=-\lambda^4
\begin{pmatrix}
1 & 0\\
0 & -1
\end{pmatrix}
-\lambda^3
\begin{pmatrix}
0 & 2p\\
2q & 0
\end{pmatrix}
+\lambda^2
\begin{pmatrix}
2pq & 0\\
0 & -2pq
\end{pmatrix}
+\lambda
\begin{pmatrix}
0 & p_x+4p^2q-q\\
4pq^2-p-q_x & 0
\end{pmatrix}
+\nonumber\\
&\lambda^{-1}
\begin{pmatrix}
0 & 4pq^2-p-q_x\\
p_x+4p^2q-q & 0
\end{pmatrix}
-\lambda^{-2}
\begin{pmatrix}
2pq & 0\\
0 & -2pq
\end{pmatrix}
-\lambda^{-3}
\begin{pmatrix}
0 & 2q\\
2p & 0
\end{pmatrix}
+\lambda^{-4}
\begin{pmatrix}
1 & 0\\
0 & -1
\end{pmatrix}.
\end{align}  
\end{subequations}
In this case, matrices ${\rm U}$ and ${\rm V}$ are invariant under the transformation
\begin{subequations}\label{D2-symmetry}
    \begin{align}
    &\mathfrak{s}_1(\lambda):~{\rm U}(\lambda)\rightarrow \sigma_3{\rm U}(-\lambda)\sigma_3,~~\quad \mathfrak{s}_1(\lambda):~{\rm V}(\lambda)\rightarrow \sigma_3{\rm V}(-\lambda)\sigma_3,\\
    &\mathfrak{s}_2(\lambda):~{\rm U}(\lambda)\rightarrow \sigma_1{\rm U}\left(\frac{1}{\lambda}\right)\sigma_1,\quad \mathfrak{s}_2(\lambda):~{\rm V}(\lambda)\rightarrow \sigma_1{\rm V}\left(\frac{1}{\lambda}\right)\sigma_1.
\end{align}
\end{subequations}
The above involutions generate the reduction group, which is isomorphic to the dihedral group $\mathbb{Z}_2\times\mathbb{Z}_2\cong \mathbb{D}_2$.

In \cite{SPS} we constructed a Darboux transformation for system \eqref{DDNLS-sys} that consists of the Darboux matrix
\begin{equation}\label{Darboux-DDNLS}
     {\rm M}=\lambda^2
\begin{pmatrix} f&0\\0&0\end{pmatrix}
+\lambda
\begin{pmatrix}0&  f  p\\ \tilde{q}  f&0\end{pmatrix}
+\begin{pmatrix}  f  g&0\\0&  f  g\end{pmatrix}+\lambda^{-1}
\begin{pmatrix}0&f\tilde{q}  \\   f  p&0\end{pmatrix}
+\lambda^{-2}
\begin{pmatrix}0&0\\0&  f\end{pmatrix},
\end{equation}
and the $x$-part of the B\"acklund transformation
\begin{align}\label{x-Backlund-DDNLS}
    p_x&=2\big((\tilde{p}\tilde{q}-pq)p+(p-\tilde{p})g+q-\tilde{q}\big),\\
    \tilde{q}_x&=2\big((\tilde{p}\tilde{q}-pq)\tilde{q}+(q-\tilde{q})g+p-\tilde{p}\big),\\
    g_x&=2\big((\tilde{p}\tilde{q}-pq)g+(p-\tilde{p})p+(q-\tilde{q})\tilde{q}\big),\\
    f_x&=2(pq-\tilde{p}\tilde{q})f
\end{align}

For the discretisation of system \eqref{DDNLS-sys}, we consider two Darboux matrices \eqref{Darboux-DDNLS} in the following notation:
\begin{align*}
    {\rm M}_{00}&={\rm M}(p_{00},q_{10},f_{00},g_{00})=f_{00}\begin{pmatrix}\lambda^2+g_{00} & \lambda p_{00}+\lambda^{-1}q_{10}\\ \lambda q_{10}+\lambda^{-1}p_{00} &\lambda^{-2}+g_{00} \end{pmatrix},\\ 
    {\rm K}_{00}&={\rm M}(p_{00},q_{01},u_{00},v_{00})=u_{00}\begin{pmatrix}\lambda^2+v_{00} &\lambda p_{00}+\lambda^{-1}q_{01}\\ \lambda q_{01}+\lambda^{-1}p_{00} &\lambda^{-2}+v_{00} \end{pmatrix},
\end{align*}
and substitute them into the Lax equation ${\rm M}_{01}{\rm K}_{00}={\rm K}_{10}{\rm M}_{00}$ which is equivalent to the system of P$\Delta$Es:
\begin{subequations}\label{d2-discrete}
    \begin{align}
        f_{01}u_{00}-u_{10} f_{00}&=0,\\
        p_{00}(p_{01}-p_{00})+(q_{01}-q_{10})q_{11}+g_{01}v_{00}-g_{00}v_{10}&=0,\\
        g_{01}-g_{00}+v_{00}-v_{10}+p_{01}q_{01}-p_{10}q_{10}&=0,\\
        p_{01}-p_{10}+g_{01}q_{01}+(v_{00}-g_{00})q_{11}-v_{10}q_{10}&=0,\\
        q_{01}-q_{10}+g_{01}p_{00}-g_{00}p_{10}+v_{00}p_{01}-v_{10}p_{00}&=0.
    \end{align}
\end{subequations}

For the continuum limit, we consider the slow variables as in \eqref{slow-vars} with their shifts defined as in \eqref{xij-tij}. We have the following.

\begin{proposition}\label{DDNLS-lim}
    The continuum limit of the discrete Mikhailov--Shabat--Yamilov deformation of the derivative NLS system \eqref{d2-discrete} is system \eqref{DDNLS-sys}.
\end{proposition}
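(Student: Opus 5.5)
We follow the scheme of Propositions \ref{AY-lim} and \ref{DNLS-lim}. We substitute slowly varying fields into system \eqref{d2-discrete}, Taylor expand in $h$ using the shifts \eqref{xij-tij}, and fix the auxiliary bond variables $f,g,u,v$ order by order so that the lowest nontrivial coefficient reproduces \eqref{DDNLS-sys}.

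\textbf{Scalings.} Since \eqref{DDNLS-sys} is the Kaup--Newell system plus the linear coupling terms $q_x$, $p_x$, we start from the same scaling as in Proposition \ref{DNLS-lim}:
$$p_{ij}=\sqrt{2h}\,p(x_{ij},t_{ij}),\qquad q_{ij}=\sqrt{2h}\,q(x_{ij},t_{ij}).$$
The linear terms in \eqref{DDNLS-sys} come from the $\lambda^{\pm1}$ off-diagonal parts, and these cross-couple $p$ and $q$. In the discrete equations this shows up as terms like $q_{01}-q_{10}$ appearing in the $p$-equation. The gauge factors $f,u$ drop out of the last four equations of \eqref{d2-discrete}. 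The first equation, $f_{01}u_{00}=u_{10}f_{00}$, only says that $f/u$ is essentially constant, so we take $f=u=1$ to leading order.

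\textbf{Ansatz for $g$ and $v$.} The real unknowns are therefore $g$ and $v$. We write
$$g_{00}=g_0+g_1h+g_2h^2+\mathcal O(h^3),\qquad v_{00}=v_0+v_1h+v_2h^2+\mathcal O(h^3),$$
and fix the leading coefficients from the third equation of \eqref{d2-discrete}, $g_{01}-g_{00}+v_{00}-v_{10}+p_{01}q_{01}-p_{10}q_{10}=0$, and from the second one. The structure of these equations suggests $g_0=-v_0$, so that the constant terms cancel in the combinations $g_{01}v_{00}-g_{00}v_{10}$. We then choose $g_1,v_1$ containing $\pm1$ together with a multiple of $pq$, mimicking the way the first integrals \eqref{VB-DNLS-1-ints} dictated the ansatz $f_{00}=1-h+p_{00}q_{10}+\dots$. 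Here the analogous conserved quantity is the determinant of ${\rm M}_{00}$ divided by $f_{00}^2$:
$$(\lambda^2+g_{00})(\lambda^{-2}+g_{00})-(\lambda p_{00}+\lambda^{-1}q_{10})(\lambda q_{10}+\lambda^{-1}p_{00}).$$
Its $\lambda$-independent part gives the integral $g_{00}^2-p_{00}^2-q_{10}^2$, together with the condition that $g_{00}-p_{00}q_{10}$ is constant, i.e. equal to a parameter $a$. This immediately suggests $g_{00}=a+p_{00}q_{10}$ and $v_{00}=b+p_{00}q_{01}$, with $a=1+h$, $b=1-h$ (possibly rescaled). We substitute this ansatz, allowing $\mathcal O(h^2)$ corrections $r_{00},s_{00}$ as in Proposition \ref{DNLS-lim}.

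\textbf{Expansion and matching.} We expand the fourth and fifth equations of \eqref{d2-discrete}. At lowest order the linear differences $p_{01}-p_{10}$ and $q_{01}-q_{10}$ produce $h^{3/2}$ and $h^{7/2}$ terms involving $p_x$, $q_x$, $p_t$ and the second derivatives, and the products $g q$, $v p$ supply the cubic terms. Whatever remains at intermediate orders should be removable by a relation between $r$ and $s$, like $r-s=-8pq$ in Proposition \ref{DNLS-lim}. At order $h^{7/2}$ we expect to obtain exactly
$$2p_t+p_{xx}+4(p^2q)_x-2q_x=0,\qquad 2q_t-q_{xx}+4(q^2p)_x-2p_x=0,$$
which is \eqref{DDNLS-sys}. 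The second and third equations of \eqref{d2-discrete} must then hold identically modulo these.

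\textbf{Main obstacle.} The hardest step is reconciling the scaling. The linear coupling terms $q_x$, $p_x$ must appear at the same order as the dispersive and cubic terms, whereas a naive expansion of $q_{01}-q_{10}$ produces $q_x$ already at $h^{5/2}$ (it is $2h^2q_x$ times $\sqrt{2h}$). If the orders do not match, we will try first to rescale $p$ and $q$ asymmetrically, or to shift the constant in $g,v$ as $g_0=-v_0=c$ with $c$ depending on $h$. We then choose $c$ so that the lower-order $h^{5/2}$ terms cancel between $g_{01}q_{01}+(v_{00}-g_{00})q_{11}-v_{10}q_{10}$ and the difference $p_{01}-p_{10}$, leaving the $q_x$ contribution at order $h^{7/2}$.
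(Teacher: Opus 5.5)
\textbf{The regime you propose cannot produce \eqref{DDNLS-sys}.} What you postpone as the ``main obstacle'' is fatal, not technical. Take $p_{ij}=\sqrt{2h}\,p$, $q_{ij}=\sqrt{2h}\,q$, $g_{00}=a+p_{00}q_{10}+h^2r_{00}$, $v_{00}=b+p_{00}q_{01}+h^2s_{00}$, $a,b=1\pm h$, with the slow variables \eqref{slow-vars}. In the fifth equation of \eqref{d2-discrete}:
the $p$-linear part is $a(p_{00}-p_{10})+b(p_{01}-p_{00})=(p_{01}-p_{10})+h(2p_{00}-p_{10}-p_{01})=\sqrt{2h}\,(p_{xx}-2p_t)h^3+\mathcal O(h^{9/2})$, since the two $2h^2p_x$ contributions cancel;
the cubic part $p_{00}\bigl[(p_{01}-p_{10})q_{11}+p_{01}q_{01}-p_{10}q_{10}\bigr]$ involves only differences between the sites $01$ and $10$, and together with the $r,s$ corrections it is $\mathcal O(h^{7/2})$.
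So $q_{01}-q_{10}=2\sqrt2\,q_xh^{5/2}+\dots$ stands alone at lowest order. Likewise $p_{01}-p_{10}=2\sqrt2\,p_xh^{5/2}+\dots$ stands alone in the fourth equation, and the limit gives $p_x=q_x=0$.

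Taking $a,b=1\pm\alpha h$ only turns this into $q_x=-(1-\alpha)p_x$, $p_x=-(1-\alpha)q_x$: at best a constraint, never an evolution equation. Your fallbacks do not repair it either. Rescaling $q\sim h\,p$ balances the fifth equation but makes $p_x$ dominate the fourth even more, since each equation needs the \emph{other} field to be small. A background $g_0=-v_0=c(h)$ (which, incidentally, contradicts your actual choice $g_0=v_0=1$) contributes $c(2p_{00}-p_{10}-p_{01})\approx-2ch\,p_x$. That term either dominates or gives another relation $q_x\propto p_x$; $p_t$ never reaches the order of $q_x$.

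The underlying reason is scaling. The Kaup--Newell limit works because \eqref{DNLS-system} is invariant under $x\mapsto cx$, $t\mapsto c^2t$, $(p,q)\mapsto c^{-1/2}(p,q)$. The terms $q_x,p_x$ in \eqref{DDNLS-sys} break this invariance. In the long-wave scaling behind your ansatz they are larger than the rest by a factor $h^{-1}$, which is exactly the $h^{5/2}$ versus $h^{7/2}$ mismatch. Relatedly, linearising about $p=q=0$ with constant $g,v$ gives plane waves $p_{nm}=\lambda^2q_{nm}\propto z^nw^m$, with $z=(\lambda^2+g)/(\lambda^{-2}+g)$ and $w=(\lambda^2+v)/(\lambda^{-2}+v)$. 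For bounded $g,v$ these are close to $1$ only near $\lambda^2=\pm1$, so bounded backgrounds only probe a long-wave sector.

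\textbf{What is missing.} The paper keeps the fields of order one, $p_{ij}=\sqrt2\,p$, $q_{ij}=\sqrt2\,q$. It takes singular bond variables of opposite signs: $g_{ij}=-\frac1h-p_{ij}q_{i+1,j}+hr_{ij}$ and $v_{ij}=\frac1h-p_{ij}q_{i,j+1}+hs_{ij}$. Then $\log z\approx-h(\lambda^2-\lambda^{-2})-\frac{h^2}{2}(\lambda^4-\lambda^{-4})$ and $\log w\approx h(\lambda^2-\lambda^{-2})-\frac{h^2}{2}(\lambda^4-\lambda^{-4})$ are small for generic $\lambda$. They show that the $n$- and $m$-steps shift the $x$-type phase oppositely and the $t$-type phase equally. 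Hence the slow variables must be $x=h(n-m)$, $t=h^2(n+m)$, i.e.\ $x_{10}=x+h$, $x_{01}=x-h$, $t_{10}=t_{01}=t+h^2$.

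(The paper says it uses \eqref{slow-vars}, but its intermediate results are exactly what these variables give: $-(r_x+s_x)h^2$ from the third equation, and the $\mathcal O(h)$ coefficients of the fourth and fifth equations. With \eqref{slow-vars} taken literally, the $\frac1h$ terms produce $2\sqrt2\,q_x$ and $2\sqrt2\,p_x$ at order $h^0$, the same wall you hit.)

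With these choices the factor $\frac1h$ lifts the second differences $p_{10}+p_{01}-2p_{00}$ and $2q_{11}-q_{01}-q_{10}$ to $\sqrt2\,h(2p_t+p_{xx})$ and $\sqrt2\,h(2q_t-q_{xx})$. This is the same order as $q_{01}-q_{10}=-2\sqrt2\,hq_x+\dots$, $p_{01}-p_{10}=-2\sqrt2\,hp_x+\dots$ and the cubic terms. The third equation gives $r_x+s_x=0$. The second, using the evolution equations, gives $(r-s)_x=(8p^2q^2-4p^2-4q^2)_x$, so $r=-s=4p^2q^2-2p^2-2q^2$. The fourth and fifth equations then give \eqref{DDNLS-sys} at order $h$.

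Finally, your first-integral instinct is sound but needs the right invariant. We have $\det{\rm M}_{00}=f_{00}^2\bigl[(g_{00}^2+1-p_{00}^2-q_{10}^2)+(\lambda^2+\lambda^{-2})(g_{00}-p_{00}q_{10})\bigr]$. Since $f,u$ enter only through $f_{01}u_{00}=u_{10}f_{00}$, only the ratio $I=(g_{00}-p_{00}q_{10})(g_{00}^2+1-p_{00}^2-q_{10}^2)^{-1}$ is conserved, not $g-pq$ and $g^2-p^2-q^2$ separately. The paper's $g$-ansatz with this $r$ is precisely the level set $I=-h+h^3+\mathcal O(h^4)$.
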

\begin{proof}
    Let
    \[
        p_{ij}=\sqrt{2}\,p(x_{ij},t_{ij}),\qquad
        q_{ij}=\sqrt{2}\,q(x_{ij},t_{ij}).
    \]
    In contrast to the Kaup--Newell case, no factor of $\sqrt{h}$ is introduced in the dependent variables. We make the following singular ansatz for the auxiliary functions:
    \begin{align*}
        g_{ij}&=-\frac{1}{h}-p_{ij}q_{i+1,j}+h r_{ij}+\mathcal{O}(h^2),\\
        v_{ij}&=\frac{1}{h}-p_{ij}q_{i,j+1}+h s_{ij}+\mathcal{O}(h^2),
    \end{align*}
    where $r_{ij}=r(x_{ij},t_{ij})$ and $s_{ij}=s(x_{ij},t_{ij})$. That is,
    \begin{align*}
        g_{00}&=-\frac{1}{h}-p_{00}q_{10}+h r_{00}+\mathcal{O}(h^2),&
        g_{01}&=-\frac{1}{h}-p_{01}q_{11}+h r_{01}+\mathcal{O}(h^2),\\
        v_{00}&=\frac{1}{h}-p_{00}q_{01}+h s_{00}+\mathcal{O}(h^2),&
        v_{10}&=\frac{1}{h}-p_{10}q_{11}+h s_{10}+\mathcal{O}(h^2).
    \end{align*}
    
    Substituting the above expressions into equation \eqref{d2-discrete} yields, from the third equation, $-(r_x+s_x)h^2+\mathcal{O}(h^3)=0$.
    Thus, $ r_x+s_x=0$. The second equation of \eqref{d2-discrete}, after using the above relation together with the resulting evolution equations, gives $r_x-s_x-\left(8p^2q^2-4p^2-4q^2\right)_x=0$. Therefore, we may choose
    $$
    r=4p^2q^2-2p^2-2q^2,\qquad s=-4p^2q^2+2p^2+2q^2.
    $$
    
    With this choice, the fourth and fifth equations of \eqref{d2-discrete} give, respectively,
    \[
        \sqrt{2}\left[
        2q_t-q_{xx}+4(q^2p)_x-2p_x
        \right]h+\mathcal{O}(h^2)=0,
    \]
    and
    \[
        \sqrt{2}\left[
        2p_t+p_{xx}+4(p^2q)_x-2q_x
        \right]h+\mathcal{O}(h^2)=0.
    \]
    Setting the coefficients of $h$ equal to zero, we obtain
    \[
        2p_t+p_{xx}+4(p^2q)_x-2q_x=0,
        \qquad
        2q_t-q_{xx}+4(q^2p)_x-2p_x=0.
    \]
    Equivalently,
    \[
        p_t=-\frac12p_{xx}-2(p^2q)_x+q_x,
        \qquad
        q_t=\frac12q_{xx}-2(q^2p)_x+p_x,
    \]
    which is precisely system \eqref{DDNLS-sys}. Hence the continuum limit of \eqref{d2-discrete} is the Mikhailov--Shabat--Yamilov deformation of the derivative NLS system.
\end{proof}

\section{Noncommutative NLS-type equations and solutions}\label{NC-NLS-eqs}
In this section, we recall some noncommutative versions of the Ablowitz--Kaup--Newell--Segur (AKNS) NLS system and the Kaup--Newell DNLS system \cite{Kaup-Newell} together with their Darboux--B\"acklund transformations \cite{KRX} and the Mikhailov--Shabat--Yamilov deformation of the DNLS system \cite{MSY, Sokor-Nikitina}. Moreover, we present a noncommutative version of the deformation of the DNLS system with its Lax pair. We construct a Darboux matrix and a B\"acklund transformation for it.

\subsection{Noncommutative NLS-type systems}\label{NCsystemsNLS}
Let $\mathcal{R}$ be a noncommutative division ring, and let $Z(\mathcal{R})$ denote its centre. Assume also that the functions $\bm{p}=\bm{p}(x,t)$ and $\bm{q}=\bm{q}(x,t)$ are elements of $\mathcal{R}$. 

The construction of noncommutative analogues of the AKNS and the DNLS systems is relatively straightforward, and it is almost done by replacing $p$ and $q$ in their Lax pairs \eqref{Lax-NLS} and \eqref{Lax-DDNLS}, respectively. However, the construction of the noncommutative version of system \eqref{DDNLS-sys} is more subtle, and we shall construct it in this section.

In particular, recall the noncommutative version of system \eqref{NLS-eq}, namely the system:
\begin{equation}\label{NLS-eq-NC}
\bm{p}_t=\frac{1}{2}\bm{p}_{xx}-4\bm{p}\bm{q}\bm{p}, ~~~ \bm{q}_t=-\frac{1}{2}\bm{q}_{xx}+4\bm{q}\bm{p}\bm{q},
\end{equation}
which was derived in \cite{OS} within the classification of integrable evolutionary equations on associative algebras. The first matrix generalization of the NLS equation is due to Manakov \cite{Manakov}. System \eqref{NLS-eq-NC} is equivalent to the zero-curvature condition \eqref{zero-curvature}, where
\begin{equation}\label{Lax-NLS-NC}
    {\rm{U}}=\lambda\begin{pmatrix}
   1 & 0\\
   0 & -1
\end{pmatrix}
+
\begin{pmatrix}
   0 & 2\bm{p}\\
   2\bm{q} & 0
\end{pmatrix}, \quad
 {\rm{V}}=\lambda^2 \begin{pmatrix}
   1 & 0\\
   0 & -1
\end{pmatrix}
+\lambda \begin{pmatrix}
   0 & 2\bm{p}\\
   2\bm{q} & 0
\end{pmatrix}
+
\begin{pmatrix}
   -2\bm{p}\bm{q} & \bm{p}_x\\
   -\bm{q}_x & 2\bm{q}\bm{p}
\end{pmatrix}.
\end{equation}  

Moreover, recall the noncommutative derivative NLS equation  \cite{OS} reads
\begin{equation}\label{DNLS-system-NC}
        \bm{p}_t = \frac{1}{2} \bm{p}_{xx}-2\,(\bm{p} \bm{q} \bm{p})_x,\quad \bm{q}_t = -\frac{1}{2}\bm{q}_{xx}-2\,(\bm{q} \bm{p} \bm{q})_x,
\end{equation}
and it is equivalent to equation \eqref{zero-curvature}, where the matrices ${\rm U}$ and ${\rm V}$ are given as follows:
\begin{subequations}\label{Lax-NC-DNLS}
    \begin{align}
          {\rm U}&=\lambda^2\begin{pmatrix}
1 & 0\\
0 & -1
\end{pmatrix}
+
\lambda
\begin{pmatrix}
0 & 2\bm{p}\\
2\bm{q} & 0
\end{pmatrix},
\label{Lax-NC-DNLS-a}\\
{\rm V}&=
\lambda^4
\begin{pmatrix}
1 & 0\\
0 & -1
\end{pmatrix}
+
\lambda^3
\begin{pmatrix}
0 & 2\bm{p}\\
2\bm{q} & 0
\end{pmatrix}
-2\lambda^2
\begin{pmatrix}
\bm{p}\bm{q} & 0\\
0 & -\bm{q}\bm{p}
\end{pmatrix}
+
\lambda
\begin{pmatrix}
0 & \bm{p}_x-4\bm{p}\bm{q}\bm{p}\\
-\bm{q}_x-4\bm{q}\bm{p}\bm{q} & 0
\end{pmatrix}.\label{Lax-NC-DNLS-b}
    \end{align}
\end{subequations}

As can be seen above, the noncommutative NLS and DNLS systems \eqref{NLS-eq-NC} and \eqref{DNLS-system-NC} are obtained by replacing in the Lax pairs \eqref{Lax-NLS} and \eqref{Lax-DNLS}, of their commutative analogues, $p$ and $q$ by ${\bm p}$ and ${\bm q}$, respectively, also replacing the terms $pq$ that appear on the diagonals of matrices ${\rm V}$ in \eqref{Lax-NLS} and \eqref{Lax-DNLS-b} by ${\bm p}{\bm q}$ and ${\bm q}{\bm p}$ on the opposite sides along the diagonals, and finally replacing terms $p^2q$ and $q^2p$ in \eqref{Lax-DNLS-b} by ${\bm p}{\bm q}{\bm p}$ and ${\bm q}{\bm p}{\bm q}$, respectively. However, arbitrary noncommutative replacements in the Lax pair \eqref{Lax-DDNLS} of the deformation of the DNLS system \eqref{DDNLS-sys} impose unwanted commutativity relations between ${\bm p}$ and ${\bm q}$ after substitution of the associated `noncommutative' Lax pair into the zero-curvature condition. 

In order to construct a noncommutative version of system \eqref{DDNLS-sys} without any commutativity relations between ${\bm p}$ and ${\bm q}$, it becomes necessary to introduce nonlocal variables in its $t$-part of the Lax pair, namely matrix ${\bm V}$. In particular, we have the following.

\begin{theorem}(Noncommutative deformation of the DNLS system)\label{NC-DDNLS-th}
    Let the functions $\bm{p}=\bm{p}(x,t)$ and $\bm{q}=\bm{q}(x,t)$ be elements of a noncommutative division ring $\mathcal{R}$. Then, the following noncommutative deformation of the DNLS system   
\begin{subequations}\label{system-def-DNLS}  
    \begin{align}
     \bm p_t &=
-\frac12 \bm p_{xx}
-(\bm p\bm S)_x
+\bm q_x,
\\
\bm q_t &=
\frac12 \bm q_{xx}
-(\bm q\bm S)_x
+\bm p_x, 
\end{align}
\end{subequations}
where $\bm S=\bm q\bm p+\bm p\bm q$, has a Lax pair given by
\begin{equation}\label{Lax-DDNLS-NC} 
{\rm U}=-\lambda^2
\begin{pmatrix}
1 & 0\\
0 & -1
\end{pmatrix}
-\lambda
\begin{pmatrix}
0 & 2{\bm p}\\
2{\bm q} & 0
\end{pmatrix}
-\lambda^{-1}
\begin{pmatrix}
0 & 2{\bm q}\\
2{\bm p} & 0
\end{pmatrix}
+\lambda^{-2}
\begin{pmatrix}
1 & 0\\
0 & -1
\end{pmatrix},\qquad
{\rm V}=\begin{pmatrix}
{\rm V}_{11} & {\rm V}_{12}\\
{\rm V}_{21} & {\rm V}_{22}
\end{pmatrix},
\end{equation}
where the off–diagonal entries of {\rm V} are given by
\begin{equation}
\label{def-DNLS-V-offdiag-corr}
{\rm V}_{12}=-2\lambda^3\bm p-2\lambda^{-3}\bm q+\lambda \bm A+\lambda^{-1}\bm B,
\qquad
{\rm V}_{21}=-2\lambda^3\bm q-2\lambda^{-3}\bm p+\lambda \bm B+\lambda^{-1}\bm A,
\end{equation}
where $\bm A=\bm p_x+2\bm p\bm S$ and $\bm B=-\bm q_x+2\bm q\bm S$, and the diagonal entries of {\rm V} are given by
\begin{equation}
\label{def-DNLS-V-diag-corr}
{\rm V}_{11}=-\lambda^4+\lambda^2\alpha+\beta+\lambda^{-2}\gamma+\lambda^{-4},\qquad {\rm V}_{22}=\lambda^4+\lambda^2\gamma+\beta+\lambda^{-2}\alpha-\lambda^{-4},
\end{equation}
where  $\alpha,\beta,\gamma\in\mathfrak{R}$ satisfy the algebraic relations
\begin{subequations}\label{eq-alpha-gamma-p}
\begin{align}
\alpha\bm p-\bm p\gamma=2(\bm p\bm S+\bm q),&\quad \gamma\bm q-\bm q\alpha=-2(\bm q\bm S+\bm p),\label{eq-alpha-gamma-p-a}\\
[\beta,\bm p]=\bm q\gamma-\alpha\bm q+2(\bm q\bm S+\bm p),&\quad [\beta,\bm q]=\bm p\alpha-\gamma\bm p-2(\bm p\bm S+\bm q),\label{eq-alpha-gamma-p-b}
\end{align}
\end{subequations}
and the differential equations
\begin{subequations}\label{eq-alpha-beta-gamma-x}
\begin{align}
\alpha_x&=-4\bm p^2+4\bm q^2-2\bm p\bm B+2\bm A\bm q,\label{eq-alpha-x}\\
\beta_x&=2\bm A\bm p-2\bm p\bm A+2\bm B\bm q-2\bm q\bm B,\label{eq-beta-x}\\
\gamma_x&=4\bm p^2-4\bm q^2-2\bm q\bm A+2\bm B\bm p .\label{eq-gamma-x}
\end{align}
\end{subequations}
\end{theorem}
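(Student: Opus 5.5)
The argument is a direct check of the zero-curvature condition \eqref{zero-curvature}, ${\rm U}_t-{\rm V}_x+[{\rm U},{\rm V}]=0$, for the pair \eqref{Lax-DDNLS-NC}. Since ${\rm U}$ and ${\rm V}$ are Laurent polynomials in $\lambda$, we expand both sides and collect the coefficients of $\lambda^k$, $k=-6,\dots,6$, separately in the diagonal and off-diagonal blocks. All products are kept in the order in which they appear, since entries lie in $\mathfrak{R}$. The matrix ${\rm U}$ is invariant under both involutions $\mathfrak{s}_1,\mathfrak{s}_2$ of \eqref{D2-symmetry}. The ansatz for ${\rm V}$ is built to respect the same symmetries: ${\rm V}_{12}\leftrightarrow{\rm V}_{21}$ and ${\rm V}_{11}\leftrightarrow{\rm V}_{22}$ under $\lambda\to\lambda^{-1}$, with only odd powers off the diagonal and even powers on it. So it suffices to check the coefficients of $\lambda^k$ with $k\ge 0$ in the $(1,1)$ and $(1,2)$ entries. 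The coefficients with $k<0$, and those of the other two entries, follow by the $\sigma_1$-conjugation symmetry. One has to check that this symmetry argument really survives noncommutativity; it does, because conjugation by $\sigma_1$ only permutes entries and does not reverse products.

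\textbf{Off-diagonal entries.} Write ${\rm U}=-(\lambda^2-\lambda^{-2})\sigma_3+{\rm P}(\lambda)$. The $(1,2)$ entry of $[{\rm U},{\rm V}]$ is
$$-2(\lambda^2-\lambda^{-2}){\rm V}_{12}+{\rm U}_{12}{\rm V}_{22}-{\rm V}_{11}{\rm U}_{12}.$$
We match powers from the top down.
\begin{itemize}
\item At $\lambda^5$ the terms cancel by the choice of the leading coefficients $-2\bm p$ and $\pm1$.
\item At $\lambda^3$ we obtain exactly the first relation of \eqref{eq-alpha-gamma-p-a}, $\alpha\bm p-\bm p\gamma=2(\bm p\bm S+\bm q)$, once $\bm A=\bm p_x+2\bm p\bm S$ is inserted.
\item At $\lambda^1$ we obtain the evolution equation ${\rm U}_{12,t}$: $-2\bm p_t=\bm A_x+(\text{terms in }\beta,\alpha,\gamma,\bm B)$. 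Here the commutator $[\beta,\bm p]$ appears and is eliminated using \eqref{eq-alpha-gamma-p-b}. What remains is $-2\bm p_t=\bm p_{xx}+2(\bm p\bm S)_x-2\bm q_x$, which is the first equation of \eqref{system-def-DNLS}.
\item The analogous $\lambda^{-1}$ and $(2,1)$ computations give the $\bm q$ equation and the remaining algebraic relations in \eqref{eq-alpha-gamma-p}.
\end{itemize}

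\textbf{Diagonal entries.} The $(1,1)$ entry of $[{\rm U},{\rm V}]$ is ${\rm U}_{12}{\rm V}_{21}-{\rm V}_{12}{\rm U}_{21}$, and ${\rm U}$ has zero diagonal derivative in $t$. The $\lambda^6$ and $\lambda^4$ coefficients cancel identically, using $-2\bm p\cdot(-2\bm q)$ against $(-2\bm p)(-2\bm q)$. The coefficients of $\lambda^2$ and $\lambda^0$ give precisely $\alpha_x$ and $\beta_x$ in \eqref{eq-alpha-x}–\eqref{eq-beta-x}, and the $(2,2)$ entry gives $\gamma_x$. For instance, at $\lambda^2$ one gets
$$\alpha_x=(-2\bm p)\bm B-(-2\bm q)\cdot(\text{$\lambda^{-1}$ terms})\ldots,$$
which yields $-4\bm p^2+4\bm q^2-2\bm p\bm B+2\bm A\bm q$.

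\textbf{Consistency.} Finally, we must show that the differential equations \eqref{eq-alpha-beta-gamma-x} are compatible with the algebraic constraints \eqref{eq-alpha-gamma-p}. That is, differentiating \eqref{eq-alpha-gamma-p-a} in $x$ and substituting $\alpha_x,\gamma_x,\bm p_x=\bm A-2\bm p\bm S$ should give an identity modulo \eqref{eq-alpha-gamma-p-b}. Without this, the statement could be vacuous.

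I expect this last step, together with the bookkeeping at order $\lambda^1$, to be the main obstacle. This is where the nonlocality enters: in the commutative case $\alpha=-\gamma=2pq$ and $\beta=0$ solve everything, but here $\alpha,\beta,\gamma$ are genuinely nonlocal. My approach would be to check the differentiated constraints by direct expansion in the free words in $\bm p,\bm q,\bm p_x,\bm q_x$. Failing that, I would read the theorem as asserting only that the zero-curvature equation is equivalent to \eqref{system-def-DNLS} together with \eqref{eq-alpha-gamma-p}–\eqref{eq-alpha-beta-gamma-x}, which is what the coefficient matching establishes.
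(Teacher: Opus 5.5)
Your coefficient matching is the paper's own argument, and your coefficient-by-coefficient computations agree with it. The paper also expands the $(1,2)$ entry of ${\rm U}_t-{\rm V}_x+[{\rm U},{\rm V}]$ in powers of $\lambda$. It reads off \eqref{eq-alpha-gamma-p-a} at $\lambda^{\pm3}$, and \eqref{system-def-DNLS} together with \eqref{eq-alpha-gamma-p-b} at $\lambda^{\pm1}$. It gets \eqref{eq-alpha-beta-gamma-x} from the diagonal, by defining ${\rm V}^{11},{\rm V}^{22}$ through ${\rm V}^{11}_x={\rm U}^{12}{\rm V}^{21}-{\rm V}^{12}{\rm U}^{21}$ and ${\rm V}^{22}_x={\rm U}^{21}{\rm V}^{12}-{\rm V}^{21}{\rm U}^{12}$, so that the diagonal equations hold identically.

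One correction to your set-up concerns the symmetry reduction. Conjugation by $\sigma_1$ with $\lambda\mapsto\lambda^{-1}$ sends the coefficient of $\lambda^{k}$ in the $(1,2)$ entry to that of $\lambda^{-k}$ in the $(2,1)$ entry, and similarly $(1,1)$ to $(2,2)$. So it lets you discard the second row, not the negative powers. Checking only $k\ge0$ in the first row would lose the $\bm q$-equation ($\lambda^{-1}$), the second relation of \eqref{eq-alpha-gamma-p-a} ($\lambda^{-3}$) and $\gamma_x$ ($\lambda^{-2}$). You do use these later, so nothing breaks. Still, the reduction should be stated as in the paper: all powers of $\mathcal F_{11}$ and $\mathcal F_{12}$, then $\mathcal F_{21}$ and $\mathcal F_{22}$ by symmetry.

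The substantive problem is your consistency step. The paper does not attempt it, and it cannot succeed as planned, because the commutative heuristic behind it is false.

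\begin{itemize}
\item The statement's $\bm A=\bm p_x+2\bm p\bm S$ and $\bm B=-\bm q_x+2\bm q\bm S$ lack the $-q$ and $-p$ terms of the commutative ${\rm V}$ in \eqref{Lax-DDNLS}.
\item So in the commutative reduction \eqref{eq-alpha-gamma-p-a} reads $(\alpha-\gamma)p=4p^2q+2q$ and $(\alpha-\gamma)q=4pq^2+2p$. The choice $\alpha=-\gamma=2pq$ violates both. Multiplying the first by $q$, the second by $p$, and subtracting forces $p^2=q^2$.
\item Likewise \eqref{eq-alpha-x} becomes $\alpha_x=2(pq)_x+4(q^2-p^2)$, so $\alpha$ is nonlocal even commutatively.
\item Your planned check fails there too. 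Differentiate the first relation, insert \eqref{eq-alpha-x} and \eqref{eq-gamma-x}, multiply by $p$, and reduce with the relation. This leaves $8p^2(q^2-p^2)+2(qp_x-pq_x)$, which is not zero modulo \eqref{eq-alpha-gamma-p}. Hence the check cannot be an identity in free words either.
\end{itemize}

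Your worry about vacuity is therefore a genuine caveat about the theorem, since the constraints restrict the admissible $(\bm p,\bm q)$. It is not something the proof can discharge. What the coefficient matching proves, and all that the paper's proof proves, is your fallback reading: zero curvature is equivalent to \eqref{system-def-DNLS} together with \eqref{eq-alpha-gamma-p} and \eqref{eq-alpha-beta-gamma-x}. State that as the result and drop the compatibility claim.
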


\begin{proof}
We need to prove that the zero-curvature condition
$$
\mathcal{F}:=\rm{U}_t-\rm{V}_x+[\rm{U},\rm{V}]=0,
$$
in view of relations \eqref{eq-alpha-gamma-p} and \eqref{eq-alpha-beta-gamma-x}.

    Let $\rm{V}^{11}$ and $\rm{V}^{22}$ be solutions of the following differential equations
    \begin{equation}\label{Vii-ODEs}
        \rm{V}^{11}_x=\rm{U}^{12}\rm{V}^{21}-\rm{V}^{12}\rm{U}^{21},\quad \rm{V}^{22}_x=\rm{U}^{21}\rm{V}^{11}-\rm{V}^{21}\rm{U}^{12}.
    \end{equation}
Then, the diagonal equations, $\mathcal{F}_{ii}$, $i=1,2$, of the zero-curvature condition read:
\begin{align*}
    \mathcal{F}_{11}&=\rm{U}^{11}_t-\rm{V}^{11}_x+\rm{U}^{11}\rm{V}^{11}+\rm{U}^{12}\rm{V}^{21}-\rm{V}^{11}\rm{U}^{11}-\rm{V}^{12}\rm{U}^{21},\\
    \mathcal{F}_{22}&=\rm{U}^{22}_t-\rm{V}^{22}_x+\rm{U}^{21}\rm{V}^{12}+\rm{U}^{22}\rm{V}^{22}-\rm{V}^{21}\rm{U}^{12}-\rm{V}^{22}\rm{U}^{22},
\end{align*}
which are identically satisfied in view of $\rm{U}^{11}_t=\rm{U}^{22}_t=0$ and the definitions \eqref{Vii-ODEs} of $\rm{V}^{ii}$, $i=1,2$.

Next, let $\mathcal{K}= \rm{U}^{12}_t-\rm{V}^{12}_x+2\rm{U}^{11}\rm{V}^{12}$ and $\Theta=\rm{U}^{12}\rm{V}^{22}-\rm{V}^{11}\rm{U}^{12}$. Then, the equation coming from the $\mathcal{F}_{12}$ element is given by:
\begin{align*}
    \mathcal{F}_{12}=&\rm{U}^{12}_t-\rm{V}^{12}_x+\rm{U}^{11}\rm{V}^{12}+\rm{U}^{12}\rm{V}^{22}-\rm{V}^{11}\rm{U}^{12}-\rm{V}^{12}\rm{U}^{22}\\
    =&\rm{U}^{12}_t-\rm{V}^{12}_x+2\rm{U}^{11}\rm{V}^{12}+\rm{U}^{12}\rm{V}^{22}-\rm{V}^{11}\rm{U}^{12}\\
    =&\mathcal{K}+\Theta.
\end{align*}
The term $\mathcal{K}$ is written in details as:
\begin{equation}\label{kappa}
    \mathcal{K}=4{\bm p} \lambda^5+(2{\bm p}_x-2A)\lambda^3-(2{\bm p}_t+A_x+2B+4{\bm p})\lambda-(2{\bm q}_t+B_x-2A-4{\bm q})\lambda^{-1}+(2{\bm q}_x+2B)\lambda^{-3}-4{\bm q}\lambda^{-5}.
\end{equation}

By definition \eqref{Vii-ODEs} of $\rm{V}^{ii}$, $i=1,2$, it follows that the latter are Laurent polynomials with coefficients $\left\{\lambda^{\pm 4},\lambda^{\pm 2}\right\}$. Thus,  $\rm{V}^{ii}$, $i=1,2$, are of the following form
\begin{subequations}\label{Vii}
    \begin{align}
    {\rm V}^{11}&=a_4\lambda^4+a_2\lambda^2+a_0+a_{-2}\lambda^{-2}+a_{-4}\lambda^{-4},\label{V11}\\
    {\rm V}^{22}&=d_4\lambda^4+d_2\lambda^2+d_0+d_{-2}\lambda^{-2}+d_{-4}\lambda^{-4}.\label{V22}
\end{align}
\end{subequations}

In this notation of $\rm{V}^{ii}$, $i=1,2$, $\Theta$ can be written as:
\begin{align}
    \Theta=&2(a_4{\bm p}-{\bm p}d_4)\lambda^5+2(a_2{\bm p}+a_4{\bm q}-{\bm p}d_2-{\bm q}d_4)\lambda^3+2(a_0{\bm p}+a_2{\bm q}-{\bm p}d_0-{\bm q}d_2)\lambda+\nonumber\\&2(a_0{\bm q}+a_{-2}{\bm p}-{\bm q}d_0-{\bm p}d_{-2})\lambda^{-1}+2(a_{-2}{\bm q}+a_{-4}{\bm p}-{\bm q}d_{-2}-{\bm p}d_{-4})\lambda^{-3}+2(a_{-4}{\bm q}-{\bm q}d_{-4})\lambda^{-5}.\label{theta-laurent}
\end{align}
Moreover, from \eqref{Vii-ODEs} and \eqref{Vii} we obtain:
\begin{subequations}\label{Vii-a-d}
    \begin{align}
\rm{V}^{11}_x&=a_{4,x}\lambda^4+a_{2,x}\lambda^2+a_{0,x}+a_{-2,x}\lambda^{-2}+a_{-4,x}\lambda^{-4}=\rm{U}^{12}\rm{V}^{21}-\rm{V}^{12}\rm{V}^{21},\label{Vii-a-d-a}\\
\rm{V}^{22}_x&=d_{4,x}\lambda^4+d_{2,x}\lambda^2+d_{0,x}+d_{-2,x}\lambda^{-2}+d_{-4,x}\lambda^{-4}=\rm{U}^{21}\rm{V}^{11}-\rm{V}^{21}\rm{V}^{12}.\label{Vii-a-d-b}
\end{align}
\end{subequations}

Equation \eqref{Vii-a-d-a} is equivalent to the system
\begin{align*}
    \lambda^4:&\quad a_{4,x}=0,\\
    \lambda^2:&\quad a_{2,x}=-4\bm{p}^2+4\bm{q}^2+2A\bm{q}-2\bm{p}B,\\
    \lambda^0:&\quad a_{0,x}=-2\bm{p}A-2\bm{q}B+2A\bm{p}+2B\bm{q},\\
    \lambda^{-2}:&\quad a_{-2,x}=4\bm{p}^2-4\bm{q}^2-2\bm{q}A+2B\bm{p},\\
    \lambda^{-4}:&\quad a_{-4,x}=0,
\end{align*}
whereas equation \eqref{Vii-a-d-b} is equivalent to the system
\begin{align*}
    \lambda^4:&\quad d_{4,x}=0,\\
    \lambda^{2}:&\quad d_{2,x}=4\bm{p}^2-4\bm{q}^2-2\bm{q}A+2B\bm{p},\\
    \lambda^0:&\quad d_{0,x}=-2\bm{p}A-2\bm{q}B+2A\bm{p}+2B\bm{q},\\
    \lambda^{-2}:&\quad d_{-2,x}=-4\bm{p}^2+4\bm{q}^2+2A\bm{q}-2\bm{p}B\\
    \lambda^{-4}:&\quad d_{-4,x}=0.
\end{align*}
That is, $a_{\pm 4}$ and $d_{\pm 4}$ are constants; we choose $a_4=d_{-4}=-1$ and $\alpha_{-4}=d_4=1$. Moreover, it follows from the above that $a_{0,x}=d_{0,x}$, $a_{2,x}=d_{-2,x}$, and $a_{-2,x}=d_{2,x}$; we set $a_2=d_{-2}=\alpha$, $a_0=d_0=\beta$, and $a_{-2}=d_{2}=\gamma$. Therefore, $\Theta$ from \eqref{theta-laurent} can be now written as:
\begin{align}
    \Theta=&-4{\bm p}\lambda^5+2(\alpha{\bm p}-2{\bm q}-{\bm p}\gamma)\lambda^3+2(\beta{\bm p}+\alpha{\bm q}-{\bm p}\beta-{\bm q}\gamma)\lambda+\nonumber\\&2(\beta{\bm q}+\gamma{\bm p}-{\bm q}\beta-{\bm p}\alpha)\lambda^{-1}+2(\gamma{\bm q}+2{\bm p}-{\bm q}\alpha)\lambda^{-3}+4{\bm q}\lambda^{-5}.\label{theta-laurent-new}
\end{align}

Therefore, from \eqref{kappa} and \eqref{theta-laurent-new} follows that:
\begin{align}
    \mathcal{F}_{12}=&2 (\alpha{\bm p}-{\bm p}\gamma-2{\bm p}S-2{\bm q})\lambda^3-(2{\bm p}_t+A_x+2B+4{\bm p}+2 \beta{\bm p}+2\alpha{\bm q}-2{\bm p}\beta-2{\bm q}\gamma)\lambda-\nonumber\\
    &(2{\bm q}_t+B_x-2A-4{\bm q}+2 \beta{\bm q}+2\gamma{\bm p}-2{\bm q}\beta-2{\bm p}\alpha)\lambda^{-1}+2(\gamma{\bm q}-{\bm q}\alpha+2{\bm q}S+2{\bm p})\lambda^{-3}=0,
\end{align}
for any $\lambda\in\mathbb{C}$. The coefficients of $\lambda^{\pm 3}$ are identically zero due to \eqref{eq-alpha-gamma-p-a}. Then, the coefficients $\lambda^{\pm 1}$ are identically zero if and only if system \eqref{system-def-DNLS} holds together with the restrictions \eqref{eq-alpha-gamma-p-b} on the Laurent coefficients of ${\rm V}_{ii}$, $i=1,2$. Exactly the same equations we obtain from $\mathcal{F}_{21}=0$, due to symmetry. 

Finally, equations \eqref{Vii-ODEs} can be written equivalently as \eqref{eq-alpha-beta-gamma-x}.
\end{proof}

\subsection{Solutions of the noncommutative NLS-type systems}
Here, we consider the noncommutative ring of matrices $n\times n$. The (matrix) noncommutative systems \eqref{NLS-eq-NC}, \eqref{DNLS-system-NC} and \eqref{system-def-DNLS} admit nontrivial solutions. The simplest way to demonstrate this is to `lift' solutions to the (scalar) commutative analogues of these systems to matrix solutions that do not commute with each other.

We demonstrate how to lift scalar solutions to matrix solutions. Start with the two-parametric solution
$$
p(x,t)=\frac{e^{x+\frac12t+\theta_0}}
    {1+e^{2x+\theta_0+\eta_0}},
    \qquad
    q(x,t)=-\frac{e^{x-\frac12t+\eta_0}}
    {1+e^{2x+\theta_0+\eta_0}},
$$
of the AKNS system \eqref{NLS-eq}. This can be lifted to the matrix solution
$$
    \bm p(x,t)=\frac{e^{x+\frac12t+\theta_0}}
    {1+e^{2x+\theta_0+\eta_0}}{\rm A}_n,
    \quad
    \bm q(x,t)=-\frac{e^{x-\frac12t+\eta_0}}
    {1+e^{2x+\theta_0+\eta_0}}{\rm B}_n,\qquad {\rm A}_n=\sum_{i=1}^{n-1}E_{i,i+1},
    \quad
    {\rm B}_n=\sum_{i=1}^{n-1}E_{i+1,i},
$$
where $E_{ij}$ is the standard matrix unit, of the noncommutative AKNS system \eqref{NLS-eq-NC}. By construction, $\bm p(x,t)$ and $\bm q(x,t)$ do not commute, since
$$
{\rm A}_n{\rm B}_n=\diag(1,\dots,1,0), \qquad {\rm A}_n{\rm B}_n=\diag(0,1,\dots,1),
$$
so ${\rm A}_n{\rm B}_n\neq {\rm B}_n{\rm A}_n$, for $n\geq2$.

Next, it can be  verified that the travelling wave functions
$$
p(x,t)=ae^{kx+\omega t},\qquad q(x,t)=be^{-kx-\omega t},
$$
are solutions of the commutative Kaup--Newell system \eqref{DNLS-system} if $\omega=\frac{k^2}{2}-2abk$. This solution can be lifted to a matrix solution
\begin{equation}\label{eq:DNLS-matrix-sol}
    \bm p(x,t)=a e^{kx+(\frac{k^2}{2}-2abk)t}{\rm A}_n,
    \quad
    \bm q(x,t)=b e^{-kx-(\frac{k^2}{2}-2abk)t}{\rm B}_n, \qquad {\rm A}_n=\sum_{i=1}^{n-1}E_{i,i+1},
    \quad
    {\rm B}_n=\sum_{i=1}^{n-1}E_{i+1,i}.
\end{equation}

However, we can construct noncommutative solutions that are not necessarily lifts of scalar ones. In particular, for the noncommutative Mikhailov--Shabat--Yamilov system \eqref{system-def-DNLS}, we have the following.

\begin{proposition}
The noncommutative Mikhailov--Shabat--Yamilov system \eqref{system-def-DNLS} admits the travelling wave solution
    \begin{equation}\label{eq:def-matrix-sol}
    \bm p(x,t)=e^{kx+\omega t}{\rm A},
    \qquad
    \bm q(x,t)=\alpha e^{kx+\omega t} {\rm A}+h{\rm B}, \qquad  \alpha=\frac{\omega+\frac12k^2}{k},\quad \omega^2=k^2+\frac{k^4}{4},
    \quad k\neq0,
\end{equation}
where $h$ is a constant and ${\rm A},{\rm B}$ are constant $n\times n$ matrices satisfying
\begin{equation}\label{eq:nilpotent-algebra}
    {\rm A}^2={\rm B}^2=0,
    \qquad
    {\rm A}{\rm B}+{\rm B}{\rm A}=0,
    \qquad
    {\rm A}{\rm B}\neq0.
\end{equation}
\end{proposition}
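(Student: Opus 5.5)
The plan is to verify the claim by direct substitution into \eqref{system-def-DNLS}. The key observation is that the algebraic relations \eqref{eq:nilpotent-algebra} make the cubic nonlinearity vanish identically, so the system collapses to a linear one. Throughout I write $E=e^{kx+\omega t}$, so that $\bm p=E{\rm A}$ and $\bm q=\alpha E{\rm A}+h{\rm B}$, with $E$ scalar and hence central.

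First I compute $\bm S=\bm q\bm p+\bm p\bm q$. Expanding gives
$$
\bm S=\alpha E^2{\rm A}^2+hE\,{\rm B}{\rm A}+\alpha E^2{\rm A}^2+hE\,{\rm A}{\rm B}=2\alpha E^2{\rm A}^2+hE({\rm A}{\rm B}+{\rm B}{\rm A}).
$$
This equals $0$ by ${\rm A}^2=0$ and ${\rm A}{\rm B}+{\rm B}{\rm A}=0$. Hence $(\bm p\bm S)_x=(\bm q\bm S)_x=0$, and \eqref{system-def-DNLS} reduces to the linear system
$$
\bm p_t=-\tfrac12\bm p_{xx}+\bm q_x,\qquad \bm q_t=\tfrac12\bm q_{xx}+\bm p_x .
$$

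Next I substitute into this linear system. The term $h{\rm B}$ is constant, so it drops out of every derivative, and each equation becomes a scalar multiple of $E{\rm A}$. I then compare coefficients.
\begin{itemize}
\item The first equation gives $\omega=-\tfrac12k^2+\alpha k$, which is exactly $\alpha=(\omega+\tfrac12k^2)/k$. Here $k\neq0$ is needed.
\item The second equation gives $\alpha\omega=\tfrac12\alpha k^2+k$, that is, $\alpha(\omega-\tfrac12k^2)=k$.
\end{itemize}
Inserting the expression for $\alpha$ into the second condition yields $(\omega+\tfrac12k^2)(\omega-\tfrac12k^2)=k^2$. This is precisely $\omega^2=k^2+\tfrac14k^4$. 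So both equations hold under the stated conditions.

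Finally, I confirm that the solution is genuinely noncommutative and not vacuous. We have $\bm p\bm q-\bm q\bm p=hE({\rm A}{\rm B}-{\rm B}{\rm A})=2hE\,{\rm A}{\rm B}$, which is nonzero whenever $h\neq0$, because ${\rm A}{\rm B}\neq0$. To show that matrices satisfying \eqref{eq:nilpotent-algebra} exist, I would exhibit them explicitly. One choice is the left-multiplication operators by the two generators $e_1,e_2$ of the Grassmann algebra $\Lambda(\mathbb{C}^2)$, acting on the basis $1,e_1,e_2,e_1e_2$ of this $4$-dimensional space. These satisfy $e_i^2=0$ and $e_1e_2=-e_2e_1\neq0$, giving $4\times4$ matrices ${\rm A},{\rm B}$ with the required relations.

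No step is a real obstacle. The only point requiring care is the vanishing of $\bm S$. That computation is where the anticommutation relation is used, and ordering the products correctly is essential there.
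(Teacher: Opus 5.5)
Your proof is correct and follows the paper's argument: ${\rm A}^2=0$ and ${\rm A}{\rm B}+{\rm B}{\rm A}=0$ make $\bm S=0$, so the system becomes linear, and matching the coefficients of $e^{kx+\omega t}{\rm A}$ gives the paper's two conditions and the dispersion relation $\omega^2=k^2+\frac{k^4}{4}$. Your additional checks, namely $[\bm p,\bm q]=2h\,e^{kx+\omega t}{\rm A}{\rm B}\neq0$ for $h\neq0$ and the Grassmann-algebra construction of ${\rm A},{\rm B}$, go beyond what the paper writes and are also correct.
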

\begin{proof}
    We seek a solution of the form
    $$
      \bm p(x,t)=f(x,t){\rm A}, \qquad 
      \bm q(x,t)=g(x,t){\rm A}+h{\rm B},
    $$
    where $h$ is constant and ${\rm A},{\rm B}$ satisfy \eqref{eq:nilpotent-algebra}. Then
    $$
    \bm S=\bm q\bm p+\bm p\bm q
    =gf{\rm A}^2+hf{\rm B}{\rm A}
    +fg{\rm A}^2+fh{\rm A}{\rm B}
    =fh({\rm B}{\rm A}+{\rm A}{\rm B})=0.
    $$
    Hence, system \eqref{system-def-DNLS} reduces to the linear system
    $$
    \bm p_t=-\frac12\bm p_{xx}+\bm q_x,
    \qquad
    \bm q_t=\frac12\bm q_{xx}+\bm p_x .
    $$
    Substituting the above ansatz gives
    $$
    f_t=-\frac12 f_{xx}+g_x,
    \qquad
    g_t=\frac12 g_{xx}+f_x .
    $$
    Now, we seek solutions of the form $f=e^{kx+\omega t}$ and $g=\alpha e^{kx+\omega t}$. The first of the above equations gives $\omega=-\frac12 k^2+\alpha k$, and therefore, since $k\neq0$, $\alpha=\frac{\omega+\frac12 k^2}{k}$. The  second equation gives $\alpha\omega=\frac12\alpha k^2+k$. Therefore
    $$
    \omega^2=k^2+\frac{k^4}{4},
    $$
    and
    $$
    \bm p(x,t)=e^{kx+\omega t}{\rm A},
    \qquad
    \bm q(x,t)=\alpha e^{kx+\omega t}{\rm A}+h{\rm B},
    $$
    with $\alpha$ and $\omega$ as in \eqref{eq:def-matrix-sol}, is a solution of
    \eqref{system-def-DNLS}.
\end{proof}

\section{Noncommutative Darboux--B\"acklund transformations of NLS-type}\label{NCDBT}
In this section, we recall the Darboux transformations, which were constructed in \cite{KRX, Sokor-Nikitina} for systems \eqref{NLS-eq-NC} and \eqref{DNLS-system-NC}. In the latter, we presented only the $x$-parts of the associated B\"acklund transformations, since the systems under consideration are evolutionary. This means that the systems themselves play the role of the $t$-part of the B\"acklund transformations. However, the Darboux transformations depend not only on the potential functions ${\bm p}$ and ${\bm q}$, but also on auxiliary functions. In this section, we also present the evolutions of these auxiliary functions for completeness.

Moreover, we derive a first integral for the B\"acklund transformation of system \eqref{DNLS-system-NC} and we use it to derive a noncommutative Volterra equation. Then, we will construct a Darboux matrix and a B\"acklund transformation for the noncommutative deformation of the DNLS system \eqref{system-def-DNLS}.

\subsection{NLS system (AKNS)}
In \cite{KRX} we found all the Darboux matrices for system \eqref{NLS-eq-NC} that are linear in the spectral parameter $\lambda$, i.e. ${\rm M}=\lambda {\rm M}^1+{\rm M}^0$, and have rank 1, i.e. $\rank({\rm M}^1)=1$. In particular, we proved the following theorem, where we have now derived the evolutions of the auxiliary functions.

\begin{theorem}
Let $\rm{M}=\lambda {\rm M}^1 +{\rm M}^0$, where $\rank {\rm M}^1=1$, be a Darboux matrix that leaves covariant the Lax pair \eqref{Lax-NLS-NC} of the noncommutative NLS system \eqref{NLS-eq-NC}. All the Darboux transformations associated with matrices of this form fall into one of the following two cases.
\begin{enumerate}
\item The Darboux transformation consists of the Darboux matrix
\begin{subequations} \label{eq:T-10}
\begin{equation}\label{DT-10}
{\rm{M}}(\bm{f},\bm{p},\tilde{\bm{q}})=\lambda\begin{pmatrix}
1 & 0 \\
0 & 0
\end{pmatrix}
+
\begin{pmatrix}
\bm{f} & \bm{p} \\
\tilde{\bm{q}} & 1
\end{pmatrix},
\end{equation}
and the associated $x$-part of the B\"acklund transformation:
\begin{equation}\label{BT-10}
\bm{f}_x=2\tilde{\bm{p}}\tilde{\bm{q}}-2\bm{p}\bm{q},\quad\bm{p}_x=2\tilde{\bm{p}} -2\bm{f}\bm{p},\quad \tilde{\bm{q}}_x=2\tilde{\bm{q}}\bm{f}-2\bm{q}.
\end{equation}
The $t$-part consists of system \eqref{NLS-eq-NC} together with the system
\begin{equation}\label{NLS-BT-t-part}
\tilde{\bm{p}}_t=\frac{1}{2}\tilde{\bm{p}}_{xx}-4\tilde{\bm{p}}\tilde{\bm{q}}\tilde{\bm{p}}, ~~~ \tilde{\bm{q}}_t=-\frac{1}{2}\tilde{\bm{q}}_{xx}+4\tilde{\bm{q}}\tilde{\bm{p}}\tilde{\bm{q}},
\end{equation}
and the evolution of the auxiliary function
\begin{equation}\label{f-evol-NLS}
    {\bm f}_t=2{\bm f}{\bm p}{\bm q} -2\tilde{{\bm p}}\tilde{{\bm q}}{\bm f}+{\bm p}{\bm q}_x+\tilde{{\bm p}}_x\tilde{{\bm q}}.
\end{equation}
\end{subequations}

\item The Darboux transformation consists of the Darboux matrix
\begin{subequations} \label{eq:T-01}
\begin{equation}\label{DT-01}
{\rm K}(\bm{g},\tilde{\bm{p}},\bm{q})=\lambda\begin{pmatrix}
0 & 0 \\
0 & -1
\end{pmatrix}
+
\begin{pmatrix}
1 & \tilde{\bm{p}} \\
 \bm{q} & \bm{g}
\end{pmatrix},
\end{equation}
and the associated $x$-part of the B\"acklund transformation:
\begin{equation}\label{BT-01}
\bm{g}_x=2\tilde{\bm{q}}\tilde{\bm{p}}-2\bm{q}\bm{p},\quad \tilde{\bm{p}}_{x}=2\tilde{\bm{p}}\bm{g}-2\bm{p},\quad \bm{q}_x=2\tilde{\bm{q}} -2\bm{g}\bm{q}.
\end{equation}
The $t$-part consists of system \eqref{NLS-eq-NC}, system \eqref{NLS-BT-t-part} and the evolution of the auxiliary function
\begin{equation}\label{g-evol-NLS}
    {\bm g}_t=2\tilde{{\bm p}}\tilde{{\bm q}}{\bm g}-2{\bm g}{\bm p}{\bm q} -{\bm q}{\bm p}_x-\tilde{{\bm q}}_x\tilde{{\bm p}}.
\end{equation}
\end{subequations}
\end{enumerate}
\end{theorem}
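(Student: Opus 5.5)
The approach is the standard one for Darboux matrices linear in $\lambda$. The classification part (that every rank-one linear Darboux matrix reduces, up to the trivial symmetries, to the forms \eqref{DT-10} or \eqref{DT-01} with $x$-parts \eqref{BT-10}, \eqref{BT-01}) was established in \cite{KRX}, and I take it as given. What remains new is the $t$-part: I must show that the second equation of \eqref{M-equation}, ${\rm M}_t+{\rm M}{\rm V}-\tilde{\rm V}{\rm M}=0$, with ${\rm V}$ from \eqref{Lax-NLS-NC}, is equivalent to \eqref{NLS-eq-NC}, \eqref{NLS-BT-t-part} and \eqref{f-evol-NLS} (respectively \eqref{g-evol-NLS}), modulo the $x$-part.

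For case 1, I substitute ${\rm M}=\lambda E_{11}+{\rm M}^0$ and expand ${\rm M}_t+{\rm M}{\rm V}-\tilde{\rm V}{\rm M}$ in powers of $\lambda$, from $\lambda^3$ down to $\lambda^0$, keeping the order of all products. The $\lambda^3$ term vanishes trivially because $E_{11}$ commutes with $\sigma_3$. The $\lambda^2$ and $\lambda^1$ coefficients should vanish identically once \eqref{BT-10} is used; this mirrors the way the $x$-part was derived from ${\rm M}_x+{\rm M}{\rm U}-\tilde{\rm U}{\rm M}=0$, since the top parts of ${\rm V}$ are $\lambda{\rm U}$ plus a correction. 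The $\lambda^0$ coefficient then gives four entries:
\begin{itemize}
\item the $(1,1)$ entry yields $\bm f_t$;
\item the $(1,2)$ entry yields $\bm p_t$;
\item the $(2,1)$ entry yields $\tilde{\bm q}_t$;
\item the $(2,2)$ entry gives a constraint that must hold identically.
\end{itemize}

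Next I simplify these entries using \eqref{BT-10} and its $x$-derivatives. I express $\tilde{\bm p}=\tfrac12\bm p_x+\bm f\bm p$ and $\bm q=\tilde{\bm q}\bm f-\tfrac12\tilde{\bm q}_x$, and replace $\bm f_x$ by $2\tilde{\bm p}\tilde{\bm q}-2\bm p\bm q$. With these substitutions the $(1,2)$ and $(2,1)$ entries should reduce to the $\bm p$-equation of \eqref{NLS-eq-NC} and the $\tilde{\bm q}$-equation of \eqref{NLS-BT-t-part}. The remaining evolutions, of $\bm q$ and $\tilde{\bm p}$, come from requiring that ${\rm V}$ and $\tilde{\rm V}$ themselves be Lax matrices: both the old and the new potentials solve the noncommutative NLS system. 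The $(1,1)$ entry is solved for $\bm f_t$, and after the substitutions I expect it to take exactly the form \eqref{f-evol-NLS}. The $(2,2)$ entry should collapse to $\tilde{\bm q}\bm p_x+\tilde{\bm q}_x\bm p$-type terms, which cancel by \eqref{BT-10}.

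As a consistency check, I will verify that $(\bm f_x)_t=(\bm f_t)_x$ holds on solutions. This is automatic from the compatibility of the two equations in \eqref{M-equation} together with the zero-curvature conditions for $({\rm U},{\rm V})$ and $(\tilde{\rm U},\tilde{\rm V})$.

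Case 2 goes the same way with ${\rm K}=-\lambda E_{22}+{\rm K}^0$. Alternatively, I would exploit the evident symmetry exchanging the roles of the two diagonal blocks, together with $\bm p\leftrightarrow\bm q$ and the direction of the transformation, which maps \eqref{DT-10} to \eqref{DT-01}, and then read off \eqref{g-evol-NLS}. The main obstacle is the $\lambda^0$ bookkeeping. Noncommutativity forbids rearranging terms, so the $x$-part substitutions must be applied on the correct side. In particular, the terms $\bm p\bm q_x$ and $\tilde{\bm p}_x\tilde{\bm q}$ in \eqref{f-evol-NLS} appear only after $\bm p_{xx}$-type terms produced by $\tilde{\bm p}_x$ cancel against the diagonal entries $\mp2\bm p\bm q$, $\pm2\bm q\bm p$ of ${\rm V}$. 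I would carry out this step carefully, tracking the left and right factors of $\bm f$ in every term.
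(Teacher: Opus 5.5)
Your strategy is the right one: quote the classification and the $x$-parts from \cite{KRX}, then expand ${\rm M}_t+{\rm M}{\rm V}-\tilde{\rm V}{\rm M}=0$ in powers of $\lambda$. The paper gives no further argument, and case~1 goes through as you describe. The gap is in case~2: the formula you plan to ``read off'' is not what the computation gives.

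Write ${\rm V}=\lambda^2\sigma_3+2\lambda{\rm P}+{\rm V}^0$ with ${\rm P}=\begin{pmatrix}0&\bm p\\ \bm q&0\end{pmatrix}$ and ${\rm V}^0=\begin{pmatrix}-2\bm p\bm q&\bm p_x\\ -\bm q_x&2\bm q\bm p\end{pmatrix}$. Write ${\rm K}=-\lambda E_{22}+{\rm K}^0$ with ${\rm K}^0=\begin{pmatrix}1&\tilde{\bm p}\\ \bm q&\bm g\end{pmatrix}$. The $(2,2)$ entry of the $\lambda^0$ coefficient ${\rm K}^0_t+{\rm K}^0{\rm V}^0-\tilde{\rm V}^0{\rm K}^0$ is
\[
\bm g_t+\bm q\bm p_x+2\bm g\bm q\bm p+\tilde{\bm q}_x\tilde{\bm p}-2\tilde{\bm q}\tilde{\bm p}\bm g .
\]
This is because $\bm g$ is multiplied on the right by the entry $2\bm q\bm p$ of ${\rm V}^0$ and on the left by the entry $2\tilde{\bm q}\tilde{\bm p}$ of $\tilde{\rm V}^0$. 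Hence
\[
\bm g_t=2\tilde{\bm q}\tilde{\bm p}\bm g-2\bm g\bm q\bm p-\bm q\bm p_x-\tilde{\bm q}_x\tilde{\bm p}.
\]
The printed \eqref{g-evol-NLS} differs from this by $2[\tilde{\bm p},\tilde{\bm q}]\bm g-2\bm g[\bm p,\bm q]$. Nothing in \eqref{BT-01} or the NLS systems makes that difference vanish. For instance, at a point with $\bm p=\bm q=0$ and $\bm g=1$ it equals $2[\tilde{\bm p},\tilde{\bm q}]$, where $\tilde{\bm q}=\tfrac12\bm q_x$ and $\tilde{\bm p}$ are free data.

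So \eqref{g-evol-NLS} as printed is false over a noncommutative ring. It is a misprint invisible in the commutative limit; note that \eqref{BT-01} already carries the correct orderings $\tilde{\bm q}\tilde{\bm p}$ and $\bm q\bm p$. You should prove, and state, the corrected formula. The other $\lambda^0$ entries are fine: $(1,1)$ vanishes by \eqref{BT-01}, $(1,2)$ gives the $\tilde{\bm p}$-equation, and $(2,1)$ gives the $\bm q$-equation.

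Your symmetry shortcut also yields the corrected formula, provided you state the symmetry precisely. One has $\sigma_1{\rm U}(\lambda;\bm p,\bm q)\sigma_1={\rm U}(-\lambda;\bm q,\bm p)$ and $\sigma_1{\rm V}(\lambda;\bm p,\bm q)\sigma_1=-{\rm V}(-\lambda;\bm q,\bm p)$. So the map ${\rm M}(\lambda)\mapsto\sigma_1{\rm M}(-\lambda)\sigma_1$, together with $\bm p\leftrightarrow\bm q$, $\tilde{\bm p}\leftrightarrow\tilde{\bm q}$, $\bm f\mapsto\bm g$ and $t\mapsto-t$, carries \eqref{DT-10}, \eqref{BT-10}, \eqref{f-evol-NLS} to \eqref{DT-01}, \eqref{BT-01} and the corrected $\bm g_t$. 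The direction of the transformation is not reversed, and dropping $t\mapsto-t$ flips the sign of $\bm g_t$.

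Two smaller points on case~1. First, the obstacle you anticipate does not occur. The $(1,1)$ entry is literally $\bm f_t-2\bm f\bm p\bm q-\bm p\bm q_x+2\tilde{\bm p}\tilde{\bm q}\bm f-\tilde{\bm p}_x\tilde{\bm q}$, which is \eqref{f-evol-NLS} with no substitution; substituting $\tilde{\bm p}=\tfrac12\bm p_x+\bm f\bm p$ there would only disguise it. The substitutions from \eqref{BT-10} belong in the $(1,2)$ entry $\bm p_t+\bm f\bm p_x+2\bm p\bm q\bm p+2\tilde{\bm p}\tilde{\bm q}\bm p-\tilde{\bm p}_x$ and its $(2,1)$ analogue.

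Second, the NLS equation for $\tilde{\bm p}$ need not be imposed. Cross-differentiating the two equations in \eqref{M-equation} gives $\tilde{\mathcal F}{\rm M}={\rm M}\mathcal F$, where $\mathcal F={\rm U}_t-{\rm V}_x+[{\rm U},{\rm V}]$ and $\tilde{\mathcal F}$ is its tilded analogue. The $\lambda^1$ and $(1,2)$ parts of $\tilde{\mathcal F}{\rm M}=0$ then force $\tilde{\mathcal F}=0$.
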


\subsection{Derivative NLS system}
In \cite{Sokor-Nikitina} we constructed all the possible Darboux transformations of the form ${\rm M}=\lambda^2 {\rm M}^2+\lambda {\rm M}^1+{\rm M}^0$, with $\rank({\rm M}^2)=1$, for system \eqref{DNLS-system-NC}. In particular we have the following.

\begin{theorem}
Let ${\rm M}=\lambda^2 {\rm M}^2+\lambda {\rm M}^1+{\rm M}^0$, where $\rank {\rm M}^2=1$, be a Darboux matrix that leaves covariant the Lax pair \eqref{Lax-NLS-NC} of the noncommutative derivative NLS system \eqref{DNLS-system-NC}. All the Darboux transformations associated with matrices of this form fall into one of the following two cases.
\begin{enumerate}
\item The Darboux transformation consists of the Darboux matrix
\begin{subequations}\label{DT-BT-DNLS-1}
\begin{equation}\label{DT-DNLS-1}
{\rm M} =\lambda^2\begin{pmatrix} \bm{f} & 0\\ 0 & 0\end{pmatrix}+\lambda \begin{pmatrix} 0 & \bm{f}\bm{p}\\ \tilde{\bm{q}}\bm{f} & 0 \end{pmatrix}+\begin{pmatrix} c_1 & 0\\ 0 & c_2\end{pmatrix}
\end{equation}
together with the associated $x$-part of the B\"acklund transformation:
\begin{align}\label{BT-DNLS-1}
\bm{f}_x=2(\tilde{\bm{p}}\tilde{\bm{q}}\bm{f}-\bm{f}\bm{p}\bm{q}),\quad
    (\bm{f}\bm{p})_x=2c_2\tilde{\bm{p}}-2c_1\bm{p},\quad
    (\tilde{\bm{q}}\bm{f})_x=2c_1\tilde{\bm{q}}-2c_2\bm{q}.
\end{align}
The $t$-part consists of system \eqref{DNLS-system-NC} together with the system
\begin{equation}\label{DNLS-BT-t-part}
\tilde{\bm{p}}_t = \frac{1}{2} \tilde{\bm{p}}_{xx}-2\,(\tilde{\bm{p}} \tilde{\bm{q}} \tilde{\bm{p}})_x,\quad \tilde{\bm{q}}_t = -\frac{1}{2}\tilde{\bm{q}}_{xx}-2\,(\tilde{\bm{q}} \tilde{\bm{p}} \tilde{\bm{q}})_x 
\end{equation}
and the evolution of the auxiliary function
\begin{equation}\label{f-evol-DNLS}
   {\bm f}_t={\bm f}{\bm p}({\bm q}_x+4{\bm q}{\bm p}{\bm q})+(\tilde{\bm{p}}_x-4\tilde{{\bm p}}\tilde{{\bm q}}\tilde{{\bm p}})\tilde{{\bm q}}{\bm f}+2c_1({\bm p} {\bm q}-\tilde{{\bm p}}\tilde{{\bm q}}).
\end{equation}
\end{subequations}

\item The Darboux transformation consists of the Darboux matrix
\begin{subequations}\label{DT-BT-DNLS-2}
\begin{equation}\label{DT-DNLS-2}
{\rm M} =\lambda^2\begin{pmatrix} 0 & 0\\ 0 & \bm{g}\end{pmatrix}+\lambda \begin{pmatrix} 0 & -\tilde{\bm{p}}\bm{g}\\ -\bm{g}\tilde{\bm{q}} & 0 \end{pmatrix}+\begin{pmatrix} c_1 & 0\\ 0 & c_2\end{pmatrix}
\end{equation}
together with the associated $x$-part of the B\"acklund transformation:
\begin{align}\label{BT-DNLS-2}
\bm{g}_x=2(\bm{g}\bm{q}\bm{p}-\tilde{\bm{q}}\tilde{\bm{p}}\bm{g}),\quad
    (\tilde{\bm{p}}\bm{g})_x=2c_1\bm{p}-2c_2\tilde{\bm{p}},\quad
    (\bm{g}\bm{q})_x=2c_2\bm{q}-2c_1\tilde{\bm{q}}.
\end{align}
The $t$-part consists of systems \eqref{DNLS-system-NC} and \eqref{DNLS-BT-t-part} together with the the evolution of the auxiliary function
\begin{equation}\label{g-evol-DNLS}
   {\bm g}_t={\bm g}\tilde{{\bm q}}({\bm p}_x-4{\bm p}{\bm q}{\bm p})+(\tilde{\bm{q}}_x+4\tilde{{\bm q}}\tilde{{\bm p}}\tilde{{\bm q}})\tilde{{\bm p}}{\bm g}-2c_2({\bm q} {\bm p}-\tilde{{\bm q}}\tilde{{\bm p}}).
\end{equation}
\end{subequations}
\end{enumerate}
\end{theorem}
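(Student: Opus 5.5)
The theorem has two parts: a classification claim, which I will take from \cite{Sokor-Nikitina}, and the new $t$-part (in particular the evolutions \eqref{f-evol-DNLS} and \eqref{g-evol-DNLS}). So the plan is to accept that matrices \eqref{DT-DNLS-1} and \eqref{DT-DNLS-2}, together with the $x$-parts \eqref{BT-DNLS-1} and \eqref{BT-DNLS-2}, exhaust the rank-one quadratic case. This comes from solving ${\rm M}_x+{\rm M}{\rm U}-\tilde{\rm U}{\rm M}=0$ with ${\rm U}$ from \eqref{Lax-NC-DNLS-a} order by order in $\lambda$. What remains is to impose the second equation in \eqref{M-equation}, namely ${\rm M}_t+{\rm M}{\rm V}-\tilde{\rm V}{\rm M}=0$ with ${\rm V}$ from \eqref{Lax-NC-DNLS-b}, and to read off what it forces.

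\textbf{Case 1.} Take ${\rm M}$ from \eqref{DT-DNLS-1}. Expanding ${\rm M}{\rm V}-\tilde{\rm V}{\rm M}$ gives a Laurent polynomial of degree $6$ down to $0$, and I will collect coefficients entrywise.
\begin{itemize}
\item The high powers $\lambda^6,\lambda^5$ cancel identically, because the leading terms $\lambda^4\sigma_3$ and $\lambda^3(\cdot)$ of ${\rm V}$ have the same structure as in ${\rm U}$.
\item The $\lambda^4$ and $\lambda^3$ coefficients reproduce algebraic consequences of the $x$-part.
\item The off-diagonal $\lambda^1$ entries give $(\bm f\bm p)_t$ and $(\tilde{\bm q}\bm f)_t$. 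I will check that these agree with \eqref{DNLS-system-NC} and \eqref{DNLS-BT-t-part} once $\bm f_t$ is known, using \eqref{BT-DNLS-1} to trade $x$-derivatives for algebraic expressions. This is where the $t$-part becomes the pair of copies of the noncommutative DNLS system.
\item The $(1,1)$ entry at $\lambda^2$ gives a linear expression for $\bm f_t$:
\[
\bm f_t = -\bm f\,{\rm V}_{11}^{(2)} + \tilde{\rm V}_{11}^{(2)}\bm f + (\text{cross terms } \lambda\cdot\lambda),
\]
where the cross terms are $-\bm f\bm p\,{\rm V}^{(1)}_{21}+{\rm V}^{(1)}_{12}(\tilde{\cdot})\,\tilde{\bm q}\bm f$.
\end{itemize}
Substituting ${\rm V}^{(1)}_{21}=-\bm q_x-4\bm q\bm p\bm q$, $\tilde{\rm V}^{(1)}_{12}=\tilde{\bm p}_x-4\tilde{\bm p}\tilde{\bm q}\tilde{\bm p}$ and ${\rm V}^{(2)}_{11}=-2\bm p\bm q$ should give exactly \eqref{f-evol-DNLS}. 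The $2c_1$ term comes from the constant part of ${\rm M}$ multiplying the $\lambda^2$ diagonal of ${\rm V}$. The $(2,2)$ entry at $\lambda^2$ is $c_2(\tilde{\bm q}\tilde{\bm p}-\bm q\bm p)$ plus terms quadratic in the off-diagonal blocks, and it should vanish by the relations already obtained.

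\textbf{Case 2.} This is identical with roles swapped. The $(2,2)$ entry at $\lambda^2$ yields \eqref{g-evol-DNLS}. I will check the signs by comparing with the symmetry that exchanges the two diagonal positions, together with $\bm p\leftrightarrow\bm q$ and $\lambda\to$ the appropriate sign.

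\textbf{Main obstacle.} The expected difficulty is the consistency step. The remaining coefficients (the $\lambda^0$ diagonal, and the compatibility of $\bm f_t$ with \eqref{BT-DNLS-1}) must hold identically rather than imposing new constraints, and in the noncommutative setting the ordering matters. I would first verify that $\partial_t$ of the $\bm f_x$ equation in \eqref{BT-DNLS-1} equals $\partial_x$ of \eqref{f-evol-DNLS} modulo \eqref{DNLS-system-NC}, \eqref{DNLS-BT-t-part} and \eqref{BT-DNLS-1}. Here I would keep careful track of left versus right multiplication by $\bm f$, and invert $\bm f$ only where the division-ring assumption allows.
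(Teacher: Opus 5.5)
The paper gives no proof of this theorem. The classification and the $x$-parts are quoted from \cite{Sokor-Nikitina}, and the $t$-parts come from the computation the paper sketches for Theorem~\ref{DT-BT-DDNLS-th}: substitute ${\rm M}$ into ${\rm M}_t+{\rm M}{\rm V}-\tilde{\rm V}{\rm M}=0$ and collect powers of $\lambda$. So your route is the intended one, and you rightly use \eqref{Lax-NC-DNLS} (the reference to \eqref{Lax-NLS-NC} in the statement is a slip).

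\textbf{Case 1.} Here your plan goes through. The $\lambda^6$ and $\lambda^4$ coefficients vanish identically, and $\lambda^5$ is the $\lambda^3$ coefficient of the $x$-equation. The $\lambda^3$ coefficient reproduces the $(\bm f\bm p)_x$ and $(\tilde{\bm q}\bm f)_x$ relations once $\bm f_x$ is used. The $(2,2)$ entry at $\lambda^2$ vanishes modulo \eqref{BT-DNLS-1}, and there is no $\lambda^0$ term.

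Your displayed formula for $\bm f_t$ is wrong as written, however. At order $\lambda^2$ the diagonal block ${\rm V}^{(2)}$ multiplies ${\rm M}^0=\diag(c_1,c_2)$, not $\bm f$; its product with $\lambda^2{\rm M}^2$ sits at $\lambda^4$. The $(1,1)$ entry therefore reads
\[
\bm f_t=c_1\bigl(\tilde{\rm V}^{(2)}_{11}-{\rm V}^{(2)}_{11}\bigr)-\bm f\bm p\,{\rm V}^{(1)}_{21}+\tilde{\rm V}^{(1)}_{12}\,\tilde{\bm q}\bm f ,
\]
which gives \eqref{f-evol-DNLS} and agrees with your remark about the $2c_1$ term. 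Read literally, your display would produce $2\bm f\bm p\bm q-2\tilde{\bm p}\tilde{\bm q}\bm f$ in place of $2c_1(\bm p\bm q-\tilde{\bm p}\tilde{\bm q})$.

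The hand check of the $\lambda^1$ entries is also unnecessary. The matrix ${\rm E}={\rm M}_t+{\rm M}{\rm V}-\tilde{\rm V}{\rm M}$ satisfies ${\rm E}_x=\tilde{\rm U}{\rm E}-{\rm E}{\rm U}$ whenever the $x$-part holds and both $(\bm p,\bm q)$ and $(\tilde{\bm p},\tilde{\bm q})$ solve \eqref{DNLS-system-NC}. After the reductions above, ${\rm E}=\lambda^2\diag(\bm n,0)+\lambda{\rm E}^1$, with $\bm n$ the $(1,1)$ entry. The $\lambda^3$ coefficient of the identity then gives ${\rm E}^1=\begin{pmatrix}0&\bm n\bm p\\ \tilde{\bm q}\bm n&0\end{pmatrix}$. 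So \eqref{f-evol-DNLS} is the only new condition. The cross-derivative check you single out as the main obstacle is an existence question and is not needed to identify the $t$-part.

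\textbf{Case 2.} This is the substantive issue: it cannot be done by swapping roles in the statement as printed, because \eqref{DT-DNLS-2} contains a misprint. With ${\rm M}^2=\diag(0,\bm g)$, the $\lambda^3$ coefficient of ${\rm M}_x+{\rm M}{\rm U}-\tilde{\rm U}{\rm M}=0$ (equivalently the $\lambda^5$ coefficient of the $t$-equation) forces the $(2,1)$ entry of ${\rm M}^1$ to be $-\bm g\bm q$. With the printed $-\bm g\tilde{\bm q}$, the $(2,1)$ entry of that coefficient equals $2\bm g(\bm q-\tilde{\bm q})\neq0$. Only $-\bm g\bm q$ matches $(\bm g\bm q)_x$ in \eqref{BT-DNLS-2}.

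With the corrected matrix, the $(1,1)$ entry at $\lambda^2$ vanishes modulo \eqref{BT-DNLS-2}, and the $(2,2)$ entry gives
\[
\bm g_t=\bm g\bm q(\bm p_x-4\bm p\bm q\bm p)+(\tilde{\bm q}_x+4\tilde{\bm q}\tilde{\bm p}\tilde{\bm q})\tilde{\bm p}\bm g-2c_2(\bm q\bm p-\tilde{\bm q}\tilde{\bm p}).
\]
So your computation does not yield \eqref{g-evol-DNLS} as printed. It yields the version with $\bm g\tilde{\bm q}$ replaced by $\bm g\bm q$ in the first term.

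The symmetry you propose as a check also needs adjusting. $\lambda\mapsto-\lambda$ cannot work, because conjugation by $\sigma_1$ flips the even term $\lambda^2\sigma_3$ of ${\rm U}$. The map carrying Case~1 onto Case~2 is ${\rm X}(\lambda)\mapsto\sigma_1{\rm X}(i\lambda)\sigma_1$, combined with:
\begin{itemize}
\item the substitution $(\bm p,\bm q,\tilde{\bm p},\tilde{\bm q})\mapsto(i\bm q,i\bm p,i\tilde{\bm q},i\tilde{\bm p})$;
\item $t\mapsto-t$, since $\sigma_1{\rm V}(i\lambda)\sigma_1$ is $-{\rm V}$ in the new variables;
\item $\bm g=-\bm f$ and $c_1\leftrightarrow c_2$.
\end{itemize}
Applied to Case~1, this map reproduces both corrections.
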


For the first integrals of the B\"acklund transformations we have the following.

\begin{proposition}
  Let $\left[\bm f,\bm p  \tilde{\bm q}\right]=\left[\bm g,\bm q  \tilde{\bm p}\right]=0$.  Then, the system of differential equations \eqref{BT-DNLS-1} has the following first integral:
$$
\partial_x(\bm{f}\bm{p}\tilde{\bm{q}}\bm{f}-c_2\bm{f})=0.
$$
\end{proposition}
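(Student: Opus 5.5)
The plan is to differentiate the proposed quantity directly and reduce it using only the three equations of the $x$-part \eqref{BT-DNLS-1} of the B\"acklund transformation. The hypothesis $[\bm f,\bm p\tilde{\bm q}]=0$ should absorb the single leftover term. The constants $c_1,c_2$ lie in the centre $Z(\mathfrak R)$, so they may be moved freely. The condition $[\bm g,\bm q\tilde{\bm p}]=0$ concerns the second Darboux transformation \eqref{BT-DNLS-2}; it plays no role here, except that the same argument should give the analogous first integral for that system.

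\textbf{Step 1: group the product.} Write $\bm f\bm p\tilde{\bm q}\bm f=(\bm f\bm p)(\tilde{\bm q}\bm f)$, since both factors $\bm f\bm p$ and $\tilde{\bm q}\bm f$ have prescribed $x$-derivatives in \eqref{BT-DNLS-1}. By the Leibniz rule, keeping the order of factors,
\begin{align*}
\partial_x(\bm f\bm p\tilde{\bm q}\bm f)&=(\bm f\bm p)_x\,\tilde{\bm q}\bm f+\bm f\bm p\,(\tilde{\bm q}\bm f)_x\\
&=(2c_2\tilde{\bm p}-2c_1\bm p)\tilde{\bm q}\bm f+\bm f\bm p(2c_1\tilde{\bm q}-2c_2\bm q).
\end{align*}

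\textbf{Step 2: subtract the derivative of $c_2\bm f$.} Since $c_2$ is a central constant, the first equation of \eqref{BT-DNLS-1} gives
$$
c_2\bm f_x=2c_2\tilde{\bm p}\tilde{\bm q}\bm f-2c_2\bm f\bm p\bm q .
$$
Subtracting this from the expression in Step 1, the two $c_2$-terms $2c_2\tilde{\bm p}\tilde{\bm q}\bm f$ and $-2c_2\bm f\bm p\bm q$ cancel exactly. What remains is
$$
\partial_x(\bm f\bm p\tilde{\bm q}\bm f-c_2\bm f)=2c_1\big(\bm f\bm p\tilde{\bm q}-\bm p\tilde{\bm q}\bm f\big)=2c_1[\bm f,\bm p\tilde{\bm q}].
$$

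\textbf{Step 3: use the hypothesis.} The right-hand side is $2c_1[\bm f,\bm p\tilde{\bm q}]$, which vanishes by the assumption $[\bm f,\bm p\tilde{\bm q}]=0$. This proves the claim.

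The only point needing care is the ordering of the noncommutative factors in the Leibniz rule. One must use the grouping $(\bm f\bm p)(\tilde{\bm q}\bm f)$, which matches the derivatives given in \eqref{BT-DNLS-1}, rather than differentiating $\bm f$, $\bm p$ and $\tilde{\bm q}$ separately. With that grouping, the obstruction appears as the clean commutator term that the hypothesis removes. As a consistency check, in the commutative case this reduces to the first integral $\mathcal F$ of \eqref{VB-DNLS-1-ints} with $c_2=1$.
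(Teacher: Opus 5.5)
Your proof is correct and is the paper's own argument. You expand $\partial_x\bigl((\bm f\bm p)(\tilde{\bm q}\bm f)\bigr)-c_2\bm f_x$ by the Leibniz rule, substitute \eqref{BT-DNLS-1}, and note that only $2c_1[\bm f,\bm p\tilde{\bm q}]$ survives. As you anticipated, the paper also does the mirror computation $\partial_x(\bm g\bm q\tilde{\bm p}\bm g-c_1\bm g)=2c_2[\bm g,\bm q\tilde{\bm p}]$ for \eqref{BT-DNLS-2}, which is the only place the second hypothesis enters.
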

\begin{proof}
Indeed,
\begin{align*}
\partial_x(\bm{f}\bm{p}\tilde{\bm{q}}\bm{f}-c_2\bm{f})
&=(\bm{fp})_x\bm{\tilde{q}}\bm{f}+\bm{f}\bm{p}(\tilde{\bm{q}}\bm{f})_x-c_2\bm{f}_x
\stackrel{\eqref{BT-DNLS-1}}{=}2c_1\left[\bm f,\bm p  \tilde{\bm q}\right],
\end{align*}
whence it follows that $\partial_x(\bm{f}\bm{p}\tilde{\bm{q}}\bm{f}-c_2\bm{f})=0$, if $\left[\bm f,\bm p  \tilde{\bm q}\right]=0$. Similarly,
\begin{align*}
\partial_x(\bm{g}\bm{q}\tilde{\bm{p}}\bm{g}-c_1\bm{g})
&=(\bm{gq})_x\bm{\tilde{p}}\bm{g}+\bm{g}\bm{q}(\tilde{\bm{p}}\bm{g})_x-c_1\bm{g}_x
\stackrel{\eqref{BT-DNLS-2}}{=}2c_2\left[\bm g,\bm q  \tilde{\bm p}\right],
\end{align*}
which implies $\partial_x(\bm{g}\bm{q}\tilde{\bm{p}}\bm{g}-c_1\bm{g})=0$, if $\left[\bm g,\bm q  \tilde{\bm p}\right]=0$.
\end{proof}

For a semi-discretisation, we introduce a discrete variable $n\in\mathbb{Z}$, and we shall use it to place on the line the various solutions of system \eqref{DNLS-system-NC}, $\{{\bm p}_n,{\bm q}_n\}_{n\in\mathbb{Z}}$, where for $n=0$ we have the original solution of \eqref{DNLS-system-NC}, $({\bm p}_0,{\bm q}_0)\equiv ({\bm p},{\bm q})$, whereas the shifted $({\bm p}_1,{\bm q}_1)$ will be the solution obtained after the action of the Darboux matrix, namely $({\bm p}_0,{\bm q}_0)\equiv (\tilde{{\bm p}},\tilde{{\bm q}})$.

In the case where the parameters in either the Darboux matrix \eqref{DT-DNLS-1} or matrix \eqref{DT-DNLS-2} are zero, i.e. $c_1=c_2=0,$ then we obtain a noncommutative modified Volterra equation. For instance, from the B\"acklund transformation \eqref{BT-DNLS-1}, and according to the `discrete' notation we just introduced, it follows that 
$$
({\bm f}{\bm p})_x=({\bm q}_1{\bm f})=0.
$$
The relations ${\bm f}{\bm p}=1$ and ${\bm q}_1{\bm f}=1$ are solutions of the above differential relations and they imply $\bm{q}_1={\bm p}^{-1}$. The latter is equivalent to $\bm{q}\equiv\bm{q}_0=\bm{p}^{-1}_{-1}$. The Darboux matrix \eqref{DT-DNLS-1} becomes
\begin{equation}\label{DT-DNLS-1-deg}
{\rm M} =\lambda^2\begin{pmatrix} \bm{p}^{-1} & 0\\ 0 & 0\end{pmatrix}+\lambda \begin{pmatrix} 0 & 1\\ 1 & 0 \end{pmatrix}.
\end{equation}

Moreover, substituting $\bm{f}=\bm{p}^{-1}$ to the first equation of the B\"acklund transformation \eqref{BT-DNLS-1}:
$$
\bm{f}_x=2(\bm{p}_1\bm{q}_1\bm{f}-\bm{f}\bm{p}\bm{q}),
$$
and using $(\bm{p}^{-1})_x=-\bm{p}^{-1}\bm{p}_x\bm{p}^{-1}$, we obtain
\begin{equation}\label{mod-Vol}
    \bm{p}_x=2\bm{p}(\bm{p}_1\bm{p}^{-1}_0-\bm{p}^{-1}_{-1})\bm{p},
\end{equation}
which is the noncommutative avatar of the modified Volterra equation. The noncommutative modified Volterra equation was also obtained in \cite{Peroni-Wang} from linear in $\lambda$ NLS-type Darboux transformation. That is, the derivation from the B\"acklund transformation \eqref{BT-DNLS-1} was expected, since for $c_1=c_2=0$, the associated Darboux matrix \eqref{DT-BT-DNLS-1} becomes linear in $\lambda$.

\subsection{Deformation of the derivative NLS}
Now we find the Darboux matrix for the noncommutative deformation of the derivative NLS \eqref{system-def-DNLS}. The relations defining the Darboux matrix ${\rm M}$ are again given by \eqref{M-equation}, but now the Lax pair $({\rm U},{\rm V})$ is defined by \eqref{Lax-DDNLS-NC}.

We seek a Darboux matrix of the form
$$
{\rm M}=\lambda^2 {\rm M}^2 + \lambda {\rm M}^1 + {\rm M}^0  + \lambda^{-1} {\rm M}^{-1} + \lambda^{-2} {\rm M}^{-2}.
$$

It is necessary to take into account that the matrices ${\rm U}$ and ${\rm V}$ possess the symmetries
\begin{align*}
&{\rm U}(\lambda)= \sigma_3 {\rm U}(-\lambda)\sigma_3,\qquad  {\rm U}(\lambda)= \sigma_1 {\rm U}\left(\frac{1}{\lambda}\right)\sigma_1,\\
&{\rm V}(\lambda)= \sigma_3 {\rm V}(-\lambda)\sigma_3,\qquad  {\rm V}(\lambda)= \sigma_1 {\rm V}\left(\frac{1}{\lambda}\right)\sigma_1.
\end{align*}

The Darboux matrix must possess the same symmetries, i.e.
$$
{\rm M}(\lambda)= \sigma_3 {\rm M}(-\lambda)\sigma_3,\qquad  {\rm M}(\lambda)= \sigma_1 {\rm M}\left(\frac{1}{\lambda}\right)\sigma_1.
$$

As a consequence, the Darboux matrix must have the form
\begin{equation}\label{M-gen-form-D2}
\rm{M}=\lambda^2\begin{pmatrix}\bm\alpha&0\\0&\bm\beta\end{pmatrix}+\lambda\begin{pmatrix}0&\bm\kappa\\ \bm\delta&0\end{pmatrix}+ \begin{pmatrix}\bm\mu&0\\0&\bm\mu\end{pmatrix}+\lambda^{-1}\begin{pmatrix}0&\bm\delta\\ \bm\kappa&0\end{pmatrix}+\lambda^{-2}\begin{pmatrix}\bm\beta&0\\0&\bm\alpha\end{pmatrix}.
\end{equation}

We have the following theorem.

\begin{theorem}\label{DT-BT-DDNLS-th}
All possible Darboux transformations associated with the Lax pair \eqref{Lax-DDNLS-NC} of the noncommutative deformation of the derivative NLS \eqref{system-def-DNLS}, with such dependence on the spectral parameter and $\rank {\rm M}^2=1$, fall into one of the following cases.
\begin{enumerate}
    \item The Darboux matrix
    \begin{equation}\label{Darboux-DDNLS-NC}
     {\rm M}=\lambda^2
\begin{pmatrix}\bm f&0\\0&0\end{pmatrix}
+\lambda
\begin{pmatrix}0&\bm f\bm p\\ \tilde{\bm q}\bm f&0\end{pmatrix}
+\begin{pmatrix}\bm f\bm g&0\\0&\bm f\bm g\end{pmatrix}+\lambda^{-1}
\begin{pmatrix}0&\tilde{\bm q}\bm f\\ \bm f\bm p&0\end{pmatrix}
+\lambda^{-2}
\begin{pmatrix}0&0\\0&\bm f\end{pmatrix},
\end{equation}
and the corresponding B\"acklund transformation
\begin{subequations}\label{BT-DDNLS-NC}
\begin{align}
\bm f_x &= 2\bigl(\bm f\bm p\,\bm q-\tilde{\bm p}\tilde{\bm q}\bm f\bigr),\label{BT-DDNLS-NC-a}\\
(\bm f\bm p)_x &=2\bigl(\bm f\bm g\bm p+\bm f\bm q-\tilde{\bm q}\bm f-\tilde{\bm p}\bm f\bm g\bigr),\label{BT-DDNLS-NC-b}\\
(\tilde{\bm q}\bm f)_x&=2\bigl(\bm f\bm p-\tilde{\bm q}\bm f\bm g+\bm f\bm g\bm q-\tilde{\bm p}\bm f\bigr),\label{BT-DDNLS-NC-c}\\
(\bm f\bm g)_x&=2\bigl(\tilde{\bm q}\bm f\bm q+\bm f\bm p^2-\tilde{\bm p}\bm f\bm p-\tilde{\bm q}^2\bm f\bigr),\label{BT-DDNLS-NC-d}
\end{align}
\end{subequations}
while the essential part of its $t$-evolution is given by
\begin{subequations}\label{BT-DDNLS-NC-t-essential}
\begin{align}
\bm f_t&=
\tilde{\bm A}\tilde{\bm q}\bm f-\bm f\bm p\,\bm B
+\tilde\alpha\,\bm f\bm g-\bm f\bm g\,\alpha
+\tilde\beta\,\bm f-\bm f\beta
+2\tilde{\bm q}\bm f\bm q-2\tilde{\bm p}\bm f\bm p,\label{BT-DDNLS-NC-t-essential-a}\\
(\bm f\bm g)_t&=
\tilde{\bm A}\bm f\bm p-\bm f\bm p\,\bm A
+\tilde{\bm B}\tilde{\bm q}\bm f-\tilde{\bm q}\bm f\,\bm B
+\tilde\beta\,\bm f\bm g-\bm f\bm g\,\beta
+\tilde\gamma\,\bm f-\bm f\gamma,\label{BT-DDNLS-NC-t-essential-b}
\end{align}
\end{subequations}
where $S=\bm q\bm p+\bm p\bm q$, $A=\bm p_x+2\bm p\bm S$ and $B=-\bm q_x+2\bm q\bm S$, as defined earlier,
while $\alpha,\beta,\gamma$ and $\tilde\alpha,\tilde\beta,\tilde\gamma$ are the diagonal potentials entering the matrix ${\rm V}$ in \eqref{def-DNLS-V-diag-corr}, corresponding respectively to $(\bm p,\bm q)$ and $(\tilde{\bm p},\tilde{\bm q})$. In particular, since $\mathcal R$ is a division ring, \eqref{BT-DDNLS-NC-t-essential-a} and \eqref{BT-DDNLS-NC-t-essential-b} determine $\bm g_t$ by
\[
\bm g_t=\bm f^{-1}\Bigl((\bm f\bm g)_t-\bm f_t\bm g\Bigr).
\]

\item The Darboux matrix
    \begin{equation}\label{Darboux-DDNLS-NC-2}
    {\rm M}=\lambda^2
\begin{pmatrix}0&0\\0&\bm h\end{pmatrix}-\lambda
\begin{pmatrix}0&\tilde{\bm p}\bm h\\ \bm h\bm q&0\end{pmatrix}
+\begin{pmatrix}\bm h\bm g&0\\0&\bm h\bm g\end{pmatrix}-\lambda^{-1}\begin{pmatrix}0&\bm h\bm q\\ \tilde{\bm p}\bm h&0\end{pmatrix}+\lambda^{-2}\begin{pmatrix}\bm h&0\\0&0\end{pmatrix},
\end{equation}
and the corresponding B\"acklund transformation
\begin{subequations}\label{BT-DDNLS-NC-2}
\begin{align}
\bm h_x&=2\bigl(\tilde{\bm q}\tilde{\bm p}\bm h-\bm h\bm q\bm p\bigr),\label{BT-DDNLS-NC-a-2}\\
(\tilde{\bm p}\bm h)_x&=2\bigl(\tilde{\bm p}\bm h\bm g+\tilde{\bm q}\bm h-\bm h\bm g\,\bm p-\bm h\bm q\bigr),\label{BT-DDNLS-NC-b-2}\\
(\bm h\bm q)_x&=2\bigl(\tilde{\bm p}\bm h+\tilde{\bm q}\bm h\bm g
-\bm h\bm g\,\bm q-\bm h\bm p\bigr),\label{BT-DDNLS-NC-c-2}\\
(\bm h\bm g)_x&=2\bigl(\tilde{\bm p}^2\bm h+\tilde{\bm q}\bm h\bm q-\bm h\bm q^2-\tilde{\bm p}\bm h\bm p\bigr),\label{BT-DDNLS-NC-d-2}
\end{align}
\end{subequations}
while the essential part of its $t$-evolution is given by
\begin{subequations}\label{BT-DDNLS-NC-2-t-essential}
\begin{align}
\bm h_t&=
\bm h\bm q\,\bm A-\tilde{\bm B}\,\tilde{\bm p}\bm h
+\tilde\gamma\,\bm h\bm g-\bm h\bm g\,\gamma
+\tilde\beta\,\bm h-\bm h\beta
+2\tilde{\bm q}\bm h\bm q-2\tilde{\bm p}\bm h\bm p,\label{BT-DDNLS-NC-2-t-essential-a}\\
(\bm h\bm g)_t&=
\tilde\alpha\,\bm h-\bm h\alpha
+\tilde{\bm A}\,\tilde{\bm p}\bm h-\tilde{\bm p}\bm h\,\bm A
+\tilde{\bm B}\,\bm h\bm q-\bm h\bm q\,\bm B
+\tilde\beta\,\bm h\bm g-\bm h\bm g\,\beta,\label{BT-DDNLS-NC-2-t-essential-b}
\end{align}
\end{subequations}
and, since $\mathcal R$ is a division ring,
\[
\bm g_t=\bm h^{-1}\Bigl((\bm h\bm g)_t-\bm h_t\bm g\Bigr).
\]
\end{enumerate}
\end{theorem}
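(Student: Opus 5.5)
We work with the general symmetric ansatz \eqref{M-gen-form-D2} and substitute it into the $x$-part of \eqref{M-equation}, ${\rm M}_x+{\rm M}{\rm U}-\tilde{\rm U}{\rm M}=0$, with ${\rm U}$ from \eqref{Lax-DDNLS-NC}. Because of the two reduction symmetries, ${\rm M}_x+{\rm M}{\rm U}-\tilde{\rm U}{\rm M}$ inherits the same symmetries. It is therefore enough to collect the powers $\lambda^4,\lambda^3,\dots,\lambda^{0}$ of one diagonal and one off-diagonal entry, say the $(1,1)$ and $(1,2)$ entries. The $(2,2)$ and $(2,1)$ entries then follow from $\lambda\mapsto\lambda^{-1}$ and conjugation by $\sigma_1$.

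\textbf{Algebraic constraints.} The highest powers give purely algebraic relations.
\begin{itemize}
\item From $\lambda^4$ in the $(1,2)$ entry: $\bm\kappa$ must satisfy $\bm\alpha\bm p=\tilde{\bm p}\bm\beta$ up to the terms coming from $\mp\lambda^2\sigma_3$. Commuting $\sigma_3$ with ${\rm M}^2$ gives $2\bm\kappa$ together with the similar combination $\bm\alpha\bm p-\tilde{\bm p}\bm\beta$, whence $\bm\kappa=\bm\alpha\bm p-\tilde{\bm p}\bm\beta$ (and analogously $\bm\delta=\tilde{\bm q}\bm\alpha-\bm\beta\bm q$).
\item Coefficients at $\lambda^{4}$ in the diagonal vanish automatically, since ${\rm M}^2$ is diagonal.
\item The $\lambda^{3}$ and $\lambda^{2}$ terms relate $\bm\mu$ to $\bm\alpha,\bm\beta,\bm\kappa,\bm\delta$.
\end{itemize}
Now use $\rank{\rm M}^2=1$: since $\mathcal R$ is a division ring, exactly one of $\bm\alpha,\bm\beta$ is zero and the other is invertible. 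This is the split into the two cases.

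\textbf{Case 1: $\bm\beta=0$.} Set $\bm\alpha=\bm f$. The relations above give $\bm\kappa=\bm f\bm p$ and $\bm\delta=\tilde{\bm q}\bm f$. Writing $\bm\mu=\bm f\bm g$ defines $\bm g:=\bm f^{-1}\bm\mu$, which is legitimate because $\bm f$ is invertible. This yields exactly \eqref{Darboux-DDNLS-NC}. The remaining coefficients, namely the $\lambda^2$ and $\lambda^0$ diagonal terms and the $\lambda^1$ and $\lambda^{-1}$ off-diagonal terms, are differential. Reading them off produces \eqref{BT-DDNLS-NC-a}--\eqref{BT-DDNLS-NC-d}, with care over the order of factors, e.g. $\tilde{\bm p}\tilde{\bm q}\bm f$ versus $\bm f\bm p\bm q$. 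The symmetric coefficients must then be checked to be either identical or consequences of these equations; this is where any extra compatibility condition would surface.

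\textbf{Case 2: $\bm\alpha=0$.} Set $\bm\beta=\bm h$ and repeat the computation; it gives \eqref{Darboux-DDNLS-NC-2} and \eqref{BT-DDNLS-NC-2}. Alternatively, one can observe that this case maps to Case 1 under the conjugation $\sigma_1$ combined with $\lambda\mapsto\lambda^{-1}$ and a sign change of the odd part. The sign change is implemented by $\sigma_3$, which accounts for the minus signs in \eqref{Darboux-DDNLS-NC-2}.

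\textbf{The $t$-part.} We substitute into ${\rm M}_t+{\rm M}{\rm V}-\tilde{\rm V}{\rm M}=0$, with ${\rm V}$ from Theorem \ref{NC-DDNLS-th} and $\tilde{\rm V}$ built from $\tilde{\bm p},\tilde{\bm q},\tilde\alpha,\tilde\beta,\tilde\gamma$. Only the coefficients that contain ${\rm M}_t$ at the independent entries are needed, and these are the $\lambda^2$ and $\lambda^0$ diagonal terms. Extracting them yields \eqref{BT-DDNLS-NC-t-essential-a} and \eqref{BT-DDNLS-NC-t-essential-b}. Invertibility of $\bm f$ then gives $\bm g_t$. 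The off-diagonal $t$-coefficients for $\bm p_t$ and $\tilde{\bm q}_t$ should reproduce \eqref{system-def-DNLS} for the old and new potentials. This is why only the ``essential part'' is stated.

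\textbf{Main obstacle.} The hardest step is the noncommutative bookkeeping in this $t$-part. The nonlocal quantities $\alpha,\beta,\gamma$ satisfy only \eqref{eq-alpha-gamma-p} and \eqref{eq-alpha-beta-gamma-x}. The remaining $t$-coefficients, at high powers of $\lambda$, must be shown to hold identically modulo the $x$-Bäcklund relations \eqref{BT-DDNLS-NC} and these constraints. The plan for this is to differentiate in $x$ and use \eqref{eq-alpha-beta-gamma-x}, showing that each residual is $x$-independent and vanishes at a reference point.
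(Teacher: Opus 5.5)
Your overall route is the paper's. You use the symmetric ansatz \eqref{M-gen-form-D2}, match coefficients in ${\rm M}_x+{\rm M}{\rm U}-\tilde{\rm U}{\rm M}=0$, split on rank one with $\bm\mu=\bm f\bm g$, and read the essential $t$-part off the diagonal $\lambda^2$ and $\lambda^0$ coefficients. For the classification direction (every such Darboux matrix has the stated form and satisfies the stated equations) this is all that is needed, and your final list of equations is correct. Several intermediate claims are wrong, though:

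\begin{itemize}
\item The relations \eqref{kappa-delta} come from $\lambda^3$, namely $[\sigma_3,{\rm M}^1]+2(\tilde{\rm P}{\rm M}^2-{\rm M}^2{\rm P})=0$. Off-diagonal entries are odd in $\lambda$, so the $(1,2)$ entry has no $\lambda^4$ term.
\item $\bm\mu$ does not occur at $\lambda^3$ or $\lambda^2$ at all, because ${\rm M}^0=\bm\mu I$ commutes with $\sigma_3$. The $\lambda^2$ coefficient gives \eqref{alpha-beta-x}, and $\bm\mu$ first appears in \eqref{kappa-delta-x} and \eqref{mu-nu-x}. That is exactly why $\bm g$ is a free field.
\item The $(1,1)$ and $(1,2)$ entries at $\lambda^4,\dots,\lambda^0$ do not suffice. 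The $\sigma_1$-symmetry pairs the $(i,j)$ entry at $\lambda^k$ with the $(3-i,3-j)$ entry at $\lambda^{-k}$. You need either all four entries at non-negative powers, as the paper uses, or two entries at all powers; for example, \eqref{BT-DDNLS-NC-c} is the $(2,1)$ entry at $\lambda^1$.
\item Your alternative for Case 2 uses the wrong involution. ${\rm M}$ is invariant under ${\rm M}(\lambda)\mapsto\sigma_1{\rm M}(\lambda^{-1})\sigma_1$ by construction, so this map cannot exchange the cases. Comparing \eqref{BT-DDNLS-NC-a} with \eqref{BT-DDNLS-NC-a-2} shows that $\bm p\leftrightarrow\bm q$, $\tilde{\bm p}\leftrightarrow\tilde{\bm q}$ and $x\mapsto -x$ are also needed. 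Simply repeating the computation is safer.
\end{itemize}

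The step that would actually fail is your resolution of the ``main obstacle''. Put $R={\rm M}_t+{\rm M}{\rm V}-\tilde{\rm V}{\rm M}$. In Case 1 its $\lambda^6$ and $\lambda^5$ coefficients and its $(2,2)$ entry at $\lambda^4$ vanish identically. Its $(1,1)$ entry at $\lambda^4$, however, is $r=\bm f\alpha-\tilde\alpha\bm f-\bm f_x$ after using \eqref{BT-DDNLS-NC-a}. Its $(1,2)$ entry at $\lambda^3$ reduces, via \eqref{BT-DDNLS-NC-b} and \eqref{eq-alpha-gamma-p-a}, to $r\bm p$.

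\begin{itemize}
\item This is not an identity. Replacing $(\tilde\alpha,\tilde\gamma)$ by $(\tilde\alpha+c,\tilde\gamma+c)$, with $c$ a central constant, preserves \eqref{eq-alpha-gamma-p}, \eqref{eq-alpha-beta-gamma-x} and the zero-curvature condition for $(\tilde{\bm p},\tilde{\bm q})$. Yet it turns $r$ into $r-c\bm f$.
\item Nor is $r$ $x$-independent. The identity $R_x=\tilde{\rm U}R-R{\rm U}$ follows from the $x$-part and the two zero-curvature conditions, and it is the right tool here rather than brute-force differentiation. It gives $r_x=2(r\bm p\bm q-\tilde{\bm p}\tilde{\bm q}r)$, the same linear equation that $\bm f$ satisfies.
\end{itemize}

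So no reference-point argument can produce $r=0$. It is an extra condition on the relative normalisation of the nonlocal potentials $(\alpha,\beta,\gamma)$ and $(\tilde\alpha,\tilde\beta,\tilde\gamma)$. It must be imposed, or extracted from the compatibility of the $x$-part with \eqref{BT-DDNLS-NC-t-essential}; the linear equation only shows that imposing it at one point suffices. The purely algebraic $(2,2)$ entry at $\lambda^2$ has to be handled in the same way.

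The paper passes over this by asserting that these coefficients are ``identically satisfied or reproduce the $x$-part relations''. For the sufficiency direction you should state this normalisation condition explicitly rather than claim it can be proved.
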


\begin{proof}
Let
\[
{\rm P}=\begin{pmatrix}0&\bm p\\ \bm q&0\end{pmatrix},
\qquad
{\rm Q}=\begin{pmatrix}0&\bm q\\ \bm p&0\end{pmatrix}.
\]
The equation $\partial_x{\rm M}+{\rm M}{\rm U}-\tilde{\rm U}{\rm M}=0$ from \eqref{M-equation} is equivalent to the following system of polynomial equations:
\begin{subequations}\label{DDNLS-pol-eq}
    \begin{align}
        &\lambda^{4}:\quad [\sigma_{3},{\rm M}^2]=0,\label{DDNLS-pol-eq-a}\\
        &\lambda^{3}:\quad  [\sigma_{3},{\rm M}^1]+2\tilde{{\rm P}}{\rm M}^2-2{\rm M}^2{\rm P}=0,\label{DDNLS-pol-eq-b}\\
        &\lambda^{2}:\quad ({\rm M}^2)_x+[\sigma_{3},{\rm M}^0]+2(\tilde{{\rm P}}{\rm M}^1-{\rm M}^1{\rm P})=0,\label{DDNLS-pol-eq-c}\\
        &\lambda^{1}:\quad  ({\rm M}^1)_x+[\sigma_{3},{\rm M}^{-1}]+2(\tilde{{\rm P}}{\rm M}^0-{\rm M}^0{\rm P})+2(\tilde{{\rm Q}}{\rm M}^2-{\rm M}^2{\rm Q})=0,\label{DDNLS-pol-eq-d}\\
        &\lambda^0:\quad ({\rm M}^0)_x+[\sigma_{3},{\rm M}^{-2}]+2(\tilde{{\rm P}}{\rm M}^{-1}-{\rm M}^{-1}{\rm P})+2(\tilde{{\rm Q}}{\rm M}^1-{\rm M}^1{\rm Q})=0.\label{DDNLS-pol-eq-e}
    \end{align}
\end{subequations}

Equation \eqref{DDNLS-pol-eq-a} is identically satisfied. Equation \eqref{DDNLS-pol-eq-b} is equivalent to
\begin{equation}\label{kappa-delta}
\bm\kappa=\bm\alpha\bm p-\tilde{\bm p}\bm\beta,\qquad
\bm\delta=\tilde{\bm q}\bm\alpha-\bm\beta\bm q,
\end{equation}
whereas equation \eqref{DDNLS-pol-eq-c} is equivalent to
\begin{equation}\label{alpha-beta-x}
\bm\alpha_x=2(\bm\kappa\bm q-\tilde{\bm p}\bm\delta),
\qquad
\bm\beta_x=2(\bm\delta\bm p-\tilde{\bm q}\bm\kappa).
\end{equation}
Moreover, equation \eqref{DDNLS-pol-eq-d} is equivalent to
\begin{equation}\label{kappa-delta-x}
\bm\kappa_x=2\bigl(
-\bm\delta-\tilde{\bm p}\bm\mu+\bm\mu\bm p
-\tilde{\bm q}\bm\beta+\bm\alpha\bm q
\bigr),\qquad
\bm\delta_x=2\bigl(
\bm\kappa-\tilde{\bm q}\bm\mu+\bm\mu\bm q
-\tilde{\bm p}\bm\alpha+\bm\beta\bm p
\bigr).
\end{equation}
Finally, equation \eqref{DDNLS-pol-eq-e} is equivalent to
\begin{equation}\label{mu-nu-x}
\bm\mu_x=2(\bm\delta\bm q+\bm\kappa\bm p
-\tilde{\bm p}\bm\kappa-\tilde{\bm q}\bm\delta).
\end{equation}

Since $\rank {\rm M}^2=1$, one of the quantities $\bm\alpha$ or $\bm\beta$ must vanish. Consider the case $\bm\alpha=\bm f\neq 0$ and $\bm\beta=0$. Without loss of generality, we set $\bm\mu=\bm f\bm g$. Then, according to \eqref{kappa-delta}, we obtain
\[
\bm\kappa=\bm f\bm p,\qquad \bm\delta=\tilde{\bm q}\bm f.
\]
Substituting these expressions into \eqref{alpha-beta-x}, \eqref{kappa-delta-x}, and \eqref{mu-nu-x}, we arrive at the $x$-part \eqref{BT-DDNLS-NC}.

To obtain the essential $t$-part, we substitute the Darboux matrix \eqref{Darboux-DDNLS-NC} into
\[
\partial_t{\rm M}+{\rm M}{\rm V}-\tilde{\rm V}{\rm M}=0,
\]
expand in powers of $\lambda$, and equate the coefficients of equal powers to zero. The coefficients of $\lambda^{\pm 6},\lambda^{\pm 5},\lambda^{\pm 4},\lambda^{\pm 3}$ are identically satisfied or reproduce the $x$-part relations already obtained, whereas the coefficients of $\lambda^2$ and $\lambda^0$ yield \eqref{BT-DDNLS-NC-t-essential-a} and \eqref{BT-DDNLS-NC-t-essential-b}, respectively.

The second case, corresponding to $\bm\alpha=0$ and $\bm\beta=\bm h\neq 0$, is treated in the same way. This gives the Darboux matrix \eqref{Darboux-DDNLS-NC-2}, the $x$-part \eqref{BT-DDNLS-NC-2}, and the essential $t$-part \eqref{BT-DDNLS-NC-2-t-essential}.
\end{proof}

\begin{remark}
The second Darboux family \eqref{Darboux-DDNLS-NC-2}--\eqref{BT-DDNLS-NC-2-t-essential} is obtained from the first one by the involution
\[
{\rm M}(\lambda)\mapsto \sigma_1{\rm M}(\lambda^{-1})\sigma_1,
\]
together with the exchange
\[
(\bm p,\bm q)\leftrightarrow (\bm q,\bm p),\qquad
(\tilde{\bm p},\tilde{\bm q})\leftrightarrow (\tilde{\bm q},\tilde{\bm p}).
\]
Therefore, the two cases are equivalent up to the natural symmetry of the Lax pair, and the second one does not yield an essentially new B\"acklund transformation.
\end{remark}

Now, in order to write the B\"acklund transformations explicitly in terms of $({\bm f}_x, {\bm p}_x, \tilde{\bm q}_x,{\bm g}_x)$, we need to assume some commutativity relations. In particular, we have the following.

\begin{proposition}
    Let $\left[{\bm f},\tilde{{\bm p}}\right]=\left[{\bm f},\tilde{{\bm q}}\right]=0$. Then, the B\"acklund transformation \eqref{BT-DDNLS-NC} can be written as:
    \begin{subequations}\label{BT-DDNLS-NC-epxressed}
\begin{align}
\bm f_x &= 2{\bm f}\bigl(\bm p\,\bm q-\tilde{\bm p}\tilde{\bm q}\bigr),\label{BT-DDNLS-NC-epxressed-a}\\
{\bm p}_x &=2\bigl(\bm g\bm p+\bm q-\tilde{\bm q}-\tilde{\bm p}\bm g\bigr)-{\bm f}^{-1}{\bm f}_x{\bm p},\label{BT-DDNLS-NC-epxressed-b}\\
{\tilde{\bm q}}_x&=2\bigl(\bm p-\tilde{\bm q}\bm g+\bm g\bm q-\tilde{\bm p}\bigr)-\tilde{{\bm q}}{\bm f}^{-1}{\bm f}_x,\label{BT-DDNLS-NC-epxressed-c}\\
{\bm g}_x&=2\bigl(\tilde{\bm q}\bm q+\bm p^2-\tilde{\bm p}\bm p-\tilde{\bm q}^2\bigr)-{\bm f}^{-1}{\bm f}_x{\bm g}.\label{BT-DDNLS-NC-epxressed-d}
\end{align}
\end{subequations}
\end{proposition}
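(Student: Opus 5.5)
The plan is to start from the $x$-part \eqref{BT-DDNLS-NC} of Theorem \ref{DT-BT-DDNLS-th}. First I will use the hypotheses $[\bm f,\tilde{\bm p}]=[\bm f,\tilde{\bm q}]=0$ to move every occurrence of $\bm f$ to the far left (or far right) of each monomial on the right-hand sides. Then I will expand the derivatives of products with the Leibniz rule, keeping the order of factors. Finally, I will isolate $\bm p_x$, $\tilde{\bm q}_x$, $\bm g_x$ by multiplying by $\bm f^{-1}$, which exists because $\mathcal R$ is a division ring and $\bm f\neq0$ (this is the case $\rank{\rm M}^2=1$ with $\bm\alpha=\bm f$). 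I will also use that $\bm f^{-1}$ commutes with $\tilde{\bm p}$ and $\tilde{\bm q}$, which follows by multiplying $\bm f\tilde{\bm p}=\tilde{\bm p}\bm f$ by $\bm f^{-1}$ on both sides.

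The steps in order are as follows. For \eqref{BT-DDNLS-NC-epxressed-a}, write $\tilde{\bm p}\tilde{\bm q}\bm f=\bm f\tilde{\bm p}\tilde{\bm q}$ in \eqref{BT-DDNLS-NC-a}, which gives $\bm f_x=2\bm f(\bm p\bm q-\tilde{\bm p}\tilde{\bm q})$ at once. For \eqref{BT-DDNLS-NC-epxressed-b}, expand $(\bm f\bm p)_x=\bm f_x\bm p+\bm f\bm p_x$. Then rewrite the right-hand side of \eqref{BT-DDNLS-NC-b} as $2\bm f(\bm g\bm p+\bm q-\tilde{\bm q}-\tilde{\bm p}\bm g)$, using $\tilde{\bm q}\bm f=\bm f\tilde{\bm q}$ and $\tilde{\bm p}\bm f\bm g=\bm f\tilde{\bm p}\bm g$, and left-multiply by $\bm f^{-1}$. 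For \eqref{BT-DDNLS-NC-epxressed-d}, proceed the same way: $(\bm f\bm g)_x=\bm f_x\bm g+\bm f\bm g_x$, and the right-hand side of \eqref{BT-DDNLS-NC-d} becomes $2\bm f(\tilde{\bm q}\bm q+\bm p^2-\tilde{\bm p}\bm p-\tilde{\bm q}^2)$; left-multiplying by $\bm f^{-1}$ gives the claim.

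For \eqref{BT-DDNLS-NC-epxressed-c}, I will first replace $\tilde{\bm q}\bm f$ by $\bm f\tilde{\bm q}$ before differentiating, so that $(\tilde{\bm q}\bm f)_x=(\bm f\tilde{\bm q})_x=\bm f_x\tilde{\bm q}+\bm f\tilde{\bm q}_x$. The right-hand side of \eqref{BT-DDNLS-NC-c} becomes $2\bm f(\bm p-\tilde{\bm q}\bm g+\bm g\bm q-\tilde{\bm p})$, using $\tilde{\bm q}\bm f\bm g=\bm f\tilde{\bm q}\bm g$ and $\tilde{\bm p}\bm f=\bm f\tilde{\bm p}$. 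Left-multiplying by $\bm f^{-1}$ then gives $\tilde{\bm q}_x=2(\bm p-\tilde{\bm q}\bm g+\bm g\bm q-\tilde{\bm p})-\bm f^{-1}\bm f_x\tilde{\bm q}$.

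The main obstacle is matching the position of the correction term in \eqref{BT-DDNLS-NC-epxressed-c}, which is written as $\tilde{\bm q}\bm f^{-1}\bm f_x$ rather than $\bm f^{-1}\bm f_x\tilde{\bm q}$. These agree only if $\bm f^{-1}\bm f_x$ commutes with $\tilde{\bm q}$. We have $\tilde{\bm q}\bm f^{-1}=\bm f^{-1}\tilde{\bm q}$, so the question reduces to whether $\tilde{\bm q}$ commutes with $\bm f_x$, equivalently (by \eqref{BT-DDNLS-NC-epxressed-a}) with $\bm p\bm q-\tilde{\bm p}\tilde{\bm q}$. I would try to obtain this by differentiating the hypothesis $[\bm f,\tilde{\bm q}]=0$, which gives $[\bm f_x,\tilde{\bm q}]=-[\bm f,\tilde{\bm q}_x]$. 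If that does not close, I would read the hypothesis as commutation of $\bm f$ and its derivative with the tilded fields, under which the two orderings coincide. Alternatively, I would present the equation in the form $\tilde{\bm q}_x=2(\bm p-\tilde{\bm q}\bm g+\bm g\bm q-\tilde{\bm p})-\bm f^{-1}\bm f_x\tilde{\bm q}$, which follows from the stated hypotheses alone.
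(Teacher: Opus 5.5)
Your argument is the same as the paper's. You use $[\bm f,\tilde{\bm p}]=[\bm f,\tilde{\bm q}]=0$ to pull $\bm f$ to the left of each right-hand side, expand with the Leibniz rule, and left-multiply by $\bm f^{-1}$. For \eqref{BT-DDNLS-NC-epxressed-a}, \eqref{BT-DDNLS-NC-epxressed-b} and \eqref{BT-DDNLS-NC-epxressed-d} your computations are correct and match the paper.

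On \eqref{BT-DDNLS-NC-epxressed-c} your suspicion is justified, and the problem lies in the statement, not in your derivation. The paper only says ``in a similar way''. To reach the ordering $\tilde{\bm q}\bm f^{-1}\bm f_x$, one has to expand $(\tilde{\bm q}\bm f)_x=\tilde{\bm q}_x\bm f+\tilde{\bm q}\bm f_x$ and then move $\bm f$ past $\tilde{\bm q}_x$. That silently uses $[\bm f,\tilde{\bm q}_x]=0$, which is not among the hypotheses.

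Your first idea, differentiating the hypothesis, does not close this. It gives only $[\bm f_x,\tilde{\bm q}]=-[\bm f,\tilde{\bm q}_x]$, not that either side vanishes. By \eqref{BT-DDNLS-NC-epxressed-a}, the stated form and yours differ by
\[
\bm f^{-1}\bm f_x\tilde{\bm q}-\tilde{\bm q}\bm f^{-1}\bm f_x=2\bigl[\bm p\bm q-\tilde{\bm p}\tilde{\bm q},\tilde{\bm q}\bigr],
\]
and nothing in the hypotheses forces this to be zero.

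So the only consequence of $[\bm f,\tilde{\bm p}]=[\bm f,\tilde{\bm q}]=0$ is your version:
\[
\tilde{\bm q}_x=2\bigl(\bm p-\tilde{\bm q}\bm g+\bm g\bm q-\tilde{\bm p}\bigr)-\bm f^{-1}\bm f_x\tilde{\bm q}.
\]
Equivalently, $\bm f^{-1}\tilde{\bm q}_x\bm f=2\bigl(\bm p-\tilde{\bm q}\bm g+\bm g\bm q-\tilde{\bm p}\bigr)-\tilde{\bm q}\bm f^{-1}\bm f_x$, which shows where the paper's ordering comes from. The displayed \eqref{BT-DDNLS-NC-epxressed-c} holds exactly under one extra assumption, which can be written in any of three equivalent ways:
\[
[\bm f,\tilde{\bm q}_x]=0,\qquad [\bm f_x,\tilde{\bm q}]=0,\qquad \bigl[\bm p\bm q-\tilde{\bm p}\tilde{\bm q},\tilde{\bm q}\bigr]=0.
\]
Your proof should commit to one of two resolutions rather than leave it open. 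Either prove the corrected form of the third equation, or add this commutation to the hypotheses and then prove the statement as written.
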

\begin{proof}
    Set $\alpha={\bm f}^{-1}{\bm f}_x$. Then, ${\bm f}_x={\bm f}\alpha$, equation \eqref{BT-DDNLS-NC-a} can be written as \eqref{BT-DDNLS-NC-epxressed-a}, and the latter is equivalent to $2\bigl(\bm p\,\bm q-\tilde{\bm p}\tilde{\bm q}\bigr)=\alpha$. Equation \eqref{BT-DDNLS-NC-b} is equivalent to:
$$
\bm f_x\bm p+\bm f\bm p_x =2{\bm f}\bigl(\bm g\bm p+\bm q-\tilde{\bm q}-\tilde{\bm p}\bm g\bigr)
$$
and after multiplying the above from the left by ${\bm f}^{-1}$, we obtain \eqref{BT-DDNLS-NC-epxressed-b}. In a similar way, we obtain \eqref{BT-DDNLS-NC-epxressed-c} and \eqref{BT-DDNLS-NC-epxressed-d}.
\end{proof}

For the first integrals of system \eqref{BT-DDNLS-NC-epxressed}, we need to assume some extra commutativity conditions. Specifically, we have the following. 

\begin{proposition}
    Let ${\bm f}, {\bm g}\in Z(\mathcal{R})$. Then, we have the following.
    \begin{enumerate}
        \item If $\left[\tilde{{\bm p}},{\bm p}\right]=\left[\tilde{{\bm q}},{\bm q}\right]$, then $\Phi_1=\bm{f}^2\left[2{\bm f}^{-1}{\bm f}_x({\bm g}-{\bm p}\tilde{{\bm q}})+({\bm g}-{\bm p}\tilde{{\bm q}})_x\right]$ is a first integral of system \eqref{BT-DDNLS-NC-epxressed}.
        \item If  $\left[\tilde{{\bm p}},{\bm p}\right]=\left[\tilde{{\bm q}},{\bm q}\right]$ and $\left[\tilde{{\bm p}},\tilde{{\bm q}}\right]=\left[{\bm p},{\bm q}\right]$, then $\Phi_2={\bm f}^2\left({\bm g}^2+1-{\bm p}^2-\tilde{{\bm q}}^2\right)$ is a first integral of system \eqref{BT-DDNLS-NC-epxressed}.
    \end{enumerate}
\end{proposition}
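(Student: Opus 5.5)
The plan is a direct differentiation: substitute the $x$-part \eqref{BT-DDNLS-NC-epxressed} and use centrality of $\bm f,\bm g$ together with the stated commutator hypotheses. The first step is to record the simplifications that follow from $\bm f,\bm g\in Z(\mathcal R)$. Then $\bm f^{-1}\bm f_x=2(\bm p\bm q-\tilde{\bm p}\tilde{\bm q})=:\bm\rho$ and $\bm g\bm p=\bm p\bm g$, $\tilde{\bm p}\bm g=\bm g\tilde{\bm p}$, and so on. The system becomes
$$
\bm p_x=2(\bm q-\tilde{\bm q})+2\bm g(\bm p-\tilde{\bm p})-\bm\rho\bm p,\qquad
\tilde{\bm q}_x=2(\bm p-\tilde{\bm p})+2\bm g(\bm q-\tilde{\bm q})-\tilde{\bm q}\bm\rho,
$$
$$
\bm g_x=2(\tilde{\bm q}\bm q+\bm p^2-\tilde{\bm p}\bm p-\tilde{\bm q}^2)-\bm\rho\bm g .
$$
One point needs care here. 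Strictly, $\bm f_x$ is central only if $\bm\rho$ is central, and this is implicitly required for consistency. I will use only that $\bm f^2$ and $\bm g$ commute with everything, and keep $\bm\rho$ as a left or right factor exactly where it appears.

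For part 1, set $\bm W=\bm g-\bm p\tilde{\bm q}$. Since $\bm f$ is central, $\Phi_1=2\bm f\bm f_x\bm W+\bm f^2\bm W_x=\partial_x(\bm f^2\bm W)$; the claim to prove is $\partial_x\Phi_1=0$. I will first compute $\Phi_1$ itself in closed form:
$$
\Phi_1=\bm f^2\bigl(2\bm\rho\bm W+\bm g_x-\bm p_x\tilde{\bm q}-\bm p\tilde{\bm q}_x\bigr).
$$
Inserting the expressions above, the $\bm g$-linear terms $-\bm\rho\bm g+2\bm\rho\bm g$ combine with $-2\bm g(\bm p-\tilde{\bm p})\tilde{\bm q}-2\bm p\bm g(\bm q-\tilde{\bm q})$. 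The $\bm\rho$-cubic terms $-2\bm\rho\bm p\tilde{\bm q}+\bm\rho\bm p\tilde{\bm q}+\bm p\tilde{\bm q}\bm\rho$ reduce to the commutator $[\bm p\tilde{\bm q},\bm\rho]$. I expect, and will aim to show, that with $[\tilde{\bm p},\bm p]=[\tilde{\bm q},\bm q]$ everything collapses to a quadratic expression. The hoped-for result is $\Phi_1=2\bm f^2\bigl(\bm\rho\bm g+\bm p^2-\tilde{\bm q}^2+\tilde{\bm p}\tilde{\bm q}-\bm p\bm q+\cdots\bigr)$, in which the surviving pieces pair up. I will then differentiate this simplified $\Phi_1$ once more using the same substitution rules. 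The goal is to see every term cancel, with leftover terms appearing only as multiples of $[\tilde{\bm p},\bm p]-[\tilde{\bm q},\bm q]$, which vanishes by hypothesis.

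For part 2, differentiate directly:
$$
\partial_x\Phi_2=\bm f^2\Bigl(2\bm\rho(\bm g^2+1-\bm p^2-\tilde{\bm q}^2)+2\bm g\bm g_x-\bm p_x\bm p-\bm p\bm p_x-\tilde{\bm q}_x\tilde{\bm q}-\tilde{\bm q}\tilde{\bm q}_x\Bigr).
$$
After substitution I will sort the terms by degree in $\bm g$.

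\begin{itemize}
\item The $\bm g^2$ terms cancel, since $2\bm\rho\bm g^2-2\bm\rho\bm g^2=0$.
\item The $\bm g$-linear terms cancel exactly: $4\bm g(\tilde{\bm q}\bm q+\bm p^2-\tilde{\bm p}\bm p-\tilde{\bm q}^2)$ against $-2\bm g\{(\bm p-\tilde{\bm p})\bm p+\bm p(\bm p-\tilde{\bm p})+(\bm q-\tilde{\bm q})\tilde{\bm q}+\tilde{\bm q}(\bm q-\tilde{\bm q})\}$ leaves $2\bm g([\bm p,\tilde{\bm p}]-[\bm q,\tilde{\bm q}])$, which is zero by the first hypothesis.
\item The $\bm g$-free terms are $2\bm\rho$ together with $-2\{(\bm q-\tilde{\bm q})\bm p+\bm p(\bm q-\tilde{\bm q})+(\bm p-\tilde{\bm p})\tilde{\bm q}+\tilde{\bm q}(\bm p-\tilde{\bm p})\}$. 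Since $2\bm\rho=4(\bm p\bm q-\tilde{\bm p}\tilde{\bm q})$, these reduce to $2([\bm p,\bm q]-[\tilde{\bm p},\tilde{\bm q}])$ plus the term $2\bm q\bm p$ rearranged against $\tilde{\bm q}\bm p$ and $\bm p\tilde{\bm q}$. I will check that this vanishes using the second hypothesis.
\item The cubic $\bm\rho$-terms $-2\bm\rho(\bm p^2+\tilde{\bm q}^2)+\bm\rho\bm p^2+\bm p\bm\rho\bm p+\tilde{\bm q}\bm\rho\tilde{\bm q}+\tilde{\bm q}^2\bm\rho$ must also be shown to cancel. This is where centrality of $\bm f_x$, hence of $\bm\rho$, will be invoked.
\end{itemize}

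The main obstacle is noncommutative bookkeeping: keeping the orderings of $\bm p,\tilde{\bm p},\bm q,\tilde{\bm q}$ straight, and identifying exactly which leftover combinations are the hypothesised commutator differences. I will handle it by writing every leftover in the form $[\cdot,\cdot]$ before invoking the hypotheses.
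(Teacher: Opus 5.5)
Your Part~2 is correct and coincides with the paper's computation. Sorting $\partial_x\Phi_2$ by degree in $\bm g$ gives $2\bm f^2\{\bm g([\tilde{\bm q},\bm q]-[\tilde{\bm p},\bm p])+[\bm p,\bm q]-[\tilde{\bm p},\tilde{\bm q}]\}$. The cross terms $\pm\tilde{\bm q}\bm p$ and $\pm\bm p\tilde{\bm q}$ cancel in pairs, so nothing else is left over. The cubic $\bm\rho$-terms cancel once $\bm\rho$ is central.

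That centrality is not an extra consistency requirement. For any derivation, $[\bm f_x,\bm y]=\partial_x[\bm f,\bm y]-[\bm f,\bm y_x]=0$, so $\bm f\in Z(\mathcal R)$ forces $\bm f_x$ to be central, and hence $\bm\rho=\bm f^{-1}\bm f_x$ as well. State this up front rather than promising to keep $\bm\rho$ one-sided, since you need it in both parts.

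Part~1, however, does not go through as planned. Your predicted closed form for $\Phi_1$ is wrong. The $\bm g$-linear terms cancel exactly, $\bm\rho\bm g-2\bm g(\bm p\bm q-\tilde{\bm p}\tilde{\bm q})=0$, and so do the $\bm p^2$ and $\tilde{\bm q}^2$ terms. The computation you set up therefore yields
\[
\Phi_1=2\bm f^2\bigl([\tilde{\bm q},\bm q]-[\tilde{\bm p},\bm p]\bigr)+\bm f^2[\bm p\tilde{\bm q},\bm\rho]=2\bm f^2\bigl([\tilde{\bm q},\bm q]-[\tilde{\bm p},\bm p]\bigr),
\]
which vanishes by hypothesis. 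This is exactly the formula the paper displays, under the label $\Phi_{1,x}$.

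As you noted, $\Phi_1=\partial_x\bigl(\bm f^2(\bm g-\bm p\tilde{\bm q})\bigr)$. So the substance of the claim is that $\bm f^2(\bm g-\bm p\tilde{\bm q})$ is conserved, i.e.\ $\Phi_1\equiv0$, and then $\partial_x\Phi_1=0$ is trivial.

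Your second step, differentiating a nonzero remainder once more ``using the same substitution rules'', could not be carried out in any case. Any expression containing $\tilde{\bm p}$ or $\bm q$ (such as your $\bm\rho\bm g$ or $\tilde{\bm p}\tilde{\bm q}-\bm p\bm q$) requires $\tilde{\bm p}_x$ and $\bm q_x$, and system \eqref{BT-DDNLS-NC-epxressed} does not provide these. The repair is to stop after the first computation and show $\Phi_1=0$.
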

\begin{proof}
    Using equations of system \eqref{BT-DDNLS-NC-epxressed}, one can prove that:
    $$
    \Phi_{1,x}=2{\bm f}^2(\left[\tilde{\bm{q}},\bm{q}\right]-\left[\tilde{\bm{p}},\bm{p}\right]),\quad \Phi_{2,x}=2{\bm f}^2\left\{\bm{g}\left(\left[\tilde{\bm{q}},\bm{q}\right]-\left[\tilde{\bm{p}},\bm{p}\right]\right)+\left[\bm{p},\bm{q}\right]-\left[\tilde{\bm{p}},\tilde{\bm{q}}\right]\right\}.
    $$
\end{proof}

\section{Noncommutative integrable discretisations of NLS systems}\label{NCDISNLS}
In this section, we first briefly recall the integrable discretisations of the noncommutative NLS system \eqref{NLS-eq-NC} derived in \cite{KRX} so that the section is self-contained. Then, we construct integrable discretisations for the noncommutative derivative NLS system \eqref{DNLS-system-NC} and the noncommutative deformation of the derivative NLS system \eqref{system-def-DNLS}.

\subsection{Discretisation of the noncommutative NLS system}
In \cite{KRX} we employed the Darboux matrix \eqref{DT-10} in a discrete setting, i.e. ${\rm{M}}={\rm{M}}(\bm{f}_{00},\bm{p}_{00},\bm{q}_{10})$, and we showed that the discrete zero-curvature condition
$$
{\rm{M}}(\bm{f}_{01},\bm{p}_{01},\bm{q}_{11}){\rm{M}}(\bm{g}_{00},\bm{p}_{00},\bm{q}_{01})={\rm{M}}(\bm{g}_{10},\bm{p}_{10},\bm{q}_{11}){\rm{M}}(\bm{f}_{00},\bm{p}_{00},\bm{q}_{10}),
$$
is equivalent to the noncommutative discrete integrable system
\begin{subequations}\label{NLS-dis-sys}
\begin{align}
    & \bm{f}_{01}+\bm{g}_{00}=\bm{g}_{10}+\bm{f}_{00},\label{NLS-dis-sys-a}\\
    & \bm{f}_{01}\bm{g}_{00}+\bm{p}_{01}\bm{q}_{01}=\bm{g}_{10}\bm{f}_{00}+\bm{p}_{10}\bm{q}_{10},\label{NLS-dis-sys-b}\\
    & \bm{f}_{01}\bm{p}_{00}+\bm{p}_{01}=\bm{g}_{10}\bm{p}_{00}+\bm{p}_{10},\label{NLS-dis-sys-c}\\
    & \bm{q}_{11}\bm{g}_{00}+\bm{q}_{01}=\bm{q}_{11}\bm{f}_{00}+\bm{q}_{10}\label{NLS-dis-sys-d}.
\end{align}
\end{subequations}
We showed that, under the conditions $[\bm{f}_{00},\bm{p}_{00} \bm{q}_{11}] = [\bm{g}_{00},\bm{p}_{00} \bm{q}_{11}] =0$, system \eqref{NLS-dis-sys} admits the first integrals:
$$
({\cal{T}}-1)\left(\bm{f}_{00}-\bm{p}_{00}\bm{q}_{10}\right) = 0,\quad ({\cal{S}}-1)\left(\bm{g}_{00}-\bm{p}_{00}\bm{q}_{01}\right) = 0.
$$
Then, on the level-sets of the above first integrals, we restricted system \eqref{NLS-dis-sys} to the noncommutative Adler--Yamilov system:
$$
 \bm{p}_{01}=\bm{p}_{10}-\left(a- b\right)(1+\bm{p}\bm{q}_{11})^{-1}\bm{p} =\quad \bm{q}_{01}=\bm{q}_{10}+\left(a- b\right)\bm{q}_{11},
$$
which is the noncommutative version of system \eqref{Adler-Yamilov}.

\subsection{Discretisation of the noncommutative derivative NLS system}
For the discretisation of system \eqref{DNLS-system-NC}, we employ the Darboux matrix \eqref{DT-DNLS-1} in a discrete notation ${\rm M}={\rm M}(\bm{f}_{00},\bm{p}_{00},\bm{q}_{10};c_1,c_2)$ and also consider matrix ${\rm N}={\rm M}(\bm{g}_{00},\bm{p}_{00},\bm{q}_{01};1,1)$. 

We have the following. 

\begin{proposition}\label{sl2-nc-prop} The following system of partial difference equations
    \begin{subequations} \label{sl2-res-eq-nc}
\begin{align}
 &{\bm f}_{01}{\bm g}_{00}-{\bm g}_{10}{\bm f}_{00} = 0,\\
 & {\bm q}_{10}{\bm f}_{00}-{\bm q}_{11}{\bm f}_{01}  + c_1 {\bm q}_{11}{\bm g}_{10} - c_2 {\bm q}_{01}{\bm g}_{00}=0, \\
 &{\bm f}_{01}{\bm p}_{01} - {\bm f}_{00}{\bm p}_{00} - c_2 {\bm g}_{10}{\bm p}_{10} + c_1 {\bm g}_{00}{\bm p}_{00} = 0,\\
 &{\bm f}_{01}-{\bm f}_{00}- c_1 ({\bm g}_{10}-{\bm g}_{00}) - {\bm g}_{10} {\bm p}_{10}{\bm q}_{10}{\bm f}_{00} + {\bm f}_{01}{\bm p}_{01}{\bm q}_{01} {\bm g}_{00} = 0,
\end{align}
\end{subequations}
is integrable with Lax pair
\begin{equation}\label{Lax-NC-Dis-DNLS}
    \rm{\Psi}_{10}={\rm M}(\bm{f}_{00},\bm{p}_{00},\bm{q}_{10};c_1,c_2)\rm{\Psi},\quad \rm{\Psi}_{01}={\rm M}(\bm{g}_{00},\bm{p}_{00},\bm{q}_{01};1,1)\rm{\Psi},
\end{equation}
where ${\rm M}(\bm{f}_{00},\bm{p}_{00},\bm{q}_{10};c_1,c_2)=\begin{pmatrix}
    \lambda^2 {\bm f}_{00}+c_1 & \lambda {\bm f}{\bm p}\\
    \lambda {\bm q}_{10} & c_2
\end{pmatrix}$.
\end{proposition}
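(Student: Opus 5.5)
The plan is to verify directly that the compatibility condition of the linear problem \eqref{Lax-NC-Dis-DNLS}, namely the discrete zero-curvature equation
$$
{\rm M}(\bm f_{01},\bm p_{01},\bm q_{11};c_1,c_2)\,{\rm M}(\bm g_{00},\bm p_{00},\bm q_{01};1,1)={\rm M}(\bm g_{10},\bm p_{10},\bm q_{11};1,1)\,{\rm M}(\bm f_{00},\bm p_{00},\bm q_{10};c_1,c_2),
$$
holds for all $\lambda\in\mathbb{C}$ if and only if system \eqref{sl2-res-eq-nc} holds.

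\textbf{Setting up the Lax matrix.} I will take the matrix in its discrete form inherited from \eqref{DT-DNLS-1}:
$$
{\rm M}=\begin{pmatrix}\lambda^2\bm f_{00}+c_1 & \lambda\bm f_{00}\bm p_{00}\\ \lambda\bm q_{10}\bm f_{00} & c_2\end{pmatrix}.
$$
This is consistent with the lower-left entry $\tilde{\bm q}\bm f$ of \eqref{DT-DNLS-1}, and ${\rm N}$ is the same matrix with $(\bm f,\bm q_{10},c_1,c_2)$ replaced by $(\bm g,\bm q_{01},1,1)$. Because the entries do not commute, the order of all factors must be kept exactly as produced by matrix multiplication.

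\textbf{Expanding the products.} Both sides are matrix polynomials in $\lambda$. The diagonal entries are even in $\lambda$, of degree at most $4$, and the off-diagonal entries are odd, of degree at most $3$. I will then equate coefficients entrywise.
\begin{itemize}
\item $(1,1)$ entry, $\lambda^4$: this gives $\bm f_{01}\bm g_{00}=\bm g_{10}\bm f_{00}$, which is the first equation.
\item $(1,1)$ entry, $\lambda^2$: this gives $\bm f_{01}+c_1\bm g_{00}+\bm f_{01}\bm p_{01}\bm q_{01}\bm g_{00}=\bm f_{00}+c_1\bm g_{10}+\bm g_{10}\bm p_{10}\bm q_{10}\bm f_{00}$, which is the fourth equation.
\item $(1,1)$ entry, $\lambda^0$: both sides equal $c_1$, so this is trivially satisfied.
\item $(1,2)$ entry, $\lambda^3$: this reads $\bm f_{01}\bm g_{00}\bm p_{00}=\bm g_{10}\bm f_{00}\bm p_{00}$, a consequence of the first equation.
\item $(1,2)$ entry, $\lambda^1$: this gives the third equation.
\item $(2,1)$ entry, $\lambda^3$: this reads $\bm q_{11}\bm f_{01}\bm g_{00}=\bm q_{11}\bm g_{10}\bm f_{00}$, again a consequence of the first equation.
\item $(2,1)$ entry, $\lambda^1$: this gives $\bm q_{11}\bm f_{01}+c_2\bm q_{01}\bm g_{00}=c_1\bm q_{11}\bm g_{10}+\bm q_{10}\bm f_{00}$, which is the second equation.
\item $(2,2)$ entry: the $\lambda^2$ term is $\bm q_{11}\bm f_{01}\bm g_{00}\bm p_{00}$ against $\bm q_{11}\bm g_{10}\bm f_{00}\bm p_{00}$, a consequence of the first equation, and the constant term is $c_2$ on both sides.
\end{itemize}
Hence the zero-curvature condition is equivalent to \eqref{sl2-res-eq-nc}. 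By construction, \eqref{sl2-res-eq-nc} is then integrable, with the stated Lax pair.

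\textbf{Main obstacle.} The only delicate point is bookkeeping. The noncommutative products must be expanded without reordering, and the correct lower-left entry $\bm q_{10}\bm f_{00}$ must be used; writing only $\bm q_{10}$ would break the match with the second and fourth equations. I also need to check that every higher-order identity, in particular those at $\lambda^3$ and in the $(2,2)$ entry, follows from the first equation by left or right multiplication alone. No commutativity assumption is needed for that, and this is what makes the equivalence hold in a general division ring.
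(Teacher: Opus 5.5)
Your proposal is correct and follows the same route as the paper: the paper simply asserts that expanding the discrete zero-curvature condition in powers of $\lambda$ gives exactly system \eqref{sl2-res-eq-nc}, and your coefficient-by-coefficient bookkeeping checks out. You are also right to take the lower-left entry $\lambda\bm q_{10}\bm f_{00}$ from \eqref{DT-DNLS-1} rather than the $\lambda\bm q_{10}$ printed in the statement. With the printed entry, the $\lambda^3$ coefficient of the $(2,1)$ entry would force $\bm q_{11}\bm g_{00}=\bm q_{11}\bm f_{00}$, and the $\lambda^2$ coefficient of the $(1,1)$ entry would not reproduce the fourth equation.
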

\begin{proof}
    It can be readily shown that system \eqref{sl2-res-eq-nc} is equivalent to the compatibility condition of system \eqref{Lax-NC-Dis-DNLS}, 
    $$
    {\rm M}(\bm{f}_{01},\bm{p}_{01},\bm{q}_{11};c_1,c_2){\rm M}(\bm{g}_{00},\bm{p}_{00},\bm{q}_{01};1,1)={\rm M}(\bm{g}_{10},\bm{p}_{10},\bm{q}_{11};1,1){\rm M}(\bm{f}_{00},\bm{p}_{00},\bm{q}_{10};c_1,c_2),
    $$
    for any $\lambda\in\mathbb{C}$.
\end{proof}

System \eqref{sl2-res-eq-nc} is a vertex-bond system where the fields ${\bm p}_{ij}$ and ${\bm q}_{ij}$, $i,j\in\{0,1\}$, are placed on the vertices of the square, while ${\bm f}_{ij}$ and ${\bm g}_{ij}$, $i,j\in\{0,1\}$, lie on the edges of the square, as in Figure \ref{fig-ivp}.

\begin{figure}[ht]
\centertexdraw{
\setunitscale 0.5
\move(-4.5 -2) \linewd 0.02 \setgray 0.4 \arrowheadtype t:V \arrowheadsize l:.12 w:.06 \avec(-4.5 -1.5) 
\move(-4.5 -2) \arrowheadtype t:V  \avec(-4 -2)
\arrowheadsize l:.20 w:.10
\move(-.5 .5) \linewd 0.02 \setgray 0.4 \arrowheadtype t:F \avec(-1.5 1.5) 
\move(1.5 -.5) \linewd 0.02 \setgray 0.4 \arrowheadtype t:F \avec(2.5 -1.5) 
\setgray 0.5
\linewd 0.04 \move (-4.5 1.5)  \lvec (-5.9 1.5) \lvec (-5.9 .1)
\move (-5.9 1.5) \fcir f:0.5 r:0.075
\htext (-6.1 .7) {\scriptsize{$\bm{g}$}}
\htext (-4.4 .7) {\scriptsize{$\bm{g}_{10}$}}
\setgray 0.0
\linewd 0.04 \move (-2 -2.5) \lvec (-2 -2) \lvec (-1 -2) \lvec (-1 -1) \lvec (0 -1) \lvec (0 0) \lvec (1 0) \lvec(1 1) \lvec (2 1) \lvec(2 2) \lvec(2.5 2)
\move (-5.9 .1) \lvec (-4.5 .1) \lvec (-4.5 1.5)  
\linewd 0.015 \lpatt (.1 .1 ) \move (-2 -2) \lvec (-2 -1) \lvec(-1 -1) \lvec (-1 0) \lvec (0 0) \lvec (0 1) \lvec(1 1) \lvec (1 2) \lvec (2 2) \lvec (2 2.5)
\move(-2.5 -2) \lvec(-2 -2) \move(-2.5 -2) \lvec(-2 -2)
\move (-1 -2.5) \lvec (-1 -2) \lvec(0 -2) \lvec(0 -1) \lvec(1 -1) \lvec(1 0) \lvec(2 0) \lvec(2 1) \lvec(2.5 1)
\move (-2 -2) \fcir f:0.0 r:0.075 \move (-1 -2) \fcir f:0.0 r:0.075
\move (-1 -1) \fcir f:0.0 r:0.075 \move (0 -1) \fcir f:0.0 r:0.075
\move (0 0) \fcir f:0.0 r:0.075 \move (1 0) \fcir f:0.0 r:0.075  
\move (1 1) \fcir f:0.0 r:0.075 \move (2 1) \fcir f:0.0 r:0.075
\move (2 2) \fcir f:0.0 r:0.075
\move (-5.9 .1) \fcir f:0.0 r:0.075 \move (-4.5 .1) \fcir f:0.0 r:0.075 \move (-4.5 1.5) \fcir f:0.0 r:0.075
\htext (-3.9 -2.2) {\scriptsize{$n$}}
\htext (-4.6 -1.4) {\scriptsize{$m$}}
\htext (-6.1 -.2) {\scriptsize{$\bm{p},\bm{q}$}}
\htext (-4.6 -.2) {\scriptsize{$\bm{p}_{10},\bm{q}_{10}$}}
\htext (-5.4 0.2) {\scriptsize{$\bm{f}$}}
\htext (-5.4 1.2) {\scriptsize{$\bm{f}_{01}$}}
\htext (-6.3 1.6) {\scriptsize{$\bm{p}_{01},\bm{q}_{01}$}}
\htext (-4.7 1.6) {\scriptsize{$\bm{p}_{11},\bm{q}_{11}$}}
\htext (-3 1.6) {{\scriptsize{Evolution determined by $(\bm{f}_{01},\bm{g}_{00},\bm{p}_{01},\bm{q}_{01})$}}}
\htext (.7 -1.75) {{\scriptsize{Evolution determined by $(\bm{f}_{00},\bm{g}_{10},\bm{p}_{10},\bm{q}_{10})$}}}
}
\caption{{Vertex-bond system. Initial value problem.}} \label{fig-ivp}
\end{figure}

For system \eqref{sl2-res-eq-nc} we can define the initial-value-problem on the staircase, as in Figure \ref{fig-ivp}. That is, we can solve either for $(\bm{f}_{01},\bm{g}_{00},\bm{p}_{01},\bm{q}_{01})$ or  $(\bm{f}_{00},\bm{g}_{10},\bm{p}_{10},\bm{q}_{10})$. If we solve for the first tetrad, after some standard calculations, we obtain a trivial solution
$$
\bm{f}_{01}=c_1\bm{g}_{10},\quad \bm{g}_{00}=\frac{1}{c_1}\bm{f}_{00},\quad \bm{p}_{01}=\frac{c_2}{c_1}\bm{p}_{10},\quad \bm{q}_{01}=\frac{c_1}{c_2}\bm{q}_{10},
$$
and a nontrivial solution, given by:
\begin{subequations}\label{nc-sl2-ivp}
    \begin{align}
        \bm{g}_{00}&=\bm{g}_{10}\bm{A}\bm{B}^{-1},\\
        \bm{f}_{01}&=\bm{f}_{00}\bm{B}\bm{A}^{-1},\\
        \bm{p}_{01}&=\bm{A}\bm{B}^{-1}\bm{f}_{00}^{-1}\left[\bm{g}_{10}(c_2\bm{p}_{10}-\bm{A}\bm{B}^{-1}\bm{p}_{00})+\bm{f}_{00}\bm{p}_{00}\right],\\
        \bm{q}_{01}&=(c_2\bm{g}_{10})^{-1}\left[\bm{f}_{00}(\bm{q}_{10}-\bm{B}\bm{A}^{-1}\bm{q}_{11})+c_1\bm{g}_{10}\bm{q}_{11}\right]\bm{B}\bm{A}^{-1},
    \end{align}
\end{subequations}
which defines the evolution towards the north-west direction of the  staircase.

Next, we employ the Darboux matrices \eqref{DT-DNLS-1} and Darboux matrix \eqref{DT-DNLS-1-deg} to construct another integrable equation. In particular, we consider the system of linear equations
\begin{equation}\label{Lax-6point}
    \Psi_{10}=\begin{pmatrix}\lambda^2\bm{p}_{00}^{-1} & \lambda\\ \lambda &  0 \end{pmatrix}\Psi_{00},\quad  \Psi_{01}=\begin{pmatrix}\lambda^2\bm{g}_{00}+1 & \lambda\bm{g}_{00}\bm{p}\\ \lambda\bm{q}_{01}\bm{g}_{00} &  1 \end{pmatrix}\Psi_{00}.
\end{equation}

The compatibility condition of the overdetermined system \eqref{Lax-6point} is equivalent to the system 
\begin{subequations}\label{6point-sys}
    \begin{align}
        &\bm{p}_{01}^{-1}\bm{g}_{00}=\bm{g}_{10}\bm{p}_{00}^{-1},\label{6point-sys-a}\\
&\bm{p}_{01}^{-1}+\bm{q}_{01}\bm{g}_{00}=\bm{p}_{00}^{-1}+\bm{g}_{10}\bm{p}_{10},\label{6point-sys-b}\\
&\bm{g}_{00}=\bm{q}_{11}\bm{g}_{10}\bm{p}_{00}^{-1}.\label{6point-sys-c}
    \end{align}
\end{subequations}
Equations \eqref{6point-sys-a} and \eqref{6point-sys-c} can be written as $\bm{g}_{10}=\bm{p}_{01}^{-1}\bm{g}_{00}\bm{p}_{00}$ and $\bm{g}_{10}=\bm{q}_{11}^{-1}\bm{g}_{00}\bm{p}_{00}$, respectively. From the latter it follows that $\bm{p}_{01}=\bm{q}_{11}$ which is equivalent to $\bm{p}_{00}=\bm{q}_{10}$ or  $\bm{p}_{-11}=\bm{q}_{01}$. That is, \eqref{6point-sys-b} can be written as $\bm{p}_{01}^{-1}+\bm{p}_{-11}\bm{g}_{00}=\bm{p}_{00}^{-1}+\bm{p}_{01}^{-1}\bm{g}_{00}\bm{p}_{00}\bm{p}_{10}$. If $\left[\bm{g},\bm{p}_{-11}^{-1}\bm{p}_{01}^{-1}\right]$, then the former equation can be solved for $\bm{g}_{00}$:
$$
\bm{g}_{00}=\bm{p}_{-11}^{-1}(\bm{p}_{00}^{-1}-\bm{p}_{01}^{-1})(1-\bm{p}_{-11}^{-1}\bm{p}_{01}^{-1}\bm{p}_{00}\bm{p}_{10})^{-1}\bm{p}_{00}(1-\bm{p}_{01}^{-1}\bm{p}_{11}^{-1}\bm{p}_{10}\bm{p}_{20}).
$$

Now, equation \eqref{6point-sys-a}, after substituting $\bm{g}_{00}$ from the above expression, takes the form
\begin{equation}\label{6-point-NC}
    \bm{p}_{10}^{-1}-\bm{p}_{11}^{-1}=\bm{p}_{-11}^{-1}(\bm{p}_{00}^{-1}-\bm{p}_{01}^{-1})(1-\bm{p}_{-11}^{-1}\bm{p}_{01}^{-1}\bm{p}_{00}^{-1}\bm{p}_{10})^{-1}\bm{p}_{00}(1-\bm{p}_{01}^{-1}\bm{p}_{11}^{-1}\bm{p}_{10}\bm{p}_{20}).
\end{equation}
Equation \eqref{6-point-NC} is a noncommutative 6-point difference equation defined on the stencil of Figure \ref{6point-ivp}. If we assume that $\bm{p}_{ij}$, $i,j\in\{-1,0,1,2\}$, commute with each other, then the noncommutative difference equation \eqref{6-point-NC} reduces to the standard six-point formula (53) in \cite{SPS}.

Now we can define the initial-value problem on the double staircase, as shown in Figure \ref{6point-ivp}. Equation \eqref{6-point-NC} can be solved for $\bm{p}_{20}$. Indeed, after a couple of simple steps, we obtain:
\begin{equation}\label{6-point-NC-IVP}
    \bm{p}_{20}=\bm{p}_{10}^{-1}\bm{p}_{11}\bm{p}_{01}\left[1-(\bm{p}_{-11}^{-1}(\bm{p}_{00}^{-1}-\bm{p}_{01}^{-1})(1-\bm{p}_{-11}^{-1}\bm{p}_{01}^{-1}\bm{p}_{00}^{-1}\bm{p}_{10})^{-1}\bm{p}_{00})^{-1}(\bm{p}_{10}^{-1}-\bm{p}_{11}^{-1})\right].
\end{equation}
Equation \eqref{6-point-NC-IVP} defines the evolution of equation \eqref{6-point-NC} towards the south-east direction of the double staircase, as shown in Figure \ref{6point-ivp}. Note that the commutative version of  \eqref{6-point-NC}, namely equation (53) in \cite{SPS} which reads:
\begin{equation}\label{sixpointEq}
\frac{p_{01}-p}{p_{01}(p_{01}p_{-11}-pp_{10})}=\frac{p_{11}-p_{10}}{p_{10}(p_{11}p_{01}-p_{10}p_{20})},
\end{equation}
can be solved for $p_{-11}$ as well, whereas the noncommutative equation \eqref{6-point-NC} can be solved for $\bm{p}_{-11}$ only if additional commutativity conditions are assumed.

\begin{figure}[ht]
\centertexdraw{
\setunitscale 0.4
\setgray 0.0
\linewd 0.04 \move (-8 -1) \lvec (-6.5 -1) \lvec (-6.5 0.5) \lvec (-8 0.5) \lvec (-8 -1) \move (-8 0.5) \lvec (-9.5 0.5) \move (-6.5 -1) \lvec (-5 -1)
\move (-8 -1) \fcir f:0.0 r:0.095 \move (-6.5 -1) \fcir f:0.0 r:0.095
\move (-6.5 0.5) \fcir f:0.0 r:0.095 \move (-8 0.5) \fcir f:0.0 r:0.095
\move (-9.5 0.5) \fcir f:0.7 r:0.095 \move (-5 -1) \fcir f:0.0 r:0.095
\move (-2 -2.5) \lvec (-2 -2) \lvec (-1 -2) \lvec (-1 -1) \lvec (0 -1) \lvec (0 0) \lvec (1 0) \lvec(1 1) \lvec (2 1) \lvec(2 2) \lvec(2.5 2)
\linewd 0.04  \move (-2 -2) \lvec (-2 -1) \lvec(-1 -1) \lvec (-1 0) \lvec (0 0) \lvec (0 1) \lvec(1 1) \lvec (1 2) \lvec (2 2) \lvec (2 2.5) \move(-2.5 -2) \lvec(-2 -2) \move(-2.5 -2) \lvec(-2 -2)
\move (-1 -2.5) \lvec (-1 -2) \lvec(0 -2) \lvec(0 -1) \lvec(1 -1) \lvec(1 0) \lvec(2 0) \lvec(2 1) \lvec(2.5 1)
\linewd 0.001
\lpatt (.05 .05) \move (-2 0) \lvec (-1 0) \move (0 1) \lvec (-1 1) \move (0 2) \lvec (1 2) \move (-2.5 -1) \lvec (-2 -1) 
\move  (0 -2) \lvec (1 -2) \move (1 -1) \lvec (2 -1) \move (2 0) \lvec (2.5 0)
\lpatt  (1 1)
\move (-2 -2) \fcir f:0.0 r:0.075 \move (-2 -1) \fcir f:0.0 r:0.075
\move (-1 -2) \fcir f:0.0 r:0.075 \move (0 -2) \fcir f:0.0 r:0.075
\move (-1 -1) \fcir f:0.0 r:0.075 \move (-1 0) \fcir f:0.0 r:0.075 \move (0 -1) \fcir f:0.0 r:0.075 \move (1 -1) \fcir f:0.0 r:0.075
\move (0 0) \fcir f:0.0 r:0.075 \move (0 1) \fcir f:0.0 r:0.075 \move (1 0) \fcir f:0.0 r:0.075 \move (2 0) \fcir f:0.0 r:0.075 
\move (1 1) \fcir f:0.0 r:0.075 \move (2 1) \fcir f:0.0 r:0.075
\move (2 2) \fcir f:0.0 r:0.075 \move (1 2) \fcir f:0.0 r:0.075
\move (-2 0) \fcir f:0.6 r:0.075 
\move (-1 1) \fcir f:0.6 r:0.075 \move (0 2) \fcir f:0.6 r:0.075
\move (1 -2) \fcir f:0.6 r:0.075 \move (2 -1) \fcir f:0.6 r:0.075
\htext (-8.1 -1.4) {\small{$\bm{p}$}}
\htext (-6.6 -1.4) {\small{$\bm{p}_{10}$}}
\htext (-8.1 0.65) {\small{$\bm{p}_{01}$}}
\htext (-6.6 0.65) {\small{$\bm{p}_{11}$}}
\htext (-5.1 -1.4) {\small{$\bm{p}_{20}$}}
\htext (-9.6 0.65) {\small{$\bm{p}_{-11}$}}
\htext (-9.5 -3) {\scriptsize{\it{The stencil where the equation is defined}}}
\htext (-1.5 -3.15) {\scriptsize{\it{The initial value problem}}}
}
\caption{{\small{The stencil of six points and the initial value problem for equation \eqref{6-point-NC}}}} \label{6point-ivp}
\end{figure}

The commutative 6-point equation \eqref{sixpointEq} admits travelling wave solutions. Indeed, by making the ansatz $p(n,m)=F(an+bm)$, and substituting the latter into \eqref{sixpointEq}, it follows that $F(an+bm)$ is a solution if $b=2a$. That is, $p(n,m)=F(n+2m)$ is a solution of \eqref{sixpointEq}. A graph of such solutions in the case when $F(s)=1+2\sech^2\left[0.1 (s-10)\right]$ and $F(s)=1+0.8\tanh^2(s)$ is given in Figure \ref{Kink}.

\begin{figure}[ht]
\begin{center}
  \includegraphics{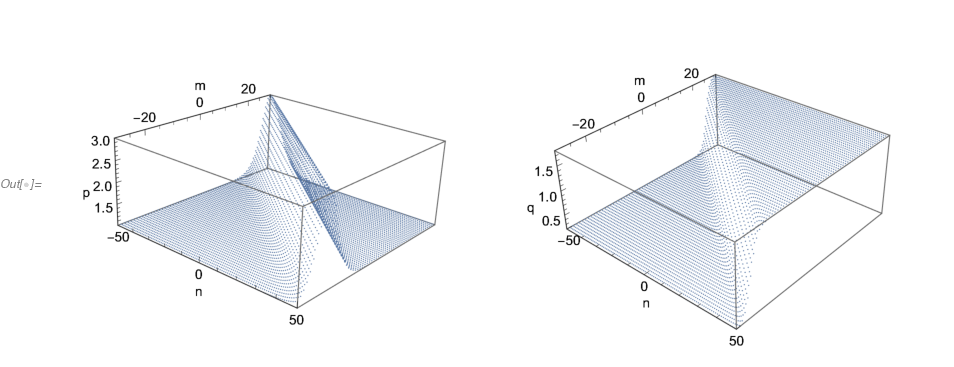} 
\end{center}
 \caption{Solutions of equation \eqref{sixpointEq}, given by  $p(n,m)=1+2\sech^2\left[0.1 (n+2m-10)\right]$ and $p(n,m)=1+0.8\tanh^2(n+2m)$.}
 \label{Kink}
\end{figure}

Now, let ${\rm A},{\rm B}\in \operatorname{Mat}_N(\mathbb C)$ satisfy ${\rm A}^2={\rm B}^2=I_N$, ${\rm A}{\rm B}\neq {\rm B}{\rm A}$. The following function
$$
\bm p(n,m)=(-1)^{m+\left\lfloor n/2\right\rfloor}
\begin{cases}
A, & n \ \mathrm{even},\\[1mm]
B, & n \ \mathrm{odd},
\end{cases}
$$
where $\left\lfloor x\right\rfloor$ denotes the largest integer $\leq x$, is a noncommutative solution of equation \eqref{6-point-NC-IVP}.

\subsection{Discretisation of the deformation of the derivative NLS}
For the discretisation of system \eqref{DNLS-system-NC}, we employ matrix \eqref{Darboux-DDNLS-NC}.

Specifically, we have the following.

\begin{proposition}\label{DDNS-dis-nc-prop}
    The following system of partial difference  equations
    \begin{subequations}\label{ddns-dis-nc}
        \begin{align}
        \bm{f}_{01}\bm{u}_{00}&=\bm{u}_{10}\bm{f}_{00},\label{ddns-dis-nc-a}\\
            \bm{f}_{01}(\bm{u}_{00}\bm{v}_{00}+\bm{g}_{01}\bm{u}_{00}+\bm{p}_{01}\bm{q}_{01}\bm{u}_{00})&=\bm{u}_{10}(\bm{f}_{00}\bm{g}_{00}+\bm{v}_{10}\bm{f}_{00}+\bm{p}_{10}\bm{q}_{10}\bm{f}_{00}),\\
            \bm{f}_{01}(\bm{g}_{01}\bm{u}_{00}\bm{v}_{00}+\bm{p}_{01}\bm{u}_{00}\bm{p}_{00})+\bm{q}_{11}\bm{f}_{01}\bm{q}_{01}\bm{u}_{00}&=\bm{u}_{10}(\bm{v}_{10}\bm{f}_{00}\bm{g}_{00}+\bm{p}_{10}\bm{f}_{00}\bm{p}_{00})+\bm{q}_{11}\bm{u}_{10}\bm{q}_{10}\bm{f}_{00},\\
            \bm{f}_{01}(\bm{q}_{01}\bm{u}_{00}+\bm{g}_{01}\bm{u}_{00}\bm{p}_{00}+\bm{p}_{01}\bm{u}_{00}\bm{v}_{00})&=\bm{u}_{10}(\bm{q}_{10}\bm{f}_{00}+\bm{v}_{10}\bm{f}_{00}\bm{p}_{00}+\bm{p}_{10}\bm{f}_{00}\bm{g}_{00}),\\
            \bm{f}_{01}(\bm{g}_{01}\bm{q}_{01}\bm{u}_{00}+\bm{p}_{01}\bm{u}_{00})+\bm{q}_{11}\bm{f}_{01}\bm{u}_{00}\bm{v}_{00}&=\bm{u}_{10}(\bm{v}_{10}\bm{q}_{10}\bm{f}_{00}+\bm{p}_{10}\bm{f}_{00})+\bm{q}_{11}\bm{u}_{10}\bm{f}_{00}\bm{g}_{00},
        \end{align}
    \end{subequations}
    is integrable with Lax pair
    \begin{equation}\label{Lax-NC-Dis-DDNLS}
    \rm{\Psi}_{10}={\rm M}(\bm{p}_{00},\bm{q}_{10},\bm{f}_{00},\bm{g}_{00})\rm{\Psi}_{00},\quad \rm{\Psi}_{01}={\rm M}(\bm{p}_{00},\bm{q}_{01},\bm{u}_{00},\bm{v}_{00})\rm{\Psi}_{00},
\end{equation}
    where ${\rm M}(\bm{p}_{00},\bm{q}_{10},\bm{f}_{00},\bm{g}_{00})=\begin{pmatrix}
        \lambda^2\bm{f}_{00}+\bm{f}_{00}\bm{g}_{00} & \lambda \bm{f}_{00}\bm{p}_{00}+\lambda^{-1}\bm{q}_{10}\bm{f}_{00}\\
        \lambda \bm{q}_{10}\bm{f}_{00}+\lambda^{-1}\bm{f}_{00}\bm{p}_{00} & \lambda^{-2}\bm{f}_{00}+\bm{f}_{00}\bm{g}_{00}
    \end{pmatrix}$.
\end{proposition}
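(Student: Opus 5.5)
The plan is to verify directly that system \eqref{ddns-dis-nc} is equivalent to the compatibility condition of the linear system \eqref{Lax-NC-Dis-DDNLS}, namely
$$
{\rm M}(\bm{p}_{01},\bm{q}_{11},\bm{f}_{01},\bm{g}_{01})\,{\rm M}(\bm{p}_{00},\bm{q}_{01},\bm{u}_{00},\bm{v}_{00})={\rm M}(\bm{p}_{10},\bm{q}_{11},\bm{u}_{10},\bm{v}_{10})\,{\rm M}(\bm{p}_{00},\bm{q}_{10},\bm{f}_{00},\bm{g}_{00}),
$$
for all $\lambda$. Integrability in the sense of the Preliminaries then follows by construction, as in Proposition \ref{sl2-nc-prop}. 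Both sides are Laurent polynomials in $\lambda$ of degree between $-4$ and $4$, so I would expand each $2\times 2$ entry and equate coefficients of equal powers of $\lambda$, keeping the order of all noncommutative factors.

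First I would use the $\mathbb{D}_2$ symmetry to cut the work in half. Each factor satisfies ${\rm M}(\lambda)=\sigma_3{\rm M}(-\lambda)\sigma_3$ and ${\rm M}(\lambda)=\sigma_1{\rm M}(\lambda^{-1})\sigma_1$, since the Darboux matrix \eqref{Darboux-DDNLS-NC} was constructed to have the form \eqref{M-gen-form-D2}. Both properties are preserved under products. Consequently the diagonal entries of each product contain only even powers of $\lambda$ and the off-diagonal entries only odd powers. Moreover, the $(2,2)$ and $(2,1)$ entries are obtained from the $(1,1)$ and $(1,2)$ entries by $\lambda\mapsto\lambda^{-1}$. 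So it suffices to compare the $(1,1)$ entry at $\lambda^4,\lambda^2,\lambda^0,\lambda^{-2},\lambda^{-4}$ and the $(1,2)$ entry at $\lambda^{3},\lambda,\lambda^{-1},\lambda^{-3}$.

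Next I would compute these coefficients one at a time.
\begin{itemize}
\item The $\lambda^4$ coefficient of the $(1,1)$ entry gives $\bm f_{01}\bm u_{00}=\bm u_{10}\bm f_{00}$, which is \eqref{ddns-dis-nc-a}.
\item The $\lambda^{-4}$ coefficient of the $(1,1)$ entry involves only lower-right$\times$lower-left products and vanishes identically.
\item The $\lambda^2$ coefficient of the $(1,1)$ entry gives the second equation of \eqref{ddns-dis-nc}.
\item The $\lambda^0$ coefficient of the $(1,1)$ entry gives the third equation. The $\lambda^{-2}$ coefficient should reproduce the second equation, or be a consequence of the first two.
\item In the $(1,2)$ entry, the $\lambda^{3}$ and $\lambda^{-3}$ terms should reduce to multiples of \eqref{ddns-dis-nc-a}, for example $\bm f_{01}\bm u_{00}\bm p_{00}$ against $\bm u_{10}\bm f_{00}\bm p_{00}$.
\item The $\lambda^{1}$ and $\lambda^{-1}$ terms of the $(1,2)$ entry give the fourth and fifth equations.
\end{itemize}
Throughout, the products are read with the $\bm f$, $\bm u$ factors placed exactly as in the Darboux matrix, for instance $\bm q_{11}\bm f_{01}$ and $\bm f_{00}\bm p_{00}$. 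The displayed forms in \eqref{ddns-dis-nc} come from factoring out $\bm f_{01}$ on the left (left-hand side) and $\bm u_{10}$ on the left (right-hand side) where possible.

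The main obstacle is the bookkeeping of noncommutative orderings. I need to make sure that the redundant coefficients, namely $\lambda^{-2}$ for the $(1,1)$ entry and the $\lambda^{\pm3}$ off-diagonal terms, really follow from the listed equations without any hidden commutativity assumption, in contrast with the first-integral statements, where commutativity was needed. If some redundant coefficient does not reduce directly, my first move is to rewrite it using \eqref{ddns-dis-nc-a}, substituting $\bm u_{10}\bm f_{00}$ for $\bm f_{01}\bm u_{00}$. The second fallback is to invoke the $\sigma_1$-symmetry to identify it with an already listed equation.
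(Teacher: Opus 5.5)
Your proposal is correct and takes the same route as the paper, which simply asserts, by direct computation, that the system is equivalent to the discrete zero-curvature condition. Your use of the $\sigma_3$ and $\sigma_1$ symmetries to reduce the check to the even/odd coefficients of the $(1,1)$ and $(1,2)$ entries is valid and only organises that computation; moreover, the redundant coefficients need no commutativity and follow from the first equation alone. The $\lambda^{-2}$ coefficient of the $(1,1)$ entry is $\bm q_{11}(\bm f_{01}\bm u_{00}-\bm u_{10}\bm f_{00})\bm p_{00}$, and the $\lambda^{3}$ and $\lambda^{-3}$ coefficients of the $(1,2)$ entry are $(\bm f_{01}\bm u_{00}-\bm u_{10}\bm f_{00})\bm p_{00}$ and $\bm q_{11}(\bm f_{01}\bm u_{00}-\bm u_{10}\bm f_{00})$, respectively.
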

\begin{proof}
    It can be shown by straightforward computation that system \eqref{ddns-dis-nc} is equivalent to the compatibility condition of system \eqref{Lax-NC-Dis-DDNLS},
    $$
    {\rm M}(\bm{p}_{01},\bm{q}_{11},\bm{f}_{01},\bm{g}_{01}){\rm M}(\bm{p}_{00},\bm{q}_{01},\bm{u}_{00},\bm{v}_{00})={\rm M}(\bm{p}_{10},\bm{q}_{11},\bm{u}_{10},\bm{v}_{10}) {\rm M}(\bm{p}_{00},\bm{q}_{10},\bm{f}_{00},\bm{g}_{00}),
    $$
     for any $\lambda\in\mathbb{C}$.
\end{proof}

System \eqref{ddns-dis-nc} is a noncommutative version of system (87)-(88) in \cite{SPS}. Indeed, if all $\bm{f}_{ij},\bm{g}_{ij},\bm{u}_{ij},\bm{v}_{ij},\bm{p}_{ij},\bm{q}_{ij}$, $i,j\in\{0,1\}$, then system \eqref{ddns-dis-nc} is equivalent to system (87)-(88) in \cite{SPS}. Now, since relation \eqref{ddns-dis-nc-a} is an equation solely for $\bm{f}$ and $\bm{u}$, one can eliminate these fields from the rest of equations of system \eqref{ddns-dis-nc}. 

We have the following.

\begin{corollary}
    Let $\bm{f}_{ij}, \bm{u}_{ij}\in Z(\mathcal{R})$, $i,j\in\{0,1\}$. Then, system \eqref{ddns-dis-nc} is equivalent with the system
    \begin{subequations}\label{ddns-dis-nc-simp}
        \begin{align}
        \bm{f}_{01}\bm{u}_{00}&=\bm{u}_{10}\bm{f}_{00},\\
\bm{v}_{00}+\bm{g}_{01}+\bm{p}_{01}\bm{q}_{01}&=\bm{g}_{00}+\bm{v}_{10}+\bm{p}_{10}\bm{q}_{10},\\
\bm{g}_{01}\bm{v}_{00}+\bm{p}_{01}\bm{p}_{00}+\bm{q}_{11}\bm{q}_{01}&=\bm{v}_{10}\bm{g}_{00}+\bm{p}_{10}\bm{p}_{00}+\bm{q}_{11}\bm{q}_{10},\\
\bm{q}_{01}+\bm{g}_{01}\bm{p}_{00}+\bm{p}_{01}\bm{v}_{00}&=\bm{q}_{10}+\bm{v}_{10}\bm{p}_{00}+\bm{p}_{10}\bm{g}_{00},\\
\bm{g}_{01}\bm{q}_{01}+\bm{p}_{01}+\bm{q}_{11}\bm{v}_{00}&=\bm{v}_{10}\bm{q}_{10}+\bm{p}_{10}+\bm{q}_{11}\bm{g}_{00}.
        \end{align}
    \end{subequations}
\end{corollary}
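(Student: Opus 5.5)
Since $\bm{f}_{ij},\bm{u}_{ij}\in Z(\mathcal{R})$ commute with everything, my plan is to pull them out of every product in system \eqref{ddns-dis-nc} and then use the first equation \eqref{ddns-dis-nc-a} to cancel them. The first equation is kept unchanged, since it is the same in both systems. I will also need $\bm{f}_{ij},\bm{u}_{ij}\neq 0$; this is implicit, because otherwise the Darboux matrices in \eqref{Lax-NC-Dis-DDNLS} degenerate. In the division ring $\mathcal{R}$ the common factor $\bm{f}_{01}\bm{u}_{00}=\bm{u}_{10}\bm{f}_{00}$ is then invertible, and it is central, being a product of central elements.

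The key step is to rewrite each of the remaining four equations of \eqref{ddns-dis-nc} in the form $\bm{f}_{01}\bm{u}_{00}X=\bm{u}_{10}\bm{f}_{00}Y$. For the second equation, centrality gives
\[
\bm{f}_{01}\bm{u}_{00}\bigl(\bm{v}_{00}+\bm{g}_{01}+\bm{p}_{01}\bm{q}_{01}\bigr)=\bm{u}_{10}\bm{f}_{00}\bigl(\bm{g}_{00}+\bm{v}_{10}+\bm{p}_{10}\bm{q}_{10}\bigr).
\]
For the third equation, each term again contains exactly one $\bm{f}$ factor and one $\bm{u}$ factor. Moving them to the left, it becomes $\bm{f}_{01}\bm{u}_{00}(\bm{g}_{01}\bm{v}_{00}+\bm{p}_{01}\bm{p}_{00}+\bm{q}_{11}\bm{q}_{01})=\bm{u}_{10}\bm{f}_{00}(\bm{v}_{10}\bm{g}_{00}+\bm{p}_{10}\bm{p}_{00}+\bm{q}_{11}\bm{q}_{10})$. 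The fourth and fifth equations are handled identically.

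Next, I will substitute $\bm{u}_{10}\bm{f}_{00}=\bm{f}_{01}\bm{u}_{00}$ from the first equation into the right-hand sides. Multiplying on the left by $(\bm{f}_{01}\bm{u}_{00})^{-1}$ then yields exactly the last four equations of \eqref{ddns-dis-nc-simp}. The argument is reversible: multiplying \eqref{ddns-dis-nc-simp} back by the central factor and using the first equation recovers \eqref{ddns-dis-nc}. This gives the claimed equivalence.

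The only delicate point is bookkeeping: I must check that in every term of \eqref{ddns-dis-nc} the $\bm{f},\bm{u}$ factors pair as $\bm{f}_{01}\bm{u}_{00}$ on the left-hand side and as $\bm{u}_{10}\bm{f}_{00}$ on the right-hand side, with no stray factor. For example, the term $\bm{q}_{11}\bm{f}_{01}\bm{q}_{01}\bm{u}_{00}$ contains $\bm{f}_{01}\bm{u}_{00}$ once and only once, and each of the other terms can be checked the same way. I also need to verify that the ordering of the noncentral factors ($\bm{p},\bm{q},\bm{g},\bm{v}$) is preserved exactly when the central factors are extracted. That verification is a term-by-term inspection of the displayed system.
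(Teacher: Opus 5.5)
Your proposal is correct and follows the paper's argument. The paper only remarks that, since \eqref{ddns-dis-nc-a} involves only $\bm{f}$ and $\bm{u}$, these fields can be eliminated from the remaining equations; your term-by-term extraction of the central factor $\bm{f}_{01}\bm{u}_{00}=\bm{u}_{10}\bm{f}_{00}$ and its cancellation make that elimination explicit, including the nonvanishing of $\bm{f}_{ij},\bm{u}_{ij}$ (needed for invertibility), which the paper leaves implicit.
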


System \eqref{d2-discrete} is the commutative limit of system \eqref{ddns-dis-nc-simp} and it admits a first integral $I=(\bm{g}_{00}-\bm{p}_{00}\bm{q}_{10})(\bm{g}_{00}^2+1-\bm{p}_{00}^2-\bm{q}_{10}^2)^{-1}$. For $I=\frac{1}{2}$ in \cite{SPS} it was shown that system \eqref{d2-discrete} can be restricted to the following seven-point scalar equation
\begin{eqnarray}\label{7pointeq}
&&(w_{00}-w_{10})(w_{0,-1}-w_{-10})\left(1+\frac{1}{w_{00}-w_{1,-1}}\right)\nonumber\\
&&+(w_{00}-w_{-10})(w_{10}-w_{01})\left(1-\frac{1}{w_{00}-w_{-11}}\right)=0.
\end{eqnarray}

In the noncommutative case, system \eqref{ddns-dis-nc-simp} admits this first integral only if $\left[\bm{p}_{00},\bm{q}_{10}\right]=\left[\bm{p}_{01},\bm{q}_{11}\right]=0$,  which will not lead to a genuine noncommutative version of \eqref{ddns-dis-nc-simp}.

\section{Conclusions}\label{conclusions}
In this paper, we studied the noncommutative integrable discretisations of NLS-type systems; in particular, the AKNS nonlinear Scr\"odinger type system \eqref{NLS-eq}, the derivative NLS Kaup--Newell system \eqref{DNLS-system}, and a deformation of the NLS Mikhailov--Shabat--Yamilov system \eqref{DDNLS-sys}. 

We presented the continuum limits of NLS-type integrable lattice systems to their associated integrable PDEs. Specifically, we showed the limit from the Adler--Yamilov system \eqref{Adler-Yamilov} to the AKNS system \eqref{NLS-eq} (Proposition \ref{AY-lim}), and the continuum limit of the discrete derivative NLS system \eqref{vertex-bond-DNLS} to the Kaup--Newell system \eqref{DNLS-system} (Proposition \ref{DNLS-lim}).

Noncommutative versions of the AKNS \eqref{NLS-eq} and the Kaup--Newell \eqref{DNLS-system} already exist in the literature \cite{OS}. Here, we constructed a noncommutative analogue of the Mikhailov--Shabat--Yamilov system, namely system \eqref{system-def-DNLS} (Theorem \ref{NC-DDNLS-th}). For the noncommutative versions, we worked over an arbitrary noncommutative division ring in order to emphasise that no matrix-specific properties were assumed in the construction. However, in order to discuss their solutions we restrict ourselves to the ring of matrices $n\times n$. In fact, we demonstrated with examples that the noncommutative systems \eqref{NLS-eq-NC}, \eqref{DNLS-system-NC} and \eqref{system-def-DNLS} admit nontrivial solutions. However, in order to construct more interesting soliton solutions, one must employ the Darboux--B\"acklund transformations of such systems, which we constructed in Section \ref{NCDBT}. While the $x$-parts of the B\"acklund transformations for the NLS systems \eqref{NLS-eq-NC} and \eqref{DNLS-system-NC} were constructed in \cite{KRX} and \cite{Sokor-Nikitina}, respectively, here we presented the $t$-parts and moreover we constructed Darboux--B\"acklund transformations for system \eqref{system-def-DNLS} (Theorem \ref{DT-BT-DDNLS-th}).

Finally, we employed noncommutative Darboux--B\"acklund transformations and constructed noncommutative integrable discretisations of NLS-type. In particular, we constructed a noncommutative discrete integrable derivative NLS system \eqref{sl2-res-eq-nc} (Proposition \ref{sl2-nc-prop}) and discussed its initial value problem on the staircase. Moreover, we constructed a noncommutative 6 point equation \eqref{6-point-NC}, and discussed its associated initial value problem on the double staircase (Figure \ref{6point-ivp}). For the commutative analogue of equation \eqref{6-point-NC}, we constructed wave solutions. Furthermore, we constructed a  noncommutative discrete integrable deformation of the derivative NLS system \eqref{ddns-dis-nc} (Proposition \ref{DDNS-dis-nc-prop}).

Our results can be extended in the following ways:
\begin{itemize}
    \item Construct genuine soliton solutions for all the integrable NLS discretisations. Equation \eqref{6-point-NC} admits multi-wave solutions of the   form
$$
p(n,m) =p_0 \sum_{I\subseteq\{1,\dots,N\}} C_I\prod_{i\in I} A_i\,\rho_i^{\,n+2m}.
$$
However, these are not genuine soliton solutions because all modes depend only on 
$s=n+2m$, so the problem reduces to one dimension. The coefficients remain 
arbitrary, implying no interaction or phase shift between waves. In order to construct genuine soliton solutions, one must construct  Darboux transformations for these discretisations as in \cite{Fisenko-Sokor}.

\item Miura transformations are important tools in the theory of integrable systems. They are links between integrable equations, and are often used to deduce properties (such as conservation laws) of one equation from another. The $x$-parts of all B\"acklund transformations in this paper are certain integrable differential-difference equations. One could study their Miura transformations using recent methods from the literature \cite{Igonin}.

\item In \cite{Sokor-Sasha, Sokor-2020-2} we used Darboux transformations to construct Yang--Baxter and tetrahedron maps, and in \cite{Sokor-Sasha-2016, Sokor-Kutuzova, Sokor-Nikitina} we used noncommutative Darboux transformations to construct noncommutative 2- and 3-simplex maps. These maps are interesting themselves as discrete integrable systems and one could study their soliton solutions and also their relation to other integrable systems using recent methods \cite{Kassotakis-2023, Kassotakis-Tetrahedron, Kouloukas}.
\end{itemize}

\subsection*{Acknowledgments}
This work was supported by the Russian Science Foundation (Grant No. 26-11-00379). https://rscf.ru/project/26-11-00379/.


{\small
\begin{thebibliography}{100}
\bibitem{AKNS}
{M. J. Ablowitz, D. J. Kaup, A. C. Newell, and H. Segur.} {Nonlinear-Evolution Equations of Physical Significance. } {Phys. Rev. Lett. 31 (1973) 125}.

\bibitem{Adler-P}
{V.E. Adler.} {Painlev\'e type reductions for the non-Abelian Volterra lattices J. Phys. A: Math. Theor. 54 (2021) 035204}.

\bibitem{Adler-Sokolov}
{V.E. Adler and V.V. Sokolov.} {Non-Abelian Evolution Systems with Conservation Laws.} {Math. Phys. Anal. Geom. 24 (2021)} {7}.

\bibitem{Bobrova}
{I. Bobrova, V. Retakh, V. Rubtsov and G. Sharygin.} {Non-abelian discrete Toda chains and related lattices Physica D: Nonlinear Phenomena 464 (2024) 134200}.

\bibitem{BIKRP}
V.M. Buchstaber, S. Igonin,  S. Konstantinou-Rizos and M.M. Preobrazhenskaia. Yang--Baxter maps, Darboux transformations, and linear approximations of refactorisation problems.  J. Phys. A: Math. Theor. \textbf{53} (2020) 504002.

\bibitem{Sasha-Rhys}
{R.T. Bury and A.V. Mikhailov} 
{Automorphic Lie algebras and corresponding integrable systems} 
{Differ. Geom. Appl. \textbf{74}} {101710} (2021).


\bibitem{Fisenko-Sokor}
{X. Fisenko, S. Konstantinou-Rizos, and P. Xenitidis,} {A discrete Darboux–Lax scheme for integrable difference equations,} {Chaos, Solitons and Fractals 158 (2022)} {112059}.

\bibitem{Dimakis-Hoissen}
{A. Dimakis and F. M\"uller-Hoissen.} {Solutions of matrix NLS systems and their discretizations: a unified treatment,} {Inverse Problems 26 (2010)} {095007}.


\bibitem{Fordy-Kulish}
{A.P. Fordy and P.P. Kulish.} {Nonlinear Schr\"odinger Equations and Simple Lie Algebras,} {Commun. Math. Phys. 89 (1983)} {427--443}.

\bibitem{GGI}
V.S. Gerdjikov, G.G. Grahovski, R.I. Ivanov. On integrable wave interactions and Lax pairs on symmetric spaces, {\em Wave Motion 71} (2017) 53--70.

\bibitem{Hiet-Frank-Joshi}
J. Hietarinta, N. Joshi, F. Nijhoff (2016) Discrete Systems and Integrability, Cambridge texts in applied mathematics, Cambridge University Press.

\bibitem{Hiet-Viallet}
{J. Hietarinta and C. Viallet.} {Integrable lattice equations with vertex and bond variables} {J. Phys. A Math. Theor. 44 385201} (2011).

\bibitem{Igonin}
{S. Igonin. Matrix Lax pairs under the gauge equivalence relation induced by the gauge group action and Miura-type transformations for lattice equations.} {Journal of Geometry and Physics \textbf{216}} {105585} (2025)

\bibitem{Kassotakis-2023}
{P. Kassotakis,} {Non-Abelian hierarchies of compatible maps,
associated integrable difference systems and Yang--Baxter maps,} {Nonlinearity} {36} (2023) {2514}.

\bibitem{Kassotakis-Tetrahedron}
{P. Kassotakis, M. Nieszporski, V. Papageorgiou, and A. Tongas,} {Tetrahedron maps and symmetries of three dimensional integrable discrete equations,} {J. Math. Phys. 60 (2019)} {123503}.

\bibitem{Kaup-Newell}
{D.J. Kaup and A.C. Newell.} {An exact solution for a derivative nonlinear Schr\"odinger equation,} {Journal of Mathematical Physics 19 (1978)} {798}.

\bibitem{Sokor-2020-2}
{S. Konstantinou-Rizos,}
{Nonlinear Schr\"odinger type tetrahedron maps.} 
{\em Nuclear Physics B } {\textbf{960}} (2020) {115207}.

\bibitem{Sokor-Kutuzova}
{S. Konstantinou-Rizos and A.A. Kutuzova.}
{Noncommutative Boussinesq and NLS type 2-  and  3-simplex maps.} {Physica D: Nonlinear Phenomena 481 (2025)} {134860}

\bibitem{Sokor-Sasha-2016}
{S. Konstantinou-Rizos and A.V. Mikhailov.}
{Anticommutative extension of the Adler map.} {J. Phys. A: Math. Theor.} {\textbf{49}} (2016) {30LT03}

\bibitem{Sokor-Sasha}
{S. Konstantinou-Rizos and A.V. Mikhailov.}
{Darboux transformations, finite reduction groups and related Yang--Baxter maps.} {J. Phys. A: Math. Theor.} {\textbf{46}} (2013) {425201}

\bibitem{SPS}
{S. Konstantinou-Rizos, A.V. Mikhailov and P. Xenitidis.} 
{Reduction groups and related integrable difference systems of nonlinear Schr\"odinger type} 
{\em J. Math. Phys.} {\textbf{56}} {082701} (2015).


\bibitem{Sokor-Nikitina}
{S. Konstantinou-Rizos, A.A. Nikitina.} 
{Yang--Baxter maps of KdV, NLS and DNLS type on division rings} 
{\em Phys. D: Nonlinear Phenom.} {\textbf{465}} {134213} (2024).

\bibitem{Sokor-Pap}
{S. Konstantinou-Rizos and G. Papamikos.} 
{Entwining Yang–Baxter maps related to NLS type equations} 
{\em J. Phys. A: Math. Theor.} {\textbf{52}} {485201} (2019).


\bibitem{KRX}
{S. Konstantinou-Rizos and P. Xenitidis.} 
{Integrable discretisations of the noncommutative NLS equation} 
{J. Phys. A: Math. Theor. \textbf{59} 045203} (2026).

\bibitem{Kouloukas}
{T.E. Kouloukas. Discrete integrable systems associated with relativistic collisions. Physica D: Nonlinear Phenomena \textbf{456} 133937 (2023).}

\bibitem{Kupershmidt}
B.A. Kupershmidt. KP or MKP: Noncommutative Mathematics of Lagrangian, Hamiltonian, and Integrable Systems, {American Mathematical Society,} {ISBN: 0821814001} (2000).


\bibitem{Manakov}
S.V. Manakov. On the theory of two-dimensional stationary self focussing of electromagnetic waves {\em Sov. Phys. JETP} {\textbf{38}(2)} 248--253 (1974).


\bibitem{MPW}
{A.V. Mikhailov, G. Papamikos and J.P. Wang. Darboux Transformation for the Vector sine-Gordon Equation and Integrable Equations on a Sphere. Letters in Mathematical Physics. \textbf{106}, pages 973--996 (2016)}


\bibitem{MSY}
{A.V. Mikhailov, A.B. Shabat and R.I. Yamilov.} 
{Extension of the module of invertible transformations. Classification of integrable systems,} {Commun.Math. Phys.} {\textbf{115}} {1--19 } (1988).

\bibitem{Novikov-Wang}
{V. Novikov, J.P. Wang.} {Integrability of Nonabelian Differential–Difference Equations: The Symmetry Approach Commun. Math. Phys. 406 (2025) 11}.


\bibitem{OS}
{P.J. Olver and V. Sokolov.} 
{Non-abelian integrable systems of the derivative nonlinear Schr\"odinger type} 
{\em Inverse Problems} {\textbf{6}} {14 L5} (1998).

\bibitem{Peroni-Wang}
{E. Peroni and J. P. Wang.} 
{Darboux transformations and related non-Abelian integrable differential-difference systems of the derivative nonlinear Schr\"odinger type} 
{Physica D} {\textbf{489}} {135119} (2026).

\bibitem{Sokolov}
V. Sokolov. Algebraic Structures in Integrability. {\emph{World Scientific}} 2020, 348 pages. ISBN: 978-981-12-1965-8. D.O.I.:  10.1142/11809.

\bibitem{Xenitidis-NC}
{P. Xenitidis.} {Noncommutative discrete equations, symmetries and reductions, Physica D: Nonlinear Phenomena 483 (2025) 135004}.
\end{thebibliography}
}
\end{document}